\newtheorem{thm}{Theorem}
\newtheorem{dfn}{Definition}
\newtheorem{lem}{Lemma}
\newtheorem{ex}{Example}
\newtheorem{remark}{Remark}
\newtheorem{coro}{Corollary}
\newenvironment{sketch}{\noindent{\em Proof sketch.}\rm }{\hfill $\Box$ }
\newtheorem{claim}{Claim}
\newcounter{newct}
\newcommand{\bv}{\begin{array}}
\newcommand{\calP}{\mathcal P}
\newcommand{\expect}{{\mathbb E}}
\newcommand{\myparagraph}[1]{\vspace{1mm}\noindent {\bf\boldmath #1}}
\newcommand{\Omit}[1]{}
\newcommand{\calC}{\mathcal C}
\newcommand{\calD}{\mathcal D}
\newcommand{\atn}{\mathcal{A}}
\newcommand{\qual}{\ell}
\newcommand{\Quallist}{{\mathbf L}}
\newcommand{\Qual}{L}
\newcommand{\qualset}{\mathcal{L}}
\newcommand{\topqual}{\bar{\ell}}
\newcommand{\qualp}{\mathbf{P}}
\newcommand{\wn}{W}
\newcommand{\wnset}{\mathcal{W}}
\newcommand{\wnmax}{\wnset_{\textrm{max}}}
\newcommand{\vt}{v}
\newcommand{\ut}{u}
\newcommand{\Sig}{S}
\newcommand{\Sigv}{\mathbf{S}}
\newcommand{\sigp}{\mathbf{Q}}
\newcommand{\sigset}{\mathcal{S}}
\newcommand{\Rp}{\Tilde{\Sigv}}
\newcommand{\inst}{\mathcal{I}}
\newcommand{\stg}{\sigma}
\newcommand{\stgp}{\Sigma}
\newcommand{\av}{\textrm{AV}}
\newcommand{\avcost}{\textrm{AV/cost}}
\newcommand{\pav}{\textrm{PAV}}
\newcommand{\gc}{\textrm{Greedy Cover}}
\newcommand{\phr}{\textrm{Phragm\'{e}n}}
\newcommand{\mes}{\textrm{MES}}
\newcommand{\informratio}{\mathcal{P}}
\newcommand{\ranking}{R}
\newcommand{\rankset}{\mathcal{R}}
\newcommand{\atone}{a}
\newcommand{\attwo}{b}
\newcommand{\xrv}{x}
\newcommand{\calA}{\mathcal{A}}
\newcommand{\calB}{\mathcal{B}}
\newcommand{\calE}{\mathcal{E}}
\newcommand{\calAp}{\mathcal{A}'}
\newcommand{\calCp}{\mathcal{C}'}
\newcommand{\KL}[2]{D(#1 \| #2)}
\newcommand{\piv}{\textit{Piv}}
\newcommand{\wntop}{\wn_{\textrm{top}}}
\title{Truth-Revealing Participatory Budgeting}
\author{
Qishen Han$^1$
\and
Artem Ivaniuk$^1$\and
Edith Elkind$^2$\and
Lirong Xia$^1$\\
\affiliations
$^1$Rutgers University\\
$^2$Northwestern University\\
}
\begin{document}

\maketitle

\begin{abstract}
Participatory Budgeting (PB) is commonly studied from an axiomatic perspective, where the aim is to design procedurally fair and economically efficient rules for voters with full information regarding their preferences. In contrast, we 
take an epistemic perspective and consider a framework where PB projects have different levels of underlying quality, indicating how well the project will take effect, which cannot be directly observed before implementation. Agents with noisy information cast votes to aggregate their information, and aim to elect a high-quality set of projects. We evaluate the performance of common PB rules by measuring the expected utility of their outcomes, 
compared to the optimal set of projects. We find that the quality of approximation improves as the range of project costs shrinks. When projects have unit cost, these common rules can identify the ``best'' set with probability converging to 1. We also study whether strategic agents have incentives to honestly convey their information in the vote. We find that it happens only under very restrictive conditions. We also run numerical experiments to examine the performance of different rules empirically and support our theoretical findings. 
\end{abstract}

\section{Introduction}
Participatory budgeting (PB) is a democratic process that allows citizens to directly decide how to allocate a portion of public funds. It solves community needs and improves civic engagement by giving citizens decision power over budget allocation. This form of democratic innovation enjoys growing popularity around the world  
\citep{cabannes}, and has also attracted significant attention from the (computational) social choice community, with researchers formulating new fairness and proportionality desiderata as well as proposing novel aggregation rules~\citep{aziz2021proportionally,peters2021proportional,lackner2021fairness,aziz2020participatory,kraiczy2023adaptive,brill2023proportionality}. 

Despite its popularity, practical uses of PB frequently face implementation issues: it may happen that, after voting is concluded,  the selected projects are discovered to be infeasible due to cost overruns or safety concerns, or simply fail to work as intended~\citep{roth2022pb,cambridge2024pbflaw}. The resulting implementation failures may undermine people's trust in the system and discourage citizens from subsequent participation; eventually, this becomes a threat to democracy. 
Ideally, we would like the voting rules used for PB, in addition to being procedurally fair, to also credibly identify  the objectively ``best'' projects 
based on the votes.
\begin{ex} \label{ex:motive} 
    The participatory budgeting program in Cambridge, MA~\citep{cambridgePB2025} allocates one million dollars annually based on citizen input. All residents aged 12 and older can vote for their preferred projects, choosing from among approximately 20 options. Each project comes with a brief description, which provides a high-level outline of its goals, proposed actions, and anticipated benefits, but no detailed implementation plans. As a consequence, voters cannot know with certainty whether a project will deliver its intended outcomes, with no undesirable side effects. Yet, voters'
    impressions of whether each project is worthwhile are shaped by information they
    receive from various sources: PB campaigns, expert reviews, social media discussions, and local networks. In this setting, can voters successfully identify and elect the ``best projects''?
\end{ex}

Example~\ref{ex:motive} illustrates a different perspective from previous theoretical studies of PB. Indeed, prior works focus on making fair and efficient decisions, given that agents have fixed subjective preferences. We, on the other hand, model the objective perspective: agents do not have full information on which projects are worthwhile. The {\em quality} is the abstraction of this hindsight ground truth on whether a project can be implemented as planned. However, agents must commit to a decision before they know the qualities of the projects. 

This uncertainty suggests a different way of thinking about participatory budgeting. Rather than viewing PB purely as a mechanism for aggregating exogenous (subjective) preferences, we view it through an {\em epistemic} lens: votes reflect not just what citizens want, but what they believe about project quality. Voters' preferences are endogenous, as a result of noisy signals about an underlying ground truth. From this perspective, we also view PB as an information aggregation procedure: voting is a way to aggregate private information from the agents and choose objectively high-quality projects.

In fact, this perspective dates back to the celebrated Condorcet Jury Theorem~\citep{Condorcet1785:Essai} for single-winner voting. It has been subsequently developed into a subfield of social choice termed {\em epistemic social choice}~\citep{Condorcet1785:Essai,Austen96:Information,Pivato13:Voting,ding2021deliberation}, which analyzes whether and when collective decisions can be ``truth-revealing'', that is, whether voting can aggregate information so as to reveal the ``correct'' decision. While there is a large literature on single-winner voting~\citep{Nitzan17:Collective,Caragiannis13:When}, little is known about PB, where multiple projects are chosen subject to the budget constraint. Recent work has begun to explore epistemic aspects of PB~\citep{rey2025epistemic,goel2019knapsack}, yet the following question remains largely open.

\begin{center}
    \textbf{Can voting reveal the high-quality projects in PB?}
\end{center}

\subsection{Our Contribution}
In this paper, we make the first step towards addressing this question to the best of our knowledge. Our contributions are threefold. 

\myparagraph{Modeling (Section~\ref{sec:model}).} 
We propose a truth-revealing PB framework, where agents receive noisy signals on alternatives' qualities; the qualities themselves are not directly observable. An approval PB vote is applied to elect the winners. The goal is to choose a set of alternatives with maximum utilitarian total utility under the budget constraint. 

\myparagraph{Truth-revealing (Section~\ref{sec:informative}).} 
Like the Condorcet Jury Theorem, we first investigate whether informative voting, where agents honestly vote according to their signals, can choose the ``best'' alternative. To provide quantitative analysis, we define {\em performance} of a rule as the ratio between (1) the utility of the outcome under that rule when agents have noisy signals over the ground truth and (2) maximum utility, and analyze the asymptotic performance of several voting rule as the number of agents increases. A performance close to 1 implies a strong capability to reveal the optimal alternatives. Our results suggest that the performance of a voting rule is negatively correlated with the range of project costs. On one extreme, if alternatives have equal costs, many commonly studied PB rules achieve performance converging to $1$, indicating full information aggregation (Theorem~\ref{thm:unit_positive}). On the other extreme, when there is no constraint on the costs, no voting rule achieves an asymptotic performance that is better than $\frac 12$ (Theorem~\ref{thm:general_impossibility}). In the middle, we characterize bounds on the performance parameterized by the ratio $\alpha$ between the highest and the lowest cost, with an upper bound of $\frac{\alpha}{2\alpha - 1}$ (Corollary~\ref{coro:general_impossibility_a}) and a lower bound of $\frac{1}{\lceil \alpha \rceil}$ (Theorem~\ref{thm:general_positive}). 

\myparagraph{Incentive analysis (Section~\ref{sec:strategic}).} We also make a first attempt to analyze strategic behavior in PB through the lens of strategic information aggregation initiated by the seminar work of~\citet{Austen96:Information} for single-winner voting. 
We examine our framework from a game-theoretic perspective, and investigate whether informative voting constitutes a Bayes--Nash equilibrium. We find that even a necessary condition for informative voting to be an equilibrium is highly restrictive, as it defines a Lebesgue zero-measure set even in very simple cases (Theorem~\ref{thm:strategic_binary}). This game-theoretic glimpse suggests that informative voting is generically not an equilibrium, and therefore
suggests a rich and highly non-trivial direction for future work, as~\citet{Austen96:Information} did for single-winner voting. 

To complement our theoretical analysis, we conduct experiments on synthetic data. Most rules converge to performance 1 with unit costs and substantially exceed the worst-case upper bound with non-unit costs. Surprisingly, a rule called MES+AV, which was designed to satisfy normative properties, exhibits the strongest truth-revealing capability. Finally, we observe a clear negative correlation between performance and cost dispersion. These empirical findings are consistent with—and further support—our theoretical results.




\subsection{Related Work and Discussion}
\myparagraph{Participatory budgeting} Existing work on PB primarily focuses on the axiomatic approach. Key directions include proposing new efficiency and fairness desiderata~\citep{aziz2021proportionally,aziz2024fair,fain2016core}, designing new rules and mechanisms~\citep{peters2021proportional,brill2023proportionality}, and analyzing strategic aspects~\citep{freeman2021truthful,goel2019knapsack}. Our paper does not investigate fairness; rather,  it focuses on the utilitarian welfare, i.e., the selection of the ``best'' alternatives. It is an interesting direction to combine truth-seeking and fairness concerns in a scenario where agents have different preferences contingent on the quality.

\myparagraph{The epistemic perspective on voting} can be traced back to the celebrated Condorcet Jury Theorem~\citep{Condorcet1785:Essai}, which proves that, for two alternatives, majority voting is truth-revealing in large elections. 
Subsequent work has explored the power and limitations of truth-revealing voting for single-winner~\citep{Young88:Condorcet,Owen89:Proving, Conitzer05:Common,Pivato13:Voting,Caragiannis2016:When,Azari12:Random,Azari14:Statistical,Xia2016:Bayesian} and multi-winner elections~\citep{Procaccia12:Maximum,allouche2022truth}. For PB, \citet{rey2025epistemic} initiated the epistemic analysis by asking whether common PB rules can be viewed as maximum likelihood estimators of some probabilistic models. \citet{goel2019knapsack} show that a knapsack mechanism for PB is a maximum likelihood estimator when the projects are divisible, though their results do not extend to indivisible projects. These two contributions can be viewed as PB counterparts to the seminal work of \citet{Conitzer05:Common}. In contrast, the truth-revealing part of our paper follows the approach of \citet{Caragiannis2016:When} and addresses a different question: is a PB rule truth-revealing for a given model? 
Our positive results on the performance extend the Condorcet Jury Theorem intuition to participatory budgeting, showing that collective decisions can reveal project quality under appropriate conditions. The inapproximability result, however, indicates that the multi-project structure and budget constraints in PB introduce fundamental barriers to information aggregation that do not arise in binary or single-winner elections.


\myparagraph{Strategic behavior in epistemic social choice}  The classic paper of~\citet{Austen96:Information} modeled the Condorcet Jury theorem scenario as a Bayesian game and showed that truthful voting is often not a Bayes--Nash Equilibrium (BNE). A series of works~\citep{Feddersen97:Voting,Xia16:Quantitative,Schoenebeck2021:Wisdom,han2023wisdom,deng2024aggregation} further investigated the existence and truth-revealing capability of BNE in a range of single-winner settings. Our work on incentive analysis (Section~\ref{sec:strategic}) is a first step towards understanding agents' strategic behavior in epistemic PB; our model differs from the setting of previous work on strategic aspects of PB~\citep{freeman2021truthful,goel2019knapsack}, where agents' (subjective) preferences are assumed to be given exogenously. At a high level, our results resemble those of~\citet{Austen96:Information}, while our proofs are more challenging.

The strategic behavior in truth-revealing PB is also related to the famous Gibbard--Satterthwaite impossibility theorem~\citep{Gibbard73:Manipulation,Satterthwaite75:Strategy}; however, in our model strategic agents do not misreport their fully informed preferences; rather, they manipulate how they infer their preference from noisy information, and act accordingly.



\section{Preliminaries}
For any $n\in\mathbb N$, let $[n] = \{1, 2, \dots, n\}$ denote the set of agents. Let $\atn$ denote the set of alternatives with $|\atn| = m$. We use $i$ to denote a generic agent and $j$ to denote a generic alternative. 
Each alternative $j$ has a cost $c_j > 0$. The costs are additive: for every subset $\wn \subseteq \atn$, we write $c(\wn) = \sum_{j \in \wn} c_j$. We denote the ratio between the highest and lowest cost among the alternatives by $\alpha \ge 1$.  There is a budget constraint $B$: a winning set $\wn\subseteq\atn$ is {\em feasible} if it satisfies $c(\wn)\le B$. Without loss of generality, we assume that $B \ge \max_{j \in \atn} c_j$. Let $\wnset \subseteq2^\atn$ be the set of all feasible winning sets. In the {\em unit cost setting} we have $c_j=1$ for all $j\in \atn$. In this case, PB simplifies to multi-winner voting; the budget constraint $B$ becomes the cardinality constraint, and is assumed to be an integer. 

In this paper, we focus on PB with approval ballots. That is, each agent $i$ submits an approval ballot $\Rp_i \in \{0, 1\}^\atn$.
A voting rule $r$ takes the ballot profile $\Rp = (\Rp_i)_{i \in [n]}$, the costs of alternatives, and the budget constraint $B$ as input, and outputs a winning set $\wn \in \wnset$.  
We consider several voting rules commonly used and studied in PB. Their formal definitions can be found in Appendix~\ref{apx:rules}.

\myparagraph{(Greedy) Approval Voting} ($\av$) orders the alternatives by their number of approvals, from largest to smallest, and adds them in this order, skipping those that do not fit in the remaining budget. Ties are resolved using a tie-breaking rule.

\myparagraph{Approval voting per cost} ($\avcost$) runs like $\av$, except the alternatives are ordered by the number of approvals divided by the cost. Under unit cost, $\av$ and $\avcost$ coincide. 

\myparagraph{Proportional Approval Voting} ($\pav$) assigns a score to each feasible set $\wn$, so that each agent who approves $k$ alternatives in $\wn$ contributes $(1+ \frac12+\cdots +\frac1k)$ to the score. It outputs a feasible winning set with the maximum score. 

\myparagraph{Greedy Cover} adds alternatives one by one; at each step, it selects an alternative that increases the ``coverage'' (i.e., the number of agents who approve at least one alternative) as much as possible.

\myparagraph{Method of Equal Shares~\citep{peters2021proportional}} ($\mes$): The budget is split equally among the agents, who can use their share to pay for alternatives they approve. In each round, the rule selects an alternative with the lowest per-approver cost, and shares its cost among its approvers as equally as possible. It terminates when no remaining alternative is affordable. $\mes$ may fail to exhaust the budget. In this case, a second voting rule $r$ is applied to spend the leftover budget on the remaining alternatives. We denote this by $\mes$+$r$.  

\myparagraph{$\phr$}: Agents accumulate "voting power" over time uniformly. A project is selected at the moment when the total accumulated power of its approvers first reaches its cost, and that power is then consumed. The process continues until the budget is exhausted. 


\section{An Epistemic Model for PB}
\label{sec:model}
We assume that each alternative $j\in\atn$ has an underlying ground truth {\em quality} $\Qual_j$ drawn from the {\em quality range} $\qualset = \{0, 1, \dots, \topqual\}$ according to a prior 
$\qualp_j = (\qualp_j^1,\ldots, \qualp_j^{\topqual})$, so that $\Pr[\Qual_j = \qual] = \qualp_j^\qual$. The prior $\qualp_j$ is common knowledge. 
 Let $\Quallist=(\Qual_j)_{j\in\atn}$ denote the quality vector of all alternatives.

The qualities of the alternatives are not directly observable, but agents receive signals about them. In this paper, we study binary signals $\sigset = \{0, 1\}$, where signals $0$ and $1$ represent an overall negative/positive impression towards an alternative, respectively. 
Each alternative $j\in\atn$ is associated with an {\em information structure} $\sigp_j\in(0, 1)^\qualset$, where $\sigp_j^\qual = \Pr[\Sig = 1 \mid \Qual_j = \qual]$ denotes the probability for an agent to receive a positive signal about $j$ conditioned on $j$'s quality being $\qual$; this information structure is shared by all agents. Let $\sigp = (\sigp_j)_{j\in\atn}$ be the list of information structures of all alternatives. Given a quality vector $\Quallist=(\Qual_j)_{j\in\atn}$, the vector $\sigp^{\Quallist}=(\sigp^{L_j}_j)_{j\in\atn}$
captures the distribution of the agents' signals: 
every agent $i\in [n]$ receives a vector of signals $\Sigv_i \in \sigset^\atn$, where $\Sig_{ij} = (\Sigv_i)_j$ ($i$' signal regarding alternative $j$) is independently drawn according to $\sigp_j^{\Qual_j}$.

We assume there is a common ground-truth utility function $\vt$. We consider two utility functions:
\begin{itemize}
    \item {\em Normal utility}. Each agent receives utility $\Qual_j$ from each winner $j \in \wn$, and $\vt^N(\wn) = \sum_{j \in \wn} \Qual_j$. This function assumes that the utility of an alternative does not depend on its ``size'' (as captured by the cost). 
    \item {\em Cost-proportional utility}. Each agent receives utility $c_j\cdot \Qual_j$ from each winner $j \in \wn$, and $\vt^C(\wn) = \sum_{j \in \wn} c_j\cdot \Qual_j$. This function assumes that the ``impact'' of an alternative scales with its cost. 
\end{itemize}
Note that in the unit cost setting, normal utility coincides with cost-proportional utility. 

Given a utility function $\vt$, let $\wnmax^\vt = \{\wn \in \wnset \mid \vt(\wn) \ge \vt(\wn')\text{ for all $\wn' \in \wnset$}\}$ denote the set of all feasible winning sets that maximize the utility function $\vt$. When $\vt$ is clear from the context, we sometimes omit it from the notation and write $\wnmax$. Our goal is to select a winning set in $\wnmax$ based on agents' votes. 

We consider both {\em informative voting}, where each agent directly reports their signal for each alternative (Section~\ref{sec:informative}), and strategic voting (Section~\ref{sec:strategic}).
Both utility functions depend on the ground truth, i.e., the qualities of the alternatives, which are not directly observable. Therefore, agents with different signals may have different preferences. 

\begin{ex}
    \label{ex:running}
    Consider a PB instance with $n = 100$ agents and $m = 3$ alternatives $\atn = \{1, 2, 3\}$ 
    with costs $c_1 = 4$, $c_2 = 3$, and $c_3 = 2$, and budget $B = 7$. The quality range is binary, i.e., $\qualset = \{0, 1\}$. The common priors are $\qualp_1^1 = 0.8$, $\qualp_2^1 = 0.6$, and $\qualp_3^1 = 0.4$, so that alternative $1$ has probability $0.8$ of having quality $1$, alternative $2$ has probability $0.6$ of having quality $1$, and alternative $3$ has probability $0.4$ of having quality $1$. The signal distributions $\sigp_j^\qual = \Pr[\Sig = 1 \mid \Qual_j = \qual]$ are shown in Table~\ref{tbl:running_signal}.
    
    \begin{table}[htbp]
    \centering
    \begin{tabular}{@{}crrrr@{}}
    \toprule
    \multicolumn{1}{r}{} & $\Qual_j$ & Alternative $1$ & Alternative $2$ & Alternative $3$ \\ \midrule
    \multirow{2}{*}{$\sigp_j^\qual$} & $1$ & $0.7$ & $0.65$ & $0.6$ \\
     & $0$ & $0.3$ & $0.35$ & $0.4$ \\ \bottomrule 
    \end{tabular}
    \caption{Signal distributions for Example~\ref{ex:running}\label{tbl:running_signal}}
    \end{table}
    
    Under normal utility $\vt^N$, if the true quality vector is $\Quallist = (1, 1, 0)$, then the optimal winning set is $\wn = \{1, 2\}$ with utility $\vt^N(\wn) = 2$, while $\wn' = \{1, 3\}$ has utility $1$ and $\wn'' = \{2, 3\}$ has utility $1$. \hfill \qed
\end{ex}

A {\em PB environment} $\inst$ specifies the number of agents $n$, the set of alternatives $\atn$, the costs $(c_j)_{j \in \atn}$, the budget $B$, the quality range $\qualset$ and priors $(\qualp_j)_{j\in\atn}$, the information structures $\sigp$, and the utility function $\vt$. 


Given an environment $\inst$, we want to measure how well a voting rule $r$ can elect the ``best'' set of winners. We define the {\em performance} of a rule $r$ as the worst-case ratio, over all possible quality vectors $\Quallist$, between (a) the expected utility of the set selected by $r$ and (b) the utility of an optimal winning set, where the expectation is taken over $n$ agents' signals conditioned on $\Quallist$:  
\begin{equation}
    \informratio_\inst(r, n) = \min_{\Quallist \in \qualset^\atn} \frac{\expect_{\Rp \sim (\sigp^{\Qual})^n} [\vt(r(\Rp))]}{ \max_{\wn \in \wnset} \vt(\wn)}. 
\end{equation}

When the context is clear, we omit $\inst$ and write $\informratio(r, n)$.


\section{Performance of PB Rules}\label{sec:informative}

In this section, we focus on
{\em informative voting}, i.e., we assume that all agents vote according to their signals, so that $\Sigv_i = \Rp_i$ for all $i\in[n]$. First we show that, in the unit-cost setting, under mild assumptions the performance of many voting rules converges to $1$ as $n\to+\infty$. This implies that, by voting truthfully, agents are able to identify optimal winning sets even though their signals are noisy.

Before we present the proof, we formulate a condition in information structures that is used in the proof.
\begin{dfn}
    An information structure $(\sigp_j^{\qual})_{j\in \atn, \qual \in \qualset}$ is {\em quality dominant} if for every pair of (not necessarily distinct) alternatives $j, j' \in \atn$ and for all $\qual, \qual' \in \qualset$ such that $\qual > \qual'$ it holds that $\sigp_j^{\qual} > \sigp_{j'}^{\qual'}$.  An environment $\inst$ is {\em quality dominant} if its information structure is quality dominant; a sequence of environments is {\em quality dominant} if every element of this sequence is quality dominant. 
\end{dfn}
Quality dominance requires that higher-quality alternatives are more likely to convey positive signals to agents, both across alternatives (i.e., for $j\neq j'$) and for a fixed alternative (i.e., for $j=j'$). We believe that this is a mild assumption as quality is often legible. 

\begin{thm}
    \label{thm:unit_positive}
    In the unit-cost setting, 
    for any quality dominant environment, every voting rule $r \in \{\av, \gc,  \phr, \mes\textrm{+}\av, \mes \textrm{+} \phr,$ $ \pav\}$ satisfies $\lim_{n\to\infty} \informratio(r, n) = 1$. 
\end{thm}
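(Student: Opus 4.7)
The plan is to exhibit a high-probability ``good event'' $G_n$ on the vote profile under which every rule in the list selects an optimal winning set. Fix a ground-truth quality vector $\Quallist$, and write $p_j = \sigp_j^{L_j}$. Under informative voting, the approval count $V_j$ and, more generally, the number of agents casting any vote pattern $S \subseteq \atn$ are sums of $n$ i.i.d.\ Bernoullis with expectations $n p_j$ and $n \prod_{j \in S} p_j \prod_{j \notin S}(1-p_j)$ respectively, using independence of signals across alternatives. By quality dominance there is a gap $\delta > 0$ with $p_j \ge p_{j'} + \delta$ whenever $L_j > L_{j'}$. Let $G_n$ be the event that every empirical vote-pattern frequency lies within $\epsilon$ of its expectation, for some $\epsilon$ much smaller than $\delta$ and than every nonzero expected-score gap arising below. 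By Hoeffding's inequality and a union bound over the $2^m$ patterns, $\Pr[G_n] \to 1$. On $G_n$, the order of the empirical approval fractions $V_j/n$ refines the quality order, so any $B$ alternatives with the most approvals contain $B$ of the highest qualities; since normal utility coincides with cost-proportional utility in the unit-cost setting, any such set maximizes $\vt$.

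It then suffices to show that on $G_n$ each listed rule returns such a set. For $\av$ this is immediate. For $\gc$, the independence of signals across alternatives makes the expected marginal coverage of $j$ given a previously selected set $W$ equal to $n \cdot p_j \prod_{j' \in W}(1-p_{j'})$, which is strictly ordered by $p_j$ with the same gap structure; on $G_n$ the empirical marginal coverage preserves this order, so $\gc$ selects in decreasing order of $p_j$ and hence in decreasing order of quality. For $\pav$, a swap calculation shows that exchanging $j \in W$ for $j' \notin W$ changes the expected score by $n (p_{j'}-p_j) \cdot \expect\!\bigl[1/(|(W\setminus\{j\}) \cap A| + 1)\bigr]$, whose sign equals that of $p_{j'}-p_j$; hence the expected-score maximizer is top-$B$ by $p_j$, and $G_n$ certifies that this is also the empirical maximum. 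For $\phr$, $\mes\textrm{+}\av$, and $\mes\textrm{+}\phr$, I would argue by induction on the round of selection: the selection criterion at each round (time of saturation for $\phr$; effective per-approver cost for $\mes$) is a function of the current state and of the empirical vote distribution, and at the expected distribution this function picks the highest-$p_j$ alternative still available; on $G_n$ the empirical criterion then picks the same alternative, and for $\mes\textrm{+}r$ the secondary rule $r$ completes the argument on the residual problem once $\mes$'s share-based phase ends.

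The main obstacle is the $\phr$/$\mes$ step, because these rules are sequential and their selection criteria depend on cumulative loads or shares rather than static per-alternative statistics. The cleanest route is to verify that each selection criterion is Lipschitz in the empirical joint distribution over vote patterns, so that concentration of that distribution around the quality-induced product distribution implies concentration of the criterion around its limit, at which independence of signals across alternatives forces the choice to be the currently highest-$p_j$ alternative still available. Combining the pieces above yields $\Pr[r(\Rp) \in \wnmax] \to 1$ for each listed rule; since utilities are bounded and $\qualset^\atn$ is finite, $\informratio(r,n) = \min_{\Quallist} \expect[\vt(r(\Rp))] / \max_{\wn}\vt(\wn) \to 1$, as claimed.
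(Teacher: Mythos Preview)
Your proposal is correct and takes essentially the same route as the paper: concentrate all $2^m$ vote-pattern frequencies via Hoeffding plus a union bound (this is exactly the paper's ``within-type concentration'' lemma), and then verify rule by rule that on this event each round's selection respects the $p_j$-order. For the $\phr$/$\mes$ step you flag as the main obstacle, the paper executes your Lipschitz idea by observing that on $G_n$ the ratio $|\{i\in G_u:\Rp_{ij}=1\}|/|G_u|\approx p_j$ holds uniformly over all vote-type groups $G_u$ (agents grouped by their votes on the already-selected set), so the load-weighted approval sum in the selection criterion satisfies $\sum_u O_u\cdot|\{i\in G_u:\Rp_{ij}=1\}|\approx p_j\,\Sigma_{t-1}=p_j(t-1)$ regardless of what the realization-dependent loads $O_u$ actually are---this sidesteps having to bound the state directly and gives the $\Theta(1/n)$ separation in $\ell_t(j)$ needed to conclude.
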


Theorem~\ref{thm:unit_positive} shows an interesting insight that under the unit cost, even rules designed to satisfy normative properties can reveal the ground truth with high probability. 

\begin{sketch} The key tool in the proof is 
the Law of Large Numbers (formalized via Hoeffding’s inequality): we use it, together with the union bound, to show that, as $n \to \infty$, for each alternative $j\in\atn$ its approval count $A_j$ concentrates tightly around its expected value with probability converging to 1. In a quality-dominant environment, high-quality alternatives have strictly higher signal probabilities than low-quality ones, and therefore receive strictly more approval votes. This directly implies that $\av$ picks a highest-quality remaining alternative (within the budget constraint) at every step. For other rules, we leverage this observation more indirectly. 

For $\gc$, $\phr$, and $\mes$ we apply the approval concentration idea to subsets of agents. For $\gc$, we consider the uncovered agents at each step. For $\phr$ and $\mes$, we consider all sets of agents with the same remaining voting power/budget at each step, respectively. Crucially, these subsets are defined in terms of already-selected alternatives. Consequently, in each subset, higher-quality alternatives have more approval votes. For $\gc$, this directly means that a highest-quality alternative will be selected. For $\phr$ and $\mes$, a highest-quality alternative will have the earliest affordable time/lowest affordable per-agent price, respectively, and consequently will be selected. While $\mes$ may not exhaust the budget, we subsequently apply a second-round vote on the remaining budget and alternatives, and exactly the same profile, so we can apply the results for $\av$/$\phr$. 

For $\pav$, quality dominance ensures that optimal sets have strictly higher expected $\pav$ scores than non-optimal sets, since higher-quality alternatives contribute more to the harmonic function. Hoeffding's inequality and the union bound guarantee that the $\pav$ score of each committee concentrates around its expectation with a probability converging to 1. As $n\to+\infty$, the gap between optimal and suboptimal expected scores dominates the concentration error, so $\pav$ selects a quality-maximizing set with probability converging to 1. The full proof is in Appendix~\ref{apx:unit_positive}.
\end{sketch}

On the other hand, for general costs, the performance of any voting rule $r$ does not exceed $\frac12$.  

    

\begin{thm}
\label{thm:general_impossibility}
    For any constant $\varepsilon > 0$, any $n > 0$, any voting rule $r$, and a utility function $\vt \in \{\vt^N, \vt^C\}$, there exists a quality dominant environment $\inst$ such that $ \informratio_{\inst}(r, n) < \frac{1}{2} + \varepsilon$. 
\end{thm}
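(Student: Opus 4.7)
The plan is to construct a minimalist ``adversarial'' environment with only two alternatives, both occupying the entire budget, and with signals that are barely informative (yet still quality dominant). Under two carefully chosen quality vectors the optimal set is unique and different in each case, but the agents' signal distributions are so close in total variation that no rule can reliably distinguish the two situations, which forces the ratio to sit near $\tfrac12$ on one of them.

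Concretely, I would fix $r$, $n$, and $\varepsilon$, and take $\atn=\{a,b\}$ with $c_a=c_b=B$, binary qualities $\qualset=\{0,1\}$ with uniform prior, and signal probabilities $\sigp_a^1=\sigp_b^1=\tfrac12+\eta$, $\sigp_a^0=\sigp_b^0=\tfrac12-\eta$ for a small parameter $\eta\in(0,\tfrac12)$ to be chosen in terms of $n$ and $\varepsilon$. Quality dominance holds because $\tfrac12+\eta>\tfrac12-\eta$. Since $c_a=c_b=B$, only singletons (and $\emptyset$) are feasible, and the ratios of interest coincide under $\vt^N$ and $\vt^C$, so I argue only for $\vt^N$.

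I then consider the quality vectors $\Quallist^{(1)}=(1,0)$ and $\Quallist^{(2)}=(0,1)$; the optimum utility is $1$ in both cases, realized uniquely by picking the quality-$1$ alternative. A direct calculation of the per-agent signal distributions $P=\sigp^{\Quallist^{(1)}}$ and $Q=\sigp^{\Quallist^{(2)}}$ on $\{0,1\}^2$ gives $\|P-Q\|_{TV}=2\eta$, and by subadditivity of total variation on product measures $\|P^{\otimes n}-Q^{\otimes n}\|_{TV}\le 2n\eta$. Applying the data-processing inequality to the (possibly randomized) map $r$ then yields $|\Pr[r=\{a\}\mid \Quallist^{(1)}]-\Pr[r=\{a\}\mid \Quallist^{(2)}]|\le 2n\eta$, and similarly for the event $r=\{b\}$.

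Writing $p_a=\Pr[r=\{a\}\mid\Quallist^{(1)}]$ and $q_a,q_b$ for the analogous probabilities under $\Quallist^{(2)}$, the expected utility under $\Quallist^{(1)}$ is exactly $p_a$ and under $\Quallist^{(2)}$ is exactly $q_b$, while the optimum equals $1$ in both cases. Feasibility forces $q_a+q_b\le 1$, and the TV bound forces $p_a\le q_a+2n\eta$, so $p_a+q_b\le 1+2n\eta$ and hence $\min(p_a,q_b)\le \tfrac12+n\eta$. Taking $\eta<\varepsilon/n$ therefore yields $\informratio_\inst(r,n)\le\min(p_a,q_b)<\tfrac12+\varepsilon$. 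The main obstacle I anticipate is making the argument uniform over arbitrary (possibly randomized, possibly prior-sensitive) rules $r$; this is exactly what the TV/data-processing step buys, and quality dominance survives the $\eta\to0$ limit because it is a purely strict inequality with no quantitative gap requirement.
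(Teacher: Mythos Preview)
Your proof is correct, but it takes a genuinely different route from the paper's. The paper constructs two \emph{different} environments $\inst_1,\inst_2$ (with distinct information structures) that, under carefully chosen quality vectors, produce \emph{exactly} the same signal distribution; it then uses many alternatives with asymmetric costs ($c_1=m$, $c_2=\cdots=c_m=1$) so that the two optimal sets $\{1\}$ and $\atn\setminus\{1\}$ have very different utilities, and argues that one of the two environments must witness a ratio near $\tfrac12$. You instead stay inside a \emph{single} two-alternative, equal-cost environment and use a total-variation/data-processing argument with a tunable noise parameter $\eta$ to make the two quality vectors statistically near-indistinguishable for the given $n$. Your approach is more elementary (no cost asymmetry, both utility functions handled simultaneously since $c_a=c_b$), and it shows---somewhat more sharply than the paper's framing suggests---that the $\tfrac12$ barrier arises from the $n$-dependence of the environment rather than from heterogeneous costs. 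The paper's construction, on the other hand, achieves exact indistinguishability (no TV calculus needed) and, because it is built around cost asymmetry, extends directly to the $\alpha$-parameterized bound of Corollary~\ref{coro:general_impossibility_a}; your symmetric construction does not yield that refinement.
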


\begin{proof}
We construct two scenarios such that: (1) the optimal winning sets are different, and the utilities of different winning sets are drastically different, and (2) $r$ cannot distinguish these two scenarios and hence has to suffer utility loss in at least one of them. We first present the construction and the proof for $\vt^N$. Then we present the construction for $\vt^C$, which admits a similar proof. 

\myparagraph{Construction.} Each scenario consists of an environment and a quality vector. We denote the two scenarios by $(\inst_1, \Quallist_1)$ and $(\inst_2, \Quallist_2)$, respectively. The two environments share all attributes except 
for 
the signal distribution. In both environments, we have alternatives $\atn = \{1, 2, \dots, m\}$, budget $B = m$, and costs $c_1 = m$, $c_2 = c_3 = \cdots = c_{m} = 1$. The quality range is binary $\qualset = \{0, 1\}$.
The information structures for both environments are quality dominant and are given by Table~\ref{tbl:impossibility}.
The quality vectors are
 $\Quallist_1=(1, 0, \dots, 0)$ and 
 $\Quallist_2=(1, \dots, 1)$, respectively.
 
\begin{table}[htbp]

\centering
\begin{tabular}{@{}crrr@{}}
\toprule
\multicolumn{1}{r}{} & Quality & Alternative $1$ & other alternatives \\ \midrule
\multirow{2}{*}{$\inst_1$} & 1 & 0.6 & 0.6 \\
 & 0 & 0.4 & 0.4  \\ \cmidrule(l){1-4} 
\multirow{2}{*}{$\inst_2$} & 1 & 0.6 & 0.4  \\
 & 0 & 0.3 & 0.2 \\ \bottomrule 
\end{tabular}
\caption{Signal distribution of two instances\label{tbl:impossibility}}
\end{table}
 
For scenario 1, the optimal winning set is $\{1\}$ with $\vt^N(\{1\}) = 1$, whereas $\vt^N(\atn \setminus \{1\}) = 0$.  For scenario 2, on the contrary, the optimal winning set is $\atn \setminus \{1\}$ with $\vt^N(\atn \setminus \{1\}) = m-1$, whereas $\vt^N(\{1\}) = 1$. However, the two environments have exactly the same signal distribution, so no voting rule can distinguish them. 


Now, given a voting rule $r$, we can write the expected utility under $r$ in each scenario as follows. For $(\inst_1, \Quallist_1)$, we have 
\begin{align*}
    \expect_{\Rp \sim (\sigp^{\Quallist_1})^n} [\vt^N(r(\Rp))] = &\ \Pr[r(\Rp) = \{1\}].
\end{align*}
For $(\inst_2, \Quallist_2)$, we have 
\begin{align*}
    \expect_{\Rp \sim (\sigp^{\Quallist_2})^n} [\vt^N(r(\Rp))] = &\ \Pr[r(\Rp) = \{1\}]\\
    +&\  (m-1)\cdot \Pr[r(\Rp) = (\atn \setminus \{1\})].
\end{align*}

Thus, no matter how the voting rule $r$ maps profiles to outcomes, either $\expect_{\Rp \sim (\sigp^{\Quallist_1})^n} [\vt^N(r(\Rp))] \le \frac12 = \frac12\max_\wn\vt^N(\wn)$ in the first scenario, which implies $\informratio_{\inst_1} (r, n) \le \frac12$, or $\expect_{\Rp \sim (\sigp^{\Quallist_2})^n} [\vt^N(r(\Rp))] \le \frac{m}{2} = \frac{m}{2(m-1)} \max_\wn \vt^N(\wn)$ in the second scenario, which implies $\informratio_{\inst_2}(r, n) \le \frac{m}{2(m-1)}$.
For $m > \frac{1}{2\varepsilon} + 1$ we have $\informratio_{\inst_2}(r, n) < \frac{1}{2} + \varepsilon$. This completes the proof for $\vt^N$.
The construction for $\vt^C$ is in Appendix~\ref{apx:general_impossible}. 
\end{proof}

We can strengthen Theorem~\ref{thm:general_impossibility} to instances where the ratio between project costs is bounded, 
using similar techniques.

\begin{dfn}
    An environment $\inst$ is {\em $\alpha$-cost bounded} if $\frac{\max_{j \in \atn} c_j}{\min_{j \in \atn} c_j} \le \alpha$.  
    A sequence of environments is $\alpha$-cost bounded if every one of its environments is $\alpha$-cost bounded. 
\end{dfn}

\begin{coro}
\label{coro:general_impossibility_a}
    For any constant $\varepsilon > 0$, any $n > 0$, and any voting rule $r$, there exists a quality-dominant and $\alpha$-cost bounded environment $\inst$ with utility $\vt^N$ such that $\informratio_{\inst}(r,n) < \frac{\lceil\alpha\rceil - 1}{2\lceil\alpha\rceil - 3} + \varepsilon$; and there exists a quality-dominant and $\alpha$-cost bounded environment $\inst$ with utility $\vt^C$ such that $ \informratio_{\inst}(r,n) < \frac{\alpha}{2\alpha - 1} + \varepsilon$. 
\end{coro}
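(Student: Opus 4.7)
The plan is to adapt the two-scenario construction from Theorem~\ref{thm:general_impossibility} by rescaling the costs so the ratio is at most $\alpha$, while preserving the structural feature that the optimal winning sets differ drastically across the two scenarios. I keep the quality vectors $\Quallist_1 = (1, 0, \ldots, 0)$ and $\Quallist_2 = (1, 1, \ldots, 1)$ and reuse the signal-distribution template of Table~\ref{tbl:impossibility}, so the profile distributions remain identical across $\inst_1$ and $\inst_2$ and no rule $r$ can distinguish them; in particular, $p = \Pr[r(\Rp) = \{1\}]$ is the same number in both instances. A key observation is that one cannot set $\Quallist_2 = (0, 1, \ldots, 1)$ to make alternative $1$ worthless in scenario~2: the signal-matching constraint $Q_1^0|_{\inst_2} = Q_1^1|_{\inst_1}$ together with $Q_j^1|_{\inst_2} = Q_j^0|_{\inst_1}$ would force $Q_1^0|_{\inst_2} > Q_j^1|_{\inst_2}$, violating quality dominance in $\inst_2$. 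Hence alternative $1$ retains positive utility in scenario~2, and the analysis reduces to a balanced min-of-two calculation.

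For $\vt^N$ I would set $c_1 = \alpha$, $c_j = 1$ for $j > 1$, and $B = \alpha$. Because $c_1 + 1 > B$, the feasible sets split cleanly into $\{1\}$ alone and subsets of the unit-cost alternatives of size at most $\lfloor\alpha\rfloor$, which gives $u_1 = 1$ in scenario~1 and $u_2 = \lfloor\alpha\rfloor$ in scenario~2. A direct calculation yields $\informratio_{\inst_1}(r,n) = p$ and $\informratio_{\inst_2}(r,n) \le p/\lfloor\alpha\rfloor + (1 - p)$, and balancing these two expressions in $p$ gives $\min(\informratio_{\inst_1}, \informratio_{\inst_2}) \le \lfloor\alpha\rfloor/(2\lfloor\alpha\rfloor - 1)$. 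Since $\lfloor\alpha\rfloor \ge \lceil\alpha\rceil - 1$ and $x \mapsto x/(2x-1)$ is decreasing on $x > 1/2$, this in turn is at most $(\lceil\alpha\rceil - 1)/(2\lceil\alpha\rceil - 3)$, establishing the $\vt^N$ claim.

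For $\vt^C$ the cost pattern has to be flipped: take $c_1 = 1$, $c_j = \alpha$ for $j > 1$, and $B = \alpha$. Every two-element set now costs at least $1 + \alpha > \alpha$ or $2\alpha > \alpha$, so feasible outcomes are only singletons (and $\emptyset$); hence $u_1 = \vt^C(\{1\}) = 1$ and $u_2 = \vt^C(\{j\}) = \alpha$ for any $j > 1$. The same $p = \Pr[r(\Rp) = \{1\}]$ then gives $\informratio_{\inst_1}(r,n) = p$ and $\informratio_{\inst_2}(r,n) \le p/\alpha + (1 - p)$, and balancing these produces $\alpha/(2\alpha - 1)$.

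The main obstacle is identifying the right construction for $\vt^C$, where one must make alternative~$1$ the \emph{cheap} alternative---opposite to the intuition from Theorem~\ref{thm:general_impossibility}. This choice becomes forced once one notices that quality dominance prevents the ``flipped'' quality vector $\Quallist_2 = (0, 1, \ldots, 1)$, so $v_{\{1\}}$ stays positive in scenario~2 and the bound naturally takes the form $u_2/(2u_2 - v_{\{1\}})$ rather than $u_2/(2u_2 - 1)$. The remaining work is routine: verifying $\alpha$-cost boundedness of both instances, confirming quality dominance via a direct adaptation of Table~\ref{tbl:impossibility}, and the monotonicity comparison $\lfloor\alpha\rfloor/(2\lfloor\alpha\rfloor - 1) \le (\lceil\alpha\rceil - 1)/(2\lceil\alpha\rceil - 3)$.
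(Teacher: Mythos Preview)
Your proposal is correct and follows essentially the same two-scenario indistinguishability argument as the paper. The only cosmetic differences are: for $\vt^N$ you take $B=\alpha$ (yielding the intermediate bound $\lfloor\alpha\rfloor/(2\lfloor\alpha\rfloor-1)$, which you then weaken to the stated one), whereas the paper takes $m=B=\lceil\alpha\rceil$ and reaches $\frac{\lceil\alpha\rceil-1}{2\lceil\alpha\rceil-3}$ directly; for $\vt^C$ your construction $(c_1,c_j,B)=(1,\alpha,\alpha)$ is just the paper's $(c_1,c_2,B)=(1/\alpha,1,1)$ rescaled, and the balancing calculation is identical.
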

The proof of Corollary~\ref{coro:general_impossibility_a} can be found in Appendix~\ref{apx:coro}. 

However, for $\av$ we can still obtain a lower bound on performance when the cost ratio is bounded. 

\begin{thm}
\label{thm:general_positive}
    For any $\alpha > 1$ and any quality-dominant environment that is $\alpha$-cost bounded, 
    $\lim_{n\to\infty} \informratio (\av, n) \ge \frac{1}{\lceil\alpha\rceil}$. 
\end{thm}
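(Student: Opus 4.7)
The plan is to combine a concentration argument that makes $\av$ process alternatives in decreasing order of true quality with a deterministic knapsack-style approximation bound. Normalize so that $c_{\min}=1$, so $c_j\in[1,\alpha]$ for all $j$, and set $K=\lceil\alpha\rceil$. By quality dominance, for any two alternatives $j,j'$ with $q_j>q_{j'}$ the expected approval counts differ by $n(\sigp_j^{q_j}-\sigp_{j'}^{q_{j'}})>0$. Applying Hoeffding's inequality to each pair and union-bounding over the $\binom{m}{2}$ pairs shows that with probability $1-O(m^2)e^{-\Omega(n)}$ the approval counts strictly respect the quality order. On this event $\av$ processes items in decreasing order of true quality, breaking ties within each quality class by its TB rule, so $S^{\av}$ is the greedy-by-quality knapsack solution.

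On the good event I would show $\vt^N(S^{\av})\ge\vt^N(S^*)/K$ deterministically. Sort the qualities in $S^{\av}$ as $a_1\ge a_2\ge\cdots$ and those in $S^*$ as $b_1\ge b_2\ge\cdots$. The core lemma is that $a_i\ge b_{(i-1)K+1}$ whenever $(i-1)K+1\le|S^*|$. Given this, partition the indices of $S^*$ into blocks of $K$ consecutive indices: block $i$ has sum at most $Kb_{(i-1)K+1}\le Ka_i$, and summing over $i$ yields $\vt^N(S^*)\le K\sum_i a_i=K\vt^N(S^{\av})$.

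To prove the core lemma, fix $i$, set $q=b_{(i-1)K+1}$, $T'=S^*\cap\{q_j\ge q\}$ and $X'=S^{\av}\cap\{q_j\ge q\}$; by definition $|T'|\ge(i-1)K+1$. Assume for contradiction $|X'|\le i-1$. If $\av$ skipped no item of quality $\ge q$, then $T'\subseteq X'$, contradicting $|T'|>|X'|$. Otherwise the final budget after processing all quality-$\ge q$ items is strictly below the cost of every skipped item, so $c(X')>B-\alpha$; combined with $|X'|\ge c(X')/\alpha$ this forces $B<i\alpha$. Set $y=B-(i-1)\alpha$, $k=|T'\cap X'|\le i-1$, and $c^*=\min_{j\in T'\setminus X'}c_j$. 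Since each $j\in T'\setminus X'$ was skipped, $c(X')>B-c^*$ and therefore $c^*>y$. Combining $c(T')\le B=y+(i-1)\alpha$, $c(T'\cap X')\ge k$, and $c(T'\setminus X')>y\bigl((i-1)K+1-k\bigr)$ gives $k(1-y)+y(i-1)K<(i-1)\alpha$; using $k\le i-1$ and the sign of $1-y$, this simplifies to $y(K-1)<\alpha-1$. If $\alpha\in\mathbb{Z}$ (so $K=\alpha$) this forces $y<1$, contradicting $y\ge B-(i-1)\alpha\ge 1$; if $\alpha\notin\mathbb{Z}$ one still has $y\ge 1$ while $(\alpha-1)/(K-1)<1$, again a contradiction.

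The main obstacle is the Case-B cost arithmetic: one must simultaneously use the strict inequality $c^*>y$, the combinatorial lower bound $|T'\setminus X'|\ge(i-1)K+1-k$, and feasibility $c(T')\le B$, then split on integrality of $\alpha$ to close the contradiction. Once the lemma is proved, combining it with the concentration event gives $\E_{\Rp\sim(\sigp^{\Quallist})^n}[\vt^N(\av(\Rp))]\ge (1-o(1))\vt^N(S^*)/K$ uniformly in $\Quallist$, hence $\lim_{n\to\infty}\informratio(\av,n)\ge 1/\lceil\alpha\rceil$.
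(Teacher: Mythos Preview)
Your argument for the normal utility $\vt^N$ is correct and shares the paper's two-stage structure: a Hoeffding-plus-union-bound step so that, with probability $1-e^{-\Omega(n)}$, $\av$ processes alternatives in non-increasing quality order, followed by a deterministic $1/K$ approximation bound for greedy-by-quality. Your core lemma $a_i\ge b_{(i-1)K+1}$ is a cleaner packaging of what the paper does procedurally (peel off the top item of $S^{\av}$ together with the top $K$ items of $S^*$ and argue case-by-case that the former dominates); both proofs ultimately rest on the same cost arithmetic (min cost $1$, max cost $\alpha$, a skip forces $B-c(X')<\alpha$). One point you should state explicitly rather than leave implicit: the inequality $y\ge 1$ follows from $B\ge c(T')\ge |T'|\ge (i-1)K+1\ge (i-1)\alpha+1$. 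With that, your case split on the sign of $1-y$ and the division by $i-1$ (the $i=1$ case being the degenerate inequality $0<0$) go through.

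The genuine gap is that the theorem is stated for \emph{any} quality-dominant $\alpha$-cost-bounded environment, and the environment specifies the utility function; the paper therefore proves the bound separately for $\vt^N$ and for the cost-proportional utility $\vt^C$, whereas you treat only $\vt^N$. Your block-sum argument does not port to $\vt^C$: there the objective is $\sum_j c_j\Qual_j$, so controlling $|S^*|\le K|S^{\av}|$ and $a_i\ge b_{(i-1)K+1}$ no longer bounds the ratio. The paper's $\vt^C$ argument is structurally different: it isolates a quality-dominating prefix $\wntop\subseteq S^{\av}$ with $c(\wntop)\ge B-\alpha$, obtains a $1/2$ bound when $B\ge 2\alpha$ or $c(\wntop)\ge B/2$, and otherwise handles the small-budget regime by pairing subsets of $S^{\av}$ and $S^*$ whose cost ratio is at most $\alpha$. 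You need an analogous argument for $\vt^C$ to complete the proof.
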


\begin{sketch}
We fix an arbitrary $\Quallist$ and show that the utility ratio between the outcome of AV and the optimum is at least $1/\lceil \alpha \rceil$ with high probability for all sufficiently large $n$. 
First, reasoning as in Theorem~\ref{thm:unit_positive}, we show that, with probability converging to 1, higher-quality alternatives receive more approvals and hence appear earlier in $\av$'s selection order. 
Thus, at each step, $\av$ selects an alternative with the 
highest quality among all remaining alternatives within the budget constraint. We then use this property to bound the ratio between the utility of $\av$'s outcome and that of the optimum. We deal with two utility models separately. 

\myparagraph{Normal Utility.} For $\vt^N$, we leverage the following important observation. We call a winning set $\wn$ {\em maximal} if for any $j\in\atn\setminus \wn$ the cost of $\wn\cup\{j\}$ exceeds the budget. For any two maximal sets $\wn$ and $\wn'$, the ratio of their cardinalities is bounded by $1 / \lceil \alpha \rceil$. Based on this observation, we construct a mapping between the outcome of $\av$ (denoted by $\wn_{\av}$ and the optimum (denoted by $\wn^*$). We order the alternatives in $\wn_{\av}$ and in $\wn^*$ in the descending order of their expected approval votes (equivalently, their quality), and map each alternative in $\wn_{\av}$ to at most $\lceil \alpha \rceil$ items in $\wn^*$. 
The first observation ($\av$ greedily picks highest-quality alternatives) guarantees that the quality of each alternative in $\wn_{\av}$ is not lower than that of any alternatives it is mapped to in $\wn^*$. Therefore, the quality of each alternative in $\wn_{\av}$ is at least $\frac{1}{\lceil \alpha \rceil }$ times the quality of its image. The second observation ($|\wn^*| \le \lceil \alpha \rceil \cdot |\wn_{\av}|$) guarantees that the image of $\wn_{\av}$ is the entire set $\wn^*$. Consequently, 
we obtain $\vt^N(\wn_{\av}) \ge \frac{1}{\lceil \alpha \rceil} \vt^N(\wn^*)$. 

\myparagraph{Cost-Proportional Utility.} We consider a subset $\wntop \subseteq \wn_{\av}$ such that the expected approval count (thus the quality) of any alternative in $\wntop$ is not lower than that of any alternative in $(\wn^*\setminus \wntop)$. For example, $\wntop$ includes all alternatives $\av$ selects in the beginning before skipping any alternative because of the budget constraint. Therefore, the total cost of $\wntop$ is at least $B - \alpha$. 
When $B \ge 2\alpha$ or $c(\wntop) \ge B/2$, we immediately obtain $\vt^C(\wntop) \ge \frac12 \vt^C(\wn^*)$. Otherwise, we consider $j \in (\wn^* \setminus \wntop)$ with the highest expected number of approvals. Then $c_j > B/2$, as otherwise $j$ would be in $\wntop$. Then, like the mapping in the $\vt^N$ case, we can partition $\wn^*$ and map each subset to a non-overlapping subset of $\wn_{\av}$, so that, within each pair, the $\wn_{\av}$ subset has higher quality, and the cost ratio of the two subsets is at most $\alpha$. Merging all the subsets, we get $\vt^C(\wn_{\av}) \ge \frac{1}{\alpha} \vt^C(\wn^*)$. 

The full proof is in Appendix~\ref{apx:general_positive}. 
\end{sketch}

\section{Strategic Behavior in Epistemic PB}
\label{sec:strategic}
In this section, we ask whether informative voting is likely to be a stable outcome under strategic behavior. Unfortunately, we find that even the necessary condition for this is very restrictive and satisfied by very few environments, even for the very simple case of two alternatives and unit costs.  

The {\em strategy} of an agent $i$, denoted by $\stg_i$, is a mapping from his/her signals $\Sigv_i$ to his/her report $\Rp_i$. The {\em informative strategy} $\stg^*$ identically maps the signals to the report. A {\em strategy profile} $\stgp = (\stg_i)_{i\in [n]}$ is the vector of all agents' strategies. In the {\em informative profile} $\stgp^*$ all agents play $\stg^*$. 
Given a strategy profile $\stgp$, a utility function $\vt$, a private signal $\Sigv_i$, and a rule $r$, agent $i$'s expected utility is given by $\ut_i(\stgp, \Sigv_i) = \expect_\wn[\vt(\wn) \mid \stgp, \Sigv_i]$, where $W$ is the random variable representing the winning set in agent $i$'s eyes in the following process. Given the signal $\Sigv_i$, agent $i$ forms a posterior belief about the qualities $\Quallist$, which leads to a distribution over  other agents' signals; given their strategies $\stgp_{-i}$, agent $i$ forms a belief about their votes, and hence the winning set $W$ after applying the rule $r$.

A strategy profile $\stgp=(\stg_i)_{i \in [n]}$ is a {\em Bayes--Nash equilibrium} (BNE) under a rule $r$ and utility function $\vt$ if no agent can unilaterally change his/her strategy to achieve a higher expected utility: for every agent $i$, every signal $\Sigv_i$, and every strategy $\sigma_i'$ for $i$, it holds that $\ut_i(\stgp, \Sigv_i) \ge \ut_i((\stg_i, \stgp_{-i}), \Sigv_i)$, where $\stgp_{-i}$ denotes the other agents' strategies in $\stgp$. 




\subsection{Two Alternatives: Rarely a BNE}

We first consider the simplest possible case: the set of alternatives is $\atn = \{\atone, \attwo\}$, alternatives have unit costs, the budget is $B = 1$, the quality range is binary $\qualset = \{0, 1\}$, and the voting rule is $\av$.


\begin{thm}
\label{thm:strategic_binary}
    Given an environment $\inst$ with two alternatives, unit costs, and binary qualities, when $n \to \infty$, a necessary condition for a sequence of informative voting profiles to be a BNE is $\sqrt{\sigp_{\atone}^1 \cdot \sigp_{\attwo}^0} + \sqrt{(1-\sigp_{\atone}^1) \cdot (1-\sigp_{\attwo}^0)} =    \sqrt{\sigp_{\atone}^0 \cdot \sigp_{\attwo}^1} + \sqrt{(1-\sigp_{\atone}^0) \cdot (1-\sigp_{\attwo}^1)},$
    which induces a Lebesgue 0-measure set under $(0, 1)^{\bar{\qual} \times m}$. 
\end{thm}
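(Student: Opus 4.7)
The plan is to apply a pivotal-voter analysis and reduce the best-response condition to a large-deviation comparison between two conditional pivotal probabilities. Under $\av$ with $\atn=\{\atone,\attwo\}$, unit costs, and $B=1$, the winner is whichever alternative receives more approvals, so agent $i$'s vote can affect the outcome only when the net approval margin among the other $n-1$ agents lies in $\{-1,0,+1\}$. Under informative voting, agent $j$'s net contribution is $X_j = \Sig_{j\atone}-\Sig_{j\attwo}\in\{-1,0,+1\}$, and conditional on $\Quallist=(L_\atone,L_\attwo)$ the $X_j$'s are i.i.d.\ with $\Pr[X=+1\mid L]=\alpha_L:=\sigp_\atone^{L_\atone}(1-\sigp_\attwo^{L_\attwo})$ and $\Pr[X=-1\mid L]=\beta_L:=(1-\sigp_\atone^{L_\atone})\sigp_\attwo^{L_\attwo}$.

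With binary qualities and utility $\vt^N$, only the states $L\in\{(1,0),(0,1)\}$ rank $\atone$ and $\attwo$ differently; the states $(0,0)$ and $(1,1)$ yield the same utility under either winner. Hence the best-response condition for the signal-$(1,0)$ agent (voting $\atone$-only under the informative strategy, rather than deviating to $\attwo$-only) reduces by Bayes' rule to
\[
\qualp_\atone^1\qualp_\attwo^0\cdot\sigp_\atone^1(1-\sigp_\attwo^0)\cdot\Pr[\mathrm{piv}\mid L{=}(1,0)]\;\ge\;\qualp_\atone^0\qualp_\attwo^1\cdot\sigp_\atone^0(1-\sigp_\attwo^1)\cdot\Pr[\mathrm{piv}\mid L{=}(0,1)],
\]
together with the symmetric reversed inequality for the signal-$(0,1)$ agent.

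The main technical step is an asymptotic evaluation of $\Pr[\mathrm{piv}\mid L]$. The moment generating function $M_L(t)=\alpha_L e^{t}+\beta_L e^{-t}+(1-\alpha_L-\beta_L)$ is minimized at $e^{t^*}=\sqrt{\beta_L/\alpha_L}$ with minimum value $\rho_L:=2\sqrt{\alpha_L\beta_L}+1-\alpha_L-\beta_L$, and a local CLT (or saddle-point estimate) for the lattice walk $\sum_j X_j$ yields $\Pr[\mathrm{piv}\mid L]=\Theta(n^{-1/2})\,\rho_L^{\,n}$. A direct algebraic simplification, using $\alpha_L\beta_L=\sigp_\atone^{L_\atone}(1-\sigp_\atone^{L_\atone})\sigp_\attwo^{L_\attwo}(1-\sigp_\attwo^{L_\attwo})$ and $1-\alpha_L-\beta_L=\sigp_\atone^{L_\atone}\sigp_\attwo^{L_\attwo}+(1-\sigp_\atone^{L_\atone})(1-\sigp_\attwo^{L_\attwo})$, gives
\[
\rho_L \;=\; \bigl(\sqrt{\sigp_\atone^{L_\atone}\sigp_\attwo^{L_\attwo}}+\sqrt{(1-\sigp_\atone^{L_\atone})(1-\sigp_\attwo^{L_\attwo})}\bigr)^{2}.
\]
Sending $n\to\infty$, the exponentials $\rho_{(1,0)}^n$ and $\rho_{(0,1)}^n$ dominate every fixed polynomial and prior factor, so both Bayes inequalities can hold simultaneously only if $\rho_{(1,0)}=\rho_{(0,1)}$, and taking positive square roots yields exactly the identity in the theorem.

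For the measure-theoretic claim, the resulting equation is a single nontrivial real-analytic equation in the four free parameters $(\sigp_\atone^0,\sigp_\atone^1,\sigp_\attwo^0,\sigp_\attwo^1)\in(0,1)^4$; any explicit numerical example shows the two sides are not identically equal, so its solution set is a proper zero set of a real-analytic function and therefore has Lebesgue measure zero in $(0,1)^4$. The principal obstacle will be the pivotal-probability asymptotics: one needs a local CLT (or a uniform saddle-point estimate) sharp enough to isolate both the exponential rate $\rho_L$ and to verify that the polynomial prefactors $\Theta(n^{-1/2})$ are of the same order across $L\in\{(1,0),(0,1)\}$, so that the ratio $\Pr[\mathrm{piv}\mid L{=}(1,0)]/\Pr[\mathrm{piv}\mid L{=}(0,1)]$ can be tracked to leading exponential order and the necessary condition can be derived cleanly.
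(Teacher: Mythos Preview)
Your proposal is correct and follows essentially the same pivotal-voter route as the paper: reduce the BNE condition to two opposite best-response inequalities whose ratio is governed by the exponential rate of the tie probability, compute that rate via a saddle-point/local-CLT estimate, and conclude that both inequalities can hold for large $n$ only if the two rates coincide. The differences are cosmetic: the paper takes the pair of deviations $[0,0]\to[0,1]$ and $[0,1]\to[0,0]$ (so the pivotal event is exactly $x_\atone=x_\attwo$), whereas you take $[1,0]\leftrightarrow[0,1]$ (pivotal margin in $\{-1,0\}$); and the paper phrases the tie probability as $\Pr[X_1=X_2]$ for two independent binomials and applies a saddlepoint approximation to the CGF of $X_1-X_2$, whereas you write the margin directly as an i.i.d.\ sum of $\{-1,0,+1\}$ variables and minimize the MGF---these computations are literally identical and yield the same $\rho_L=Q(\sigp_\atone^{L_\atone},\sigp_\attwo^{L_\attwo})^2$. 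Your closing real-analyticity argument for the zero-measure claim is in fact more explicit than the paper's.
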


Theorem~\ref{thm:strategic_binary} is similar in spirit to the results of ~\citet{Austen96:Information} for majority voting: informative voting forms a BNE for very few information structures. However, as we will show in the proof sketch, the nature of approval voting leads to considerably more complicated \textit{pivotal} cases (cases in which changing a vote will change the outcome), so that the analysis for majority voting cannot be easily extended. 

\begin{sketch}
The beginning of the analysis resembles that of~\citet{Austen96:Information}.  We start from an informative voting profile $\stgp^*$ and fix an agent $i$ with signal $\Sigv_i$. Let $\stgp$ be the strategy profile after $i$ changes his/her strategy. Agent $i$ has an incentive to deviate to $\stgp$ only if $\ut_i(\stgp^*, \Sigv_i) <\ut_i(\stgp, \Sigv_i)$. An important observation is that a change of strategy affects $i$'s utility only in the event that (1) $\atone$ and $\attwo$ obtain the same number of approvals from the rest of the agents (so that $i$'s strategy change flips the outcome), (2) $\atone$ and $\attwo$ have different quality (so that the outcome affects utility). For example, suppose $i$ gets a signal $\Sigv_i = [0,0]$ (not approving alternatives) and is deviating to $[0, 1]$ (approving $\attwo$ only). Let $\xrv_{\atone}$ and $\xrv_\attwo$ be the number of approvals that $\atone$ and $\attwo$ obtain from other agents, respectively. We can write the difference in expected utility as $\ut_i(\stgp^*, \Sigv_i) - \ut_i(\stgp, \Sigv_i) = $
\begin{align*}
        &\ \Pr[\Quallist = [1, 0] \mid \Sigv_i] \cdot \Pr[\xrv_{\atone} = \xrv_{\attwo} \mid \Quallist = [1, 0]] \\
        &\ - \Pr[\Quallist = [0, 1] \mid \Sigv_i] \cdot \Pr[\xrv_{\atone} = \xrv_{\attwo} \mid \Quallist = [0, 1]]. 
    \end{align*}

Here is where approval voting differs from majority voting over two alternatives (as considered by \citet{Austen96:Information}). Under the majority rule, each agents votes for exactly one alternative in $\atn$, so $\xrv_{\atone} = \xrv_{\attwo}$ only when both approval counts are exactly $(n-1)/2$. Under approval voting, however, an agent can approve both or neither of the alternatives, so the approval count can vary from $0$ to $n-1$, and the probability is highly nontrivial to calculate for finite $n$. Therefore, we seek an approximation when $n$ is large. 
Notice that, given $\Qual_{\atone}$ and $\Qual_{\attwo}$, $\xrv_{\atone}$ and $\xrv_{\atone}$ are random variables following binomial distributions $B(n-1, \sigp_{\atone}^{\Qual_{\atone}})$ and $B(n-1, \sigp_{\attwo}^{\Qual_{\attwo}})$, respectively. Therefore, Lemma~\ref{lem:saddlepoint} (see Appendix~\ref{apx:saddlepoint} for the proof) provides a fairly accurate approximation for large $n$. 

\begin{lem}
    \label{lem:saddlepoint}
    For any integer $n > 0$ and $p_1, p_2 \in (0, 1)$, let $X_1 \sim B(n, p_1)$ and $X_2 \sim B(n, p_2)$, set $Q(p_1, p_2) = \sqrt{p_1p_2} + \sqrt{(1-p_1)(1-p_2)}$, and let $M$ be a constant determined by $p_1$ and $p_2$.
    Then $\Pr[X_1 = X_2] = $
    \begin{equation*}
    M(p_1, p_2)\cdot (Q(p_1, p_2))^{2n+1}\cdot n^{-1/2}
    \cdot (1 + O(\frac1n)).
    \end{equation*}
\end{lem}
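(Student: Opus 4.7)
The plan is to prove this by applying the classical Laplace / saddle-point method to the exact sum representation. First I would expand
\begin{equation*}
\Pr[X_1 = X_2] = \sum_{k=0}^{n} \binom{n}{k}^2 a^k b^{n-k},
\end{equation*}
where $a = p_1 p_2$ and $b = (1-p_1)(1-p_2)$, so that $Q(p_1,p_2) = \sqrt{a} + \sqrt{b}$. Using Stirling's formula with its next-order correction, each summand in a central window $|k - np^*| \le n^{1/2}\log n$ can be written as
\begin{equation*}
\binom{n}{k}^2 a^k b^{n-k} = \frac{e^{n g(k/n)}}{2\pi n\,(k/n)(1-k/n)}\bigl(1 + O(1/n)\bigr),
\end{equation*}
where $g(p) = -2p\ln p - 2(1-p)\ln(1-p) + p\ln a + (1-p)\ln b$.

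Next I would analyze $g$ as the exponent in a Laplace integral. Solving $g'(p)=0$ gives the unique maximizer $p^{*} = \sqrt{a}/(\sqrt{a}+\sqrt{b})$, and direct computation yields $g(p^{*}) = 2\ln Q$, $p^{*}(1-p^{*}) = \sqrt{ab}/Q^2$, and $g''(p^{*}) = -2Q^2/\sqrt{ab}$. Approximating the peaked sum by its Gaussian envelope centred at $k^{*} = np^{*}$ then gives
\begin{equation*}
\sum_{k} \binom{n}{k}^2 a^k b^{n-k} = \frac{Q^{2n}}{2\pi n\,p^{*}(1-p^{*})} \cdot \sqrt{\frac{2\pi n}{|g''(p^{*})|}}\bigl(1+O(1/n)\bigr),
\end{equation*}
and substituting the computed values collapses all $n$-independent factors into
\begin{equation*}
M(p_1,p_2) = \frac{1}{2\sqrt{\pi}\,\bigl(p_1 p_2(1-p_1)(1-p_2)\bigr)^{1/4}},
\end{equation*}
while the remaining $n$-dependence is exactly $Q^{2n+1} n^{-1/2}$, matching the claim.

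Two technical pieces make this rigorous. First, I would truncate to the window $|k - k^{*}| \le n^{1/2}\log n$ and bound the tails by a Chernoff/Hoeffding estimate; since $g$ is strictly concave with a unique interior maximum, the omitted contribution is at most $e^{-\Omega(\log^2 n)}\cdot Q^{2n}$, which is absorbed into the relative error. Second, inside the window I would Taylor-expand $g$ around $p^{*}$ through quartic order and retain the next term of Stirling's series; the odd-order corrections vanish upon Gaussian integration by parity, and the surviving even corrections together with the Euler--Maclaurin discretization adjustment contribute $O(1/n)$ to the relative error. The main obstacle is bookkeeping rather than any new idea: one must track the subleading Stirling term, the cubic and quartic Taylor terms of $g$, and the sum-to-integral correction simultaneously and verify that their combined effect is $O(1/n)$, not merely $O(n^{-1/2})$. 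Once this accounting is done, the saddle-point expansion yields both the explicit constant $M(p_1,p_2)$ and the stated $(1 + O(1/n))$ error.
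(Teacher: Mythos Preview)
Your proposal is correct and arrives at the same explicit constant $M(p_1,p_2)=\bigl(2\sqrt{\pi}\,(p_1p_2(1-p_1)(1-p_2))^{1/4}\bigr)^{-1}$ as the paper, but via a different representation. The paper works with the difference $X=X_1-X_2$, writes down its cumulant generating function $K(s)=n\log(1-p_1+p_1e^{s})+n\log(1-p_2+p_2e^{-s})$, and applies the standard saddlepoint approximation formula $\Pr[X=0]\approx e^{K(s^*)}/\sqrt{2\pi K''(s^*)}$ at the tilting point $e^{s^*}=\sqrt{p_2(1-p_1)/(p_1(1-p_2))}$; this is a one-page black-box computation. You instead expand $\Pr[X_1=X_2]=\sum_k\binom{n}{k}^2 a^k b^{n-k}$ and run Laplace's method by hand on the exponent $g(p)$, carrying the Stirling corrections and Euler--Maclaurin adjustment to justify the $O(1/n)$ error. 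The two are the same idea seen from different sides: the CGF saddlepoint is Laplace's method on the inverse-transform integral, while yours is Laplace's method on the convolution sum directly. Your route is more self-contained and makes the error control explicit, at the cost of the bookkeeping you mention; the paper's route is shorter but relies on the reader accepting the $(1+O(1/n))$ term of the saddlepoint formula as given.
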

 
By Lemma~\ref{lem:saddlepoint}, the difference between the utilities is dominated by $Q$. If $Q(\sigp_{\atone}^0, \sigp_{\attwo}^1) > Q(\sigp_{\atone}^1, \sigp_{\attwo}^0)$, then $\ut_i(\stgp^*, \Sigv_i) < \ut_i(\stgp, \Sigv_i)$ for all sufficiently large $n$. Thus, we obtain a necessary condition: informative voting is a BNE only if $Q(\sigp_{\atone}^0, \sigp_{\attwo}^1) \le Q(\sigp_{\atone}^1, \sigp_{\attwo}^0)$. 

Then, we apply the same reasoning in the ``opposite'' scenario: $i$ gets signal $\Sigv_i = [0,1]$ and deviates to $[0, 1]$. By symmetry, we get a similar yet opposite necessary condition: informative voting is a BNE only if $Q(\sigp_{\atone}^0, \sigp_{\attwo}^1) \ge Q(\sigp_{\atone}^1, \sigp_{\attwo}^0)$. Combining the two conditions completes the proof. 

The full proof of Theorem~\ref{thm:strategic_binary} appears in Appendix~\ref{apx:saddlepoint}.\end{sketch}

\subsection{Unit Cost: A Restrictive Necessary Condition }

A similar line of analysis extends to the general setting where alternatives have unit costs. We construct two ``opposite'' scenarios and identify the necessary conditions under which deviation is not profitable. The scenarios are designed so that any pivotal instance in which a deviation leads to a utility gain in the first scenario corresponds to one in which the same deviation leads to a utility loss in the second, and vice versa. Consequently, the most likely utility-decreasing pivotal events in the two scenarios must occur with comparable likelihood. This observation leads to the following result. 

\begin{thm}[informal]
\label{thm:strategic_unit}
Given an environment $\inst$ with unit cost, when $n\to\infty$, informative voting is a BNE only if $G^-_{\min}(\sigp) = G^+_{\min}(\sigp)$, where $G^+_{\min}$ and $G^-_{\min}$ are two functions that are inversely related to the likelihood of the most likely utility-decreasing pivotal events in the two scenarios. 
\end{thm}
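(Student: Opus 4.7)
The plan is to generalize the two-alternative analysis of Theorem~\ref{thm:strategic_binary} to the unit-cost setting with $m$ alternatives and cardinality constraint $B$ under $\av$. Starting from the informative profile $\stgp^*$, I fix an agent $i$ with signal $\Sigv_i$ and consider a unilateral deviation to some $\stg_i'$. The change in $i$'s expected utility decomposes as a weighted sum over quality vectors $\Quallist \in \qualset^\atn$ (with weights given by the posterior $\Pr[\Quallist \mid \Sigv_i]$) and over \emph{pivotal configurations}: joint realizations of the other $n-1$ agents' approval counts $\xrv_j \sim B(n-1, \sigp_j^{\Qual_j})$ in which $i$'s switch alters the set of alternatives occupying the top $B$ positions. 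Under unit costs and $\av$, pivotality requires ties among approval counts straddling the rank-$B$ cutoff, together with strict inequalities separating the remaining alternatives into definite winners and definite losers.

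The first main step is to enumerate these pivotal configurations and, for each one, to compute its asymptotic probability via a multivariate extension of Lemma~\ref{lem:saddlepoint}. Each such probability admits an expansion of the form $C \cdot \rho(\sigp,\Quallist)^{\,n} \cdot n^{-k/2} \cdot (1 + O(1/n))$, where $\rho$ generalizes $Q(p_1,p_2)$ and $k$ counts the equality constraints defining the configuration. I would then define $G^+_{\min}(\sigp)$ (respectively $G^-_{\min}(\sigp)$) as the minimum-magnitude (equivalently, slowest-decaying) exponent among all pivotal configurations that yield a utility \emph{gain} (respectively \emph{loss}) from the deviation under a prescribed scenario; these minima govern the dominant asymptotic contribution to the utility difference as $n\to\infty$, and are inversely related to the likelihoods stated in the theorem.

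The final step mirrors the symmetric pairing of Theorem~\ref{thm:strategic_binary}. For each candidate deviation I would exhibit a dual scenario—an ``opposite'' quality profile together with a mirrored deviation direction—in which every gain-side pivotal configuration swaps with a loss-side one, while the rates $\rho$ of the dominant configurations are preserved up to labeling. Applying the BNE condition $\ut_i(\stgp^*, \Sigv_i) \ge \ut_i((\stg_i', \stgp^*_{-i}), \Sigv_i)$ in both scenarios then produces two opposing asymptotic inequalities between $G^+_{\min}(\sigp)$ and $G^-_{\min}(\sigp)$; combining them forces the equality $G^+_{\min}(\sigp) = G^-_{\min}(\sigp)$ claimed in the theorem.

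The main obstacle is the combinatorial explosion of pivotal patterns for general $m$ and $B$: unlike the two-alternative case in which the single tie $\xrv_\atone = \xrv_\attwo$ captures every pivot, here many interleaved tie patterns compete, and the minimizer of $\rho$ on the gain (resp.\ loss) side depends delicately on $\sigp$ and on the conditioning quality vector. Constructing the opposite scenario so that this minimizer's role is exactly swapped—while keeping the contribution from non-dominant configurations negligible in the $n\to\infty$ limit—is the most delicate part of the argument. The multivariate saddle-point estimates themselves are a technically routine extension of Lemma~\ref{lem:saddlepoint}, but care is required to compare constants and polynomial corrections across configurations with different numbers of equality constraints $k$.
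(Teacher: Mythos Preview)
Your outline is essentially the paper's argument: two paired scenarios, a decomposition of the utility difference over pivotal events indexed by $(\calP,\Quallist)$, an exponential-rate estimate for each pivotal probability, and combination of the two resulting one-sided inequalities into an equality of minima. Two points where the paper's execution differs from your plan are worth noting.

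First, your proposed dual scenario is more elaborate than needed. You suggest an ``opposite quality profile together with a mirrored deviation direction''; the paper instead fixes a single alternative (say alternative $1$, disfavored by the tie-break) and simply flips the deviation direction. Scenario~1 has signal $\Sigv_i=\vec{0}$ and deviates to approving only~$1$; scenario~2 has signal approving only~$1$ and deviates to $\vec{0}$. Both scenarios perturb only alternative~$1$'s approval count, in opposite directions, so the \emph{same} family of pivotal events $(\calP,\Quallist)$---where $\calP$ partitions $\atn\setminus\{1\}$ by whether $\xrv_j$ is $>$, $=$, or $<$ $\xrv_1$---appears in both, with the gain set $\piv^+$ and loss set $\piv^-$ swapped automatically. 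No quality-profile flip is constructed; the quality vectors are already ranged over inside $\piv^\pm$, and only the posterior weights $\Pr[\Quallist\mid\Sigv_i]$ differ between the two scenarios (and these are $n$-independent).

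Second, the paper's asymptotic estimate (Lemma~\ref{lem:strategic_unit}) is a large-deviations statement $\Pr[\calP\mid\Quallist]=e^{-nG(\calP,\Quallist,\sigp)+O(\ln n)}$ with $G$ built from KL divergences, rather than a multivariate saddlepoint extension of Lemma~\ref{lem:saddlepoint}. The content is the same (your $\rho^n$ is $e^{-nG}$), but the KL formulation handles the mixture of equality constraints ($j\in\atn_=$) and inequality constraints ($j\in\atn_>\cup\atn_<$) cleanly: only the ``hard'' inequalities (those with $p_j$ on the wrong side of the tie value) enter the rate. This also dissolves your stated obstacle about comparing polynomial prefactors across configurations with different numbers of ties: since all sub-exponential factors are absorbed into $O(\ln n)$, only $\min G$ on each side matters, and the finitely many (at most $(3\bar\ell)^m$) terms in each of $\piv^+,\piv^-$ pose no difficulty.
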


 Theorem~\ref{thm:strategic_unit} does not give a direct zero-measure result like in the binary case, but, intuitively, the strict equality is still restrictive. We illustrate this intuition by a numerical simulation. For each setting, we emulate all the pivotal events and randomly sample 1000 information structures. For each information structure $\sigp$, we test if $G^-_{\min}(\sigp)$  and $G^+_{\min}(\sigp)$ are equal (under a small error tolerance). In the setting with $m=5$ alternatives and $B=2$ winners, the condition holds in 7 out of 1000 samples; in the setting with $m = 6$ and $B = 3$, the condition holds in 2 out of 1000 samples. This simulation result supports our intuition that the necessary condition is indeed restrictive. The formal statement and the full proof of Theorem~\ref{thm:strategic_unit} can be found in Appendix~\ref{apx:strategic_unit}, and more details of the simulation are in Appendix~\ref{apx:rarity}.

\section{Experiments}
We conduct experiments to evaluate the average-case performance of different voting rules across various environments. In each experiment, we fix the number of agents $n$, the number of alternatives $m$, the cost ratio $\alpha$ (and fix the min-cost at 1), the budget $B$, the utility function $\vt$, and the quality range $\qualset$. We first randomly generate the costs and qualities of alternatives and the information structure, and draw agents' signals. The detailed generation process is in Appendix~\ref{apx:experiment}.

We consider eight rules: $\av$, $\avcost$, $\pav$, $\gc$, $\phr$, $\mes$, $\mes$+$\av$, and $\mes$+$\phr$. For each environment, we generate 10000 instances and take the average ratio between the utility of the voting outcome and the optimum utility as the empirical performance. 

We evaluate the performance of the proposed rules across multiple environments. In the unit-cost setting ($\alpha=1$; Figure~\ref{fig:performance_1}), most rules converge rapidly to performance 1 as $n\to\infty$, consistent with Theorem~\ref{thm:unit_positive}. However, $\mes$ may stop before fully spending the budget, leading to utility loss, and $\gc$ exhibits notably slow convergence to 1 (see Figure~\ref{fig:gc} in the Appendix). In the general-cost setting ($\alpha=5$; Figure~\ref{fig:performance_5}), performance remains below 1, but substantially exceeds the worst-case upper bound of approximately 0.57 by Corollary~\ref{coro:general_impossibility_a}. More surprisingly, we find that $\mes$+$\av$ has the highest performance among all rules, guaranteed by a paired $t$-test with 95\% confidence. 
\begin{figure}[H]
    \centering
\includegraphics[width=0.99\linewidth]{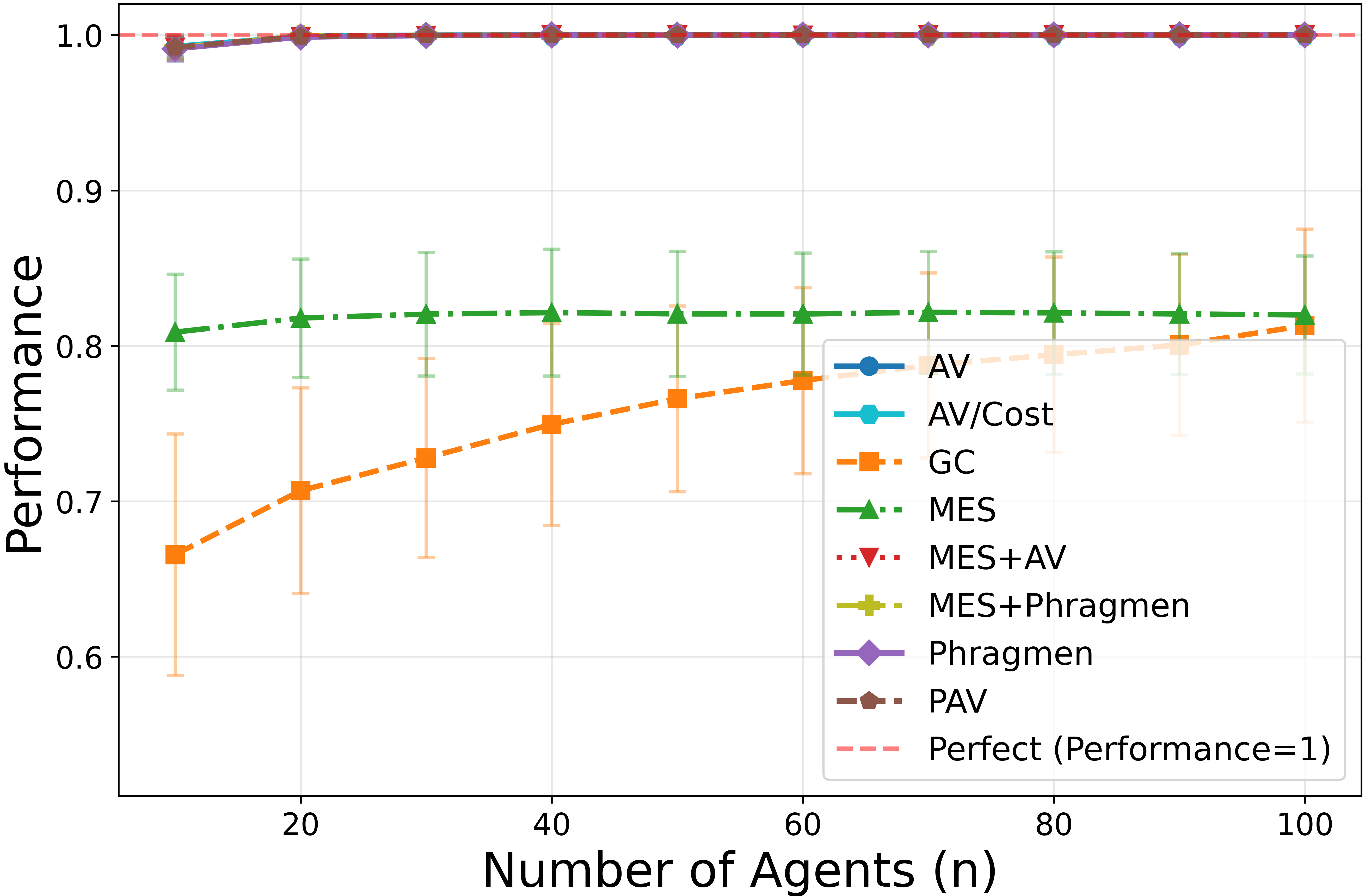}
    \caption{Performance with unit cost ($\alpha = 1$)}
    \label{fig:performance_1}
\end{figure}
 \begin{figure}[H]
    \centering
\includegraphics[width=0.99\linewidth]{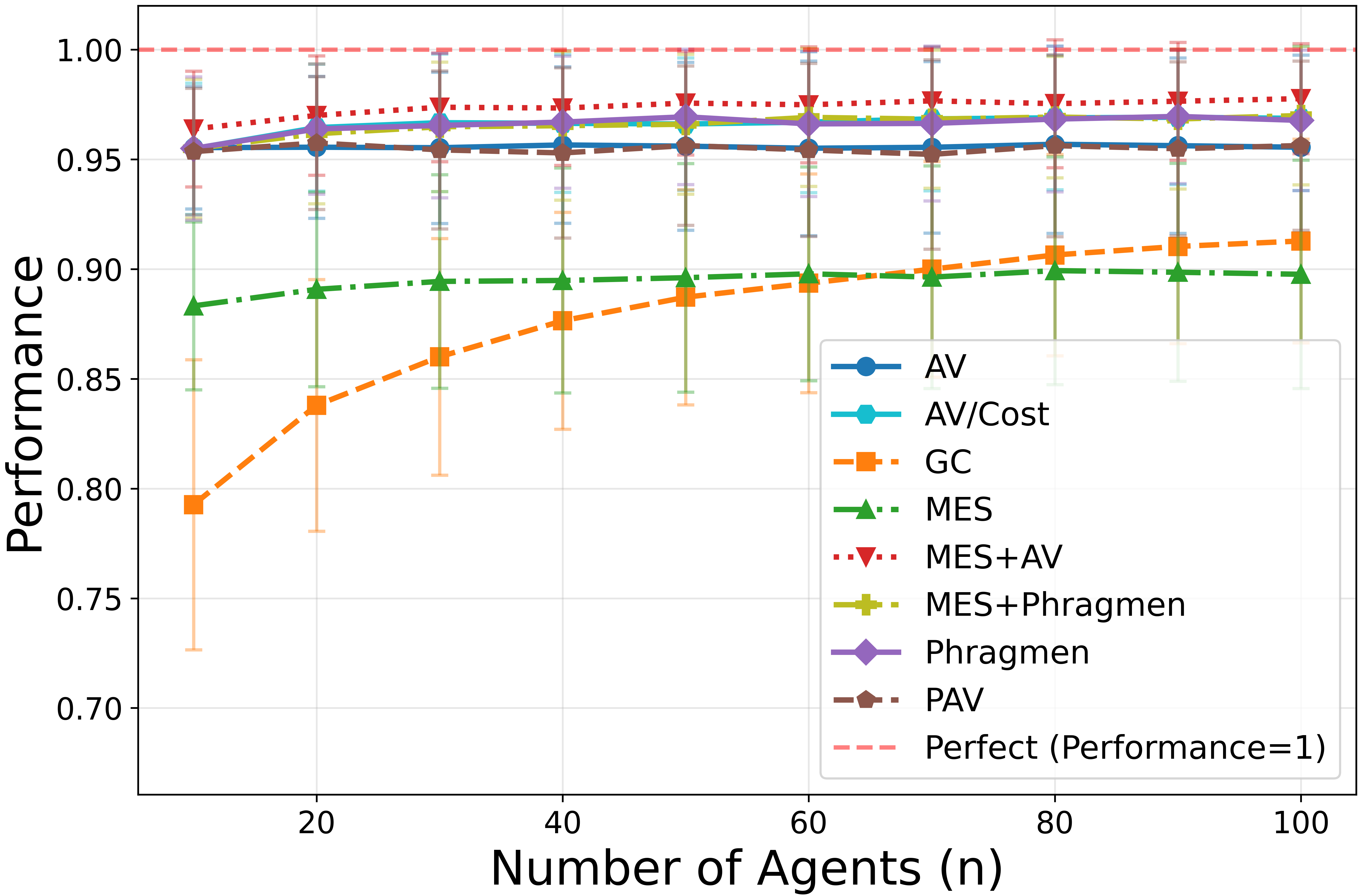}
    \caption{Performance with general cost ($\alpha = 5$)}
    \label{fig:performance_5}
\end{figure}

\begin{figure}[H]
    \centering
\includegraphics[width=0.99\linewidth]{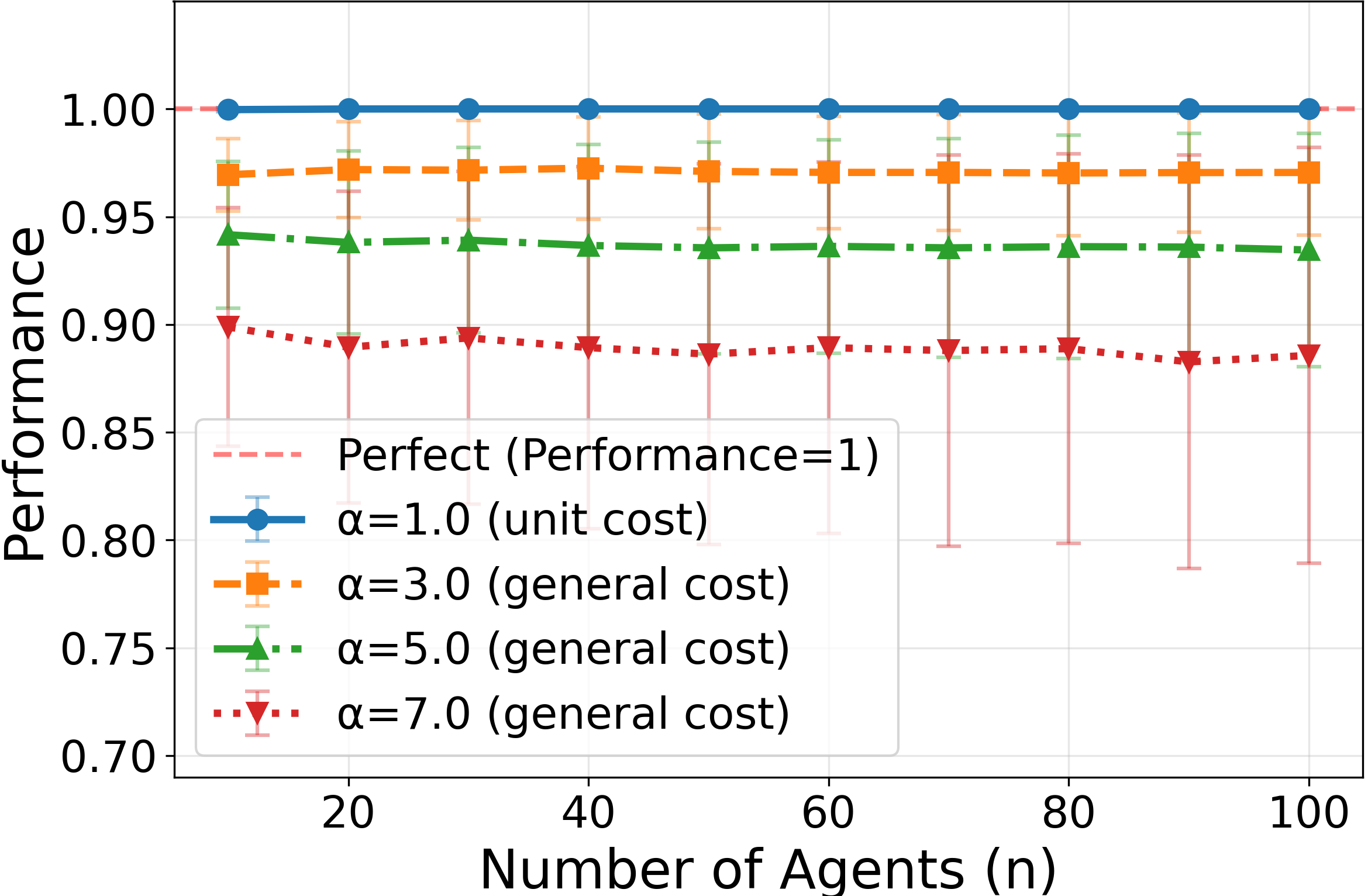}
    \caption{Performance on different $\alpha$ ($\av$ rule, $B = 7$)}
    \label{fig:alpha}
\end{figure}

We further analyze the impact of the cost ratio (Figure~\ref{fig:alpha}). In line with our theoretical bounds (Corollary~\ref{coro:general_impossibility_a} and Theorem~\ref{thm:general_positive}, performance generally decreases as $\alpha$ increases. A key open question concerns the underlying cause of this degradation: whether it is driven primarily by tighter budget constraints (compared to the total cost) or by increased price dispersion. Additional experimental results, including the effects of budget size, utility specifications, and other factors, are provided in Appendix~\ref{apx:experiment}.





\section{Future Work}
We view this epistemic perspective as a promising direction with substantial room for further exploration. Our current framework assumes that all agents share the same ground-truth utility function. A natural next step is to unify the epistemic and axiomatic perspectives and investigate ``informed participatory budgeting,'' where agents hold heterogeneous utility functions conditioned on the ground-truth quality, and the objective is to achieve fair and efficient outcomes as if the ground truth were known. Additional research directions include designing rules with stronger performance guarantees, conducting beyond-worst-case analysis, and studying strategic behavior in the general-cost setting.

\bibliographystyle{named}
\bibliography{references,new_ref}

\clearpage
\onecolumn
\appendix
\section{Definition of Voting Rules.}
\label{apx:rules}

\begin{dfn}[Approval Voting ($\av$)]
Given a voting profile $\Rp$, $\av$ orders alternatives in non-increasing order of their approval counts, i.e., by $|\{i \in [n] : \Rp_{ij} = 1\}|$. It then selects alternatives greedily according to this order, adding an alternative whenever doing so does not violate the budget constraint, until no further alternative can be added. Ties are resolved using a fixed tie-breaking rule.
\end{dfn}

\begin{dfn}[Approval Voting per cost ($\avcost$)]
    Given a voting profile $\Rp$, $\avcost$ orders alternatives in non-increasing order of their approval-per-cost ratios, that is, 
    $$
    \frac{|\{i \in [n] : \Rp_{ij} = 1\}|}{c_j}.
    $$
    It then selects alternatives greedily according to this order, adding an alternative whenever doing so does not violate the budget constraint, until no further alternative can be added. Ties are resolved using a fixed tie-breaking rule.
\end{dfn}
    
\begin{dfn}[Proportional Approval Voting ($\pav$)]
Given a voting profile $\Rp$, for any feasible set of alternatives $\wn$ with $c(\wn) \le B$, define its PAV score as
\[
\pav(\wn) \;=\; \sum_{i \in [n]} \sum_{t=1}^{|\wn \cap \{j \in \atn \mid \Rp_{ij} = 1\}|} \frac{1}{t}.
\]
The rule $\pav$ selects a feasible set $\wn$ that maximizes the PAV score. If several sets achieve the maximum score, a tie-breaking rule is applied.
\end{dfn}
    
\begin{dfn}[Greedy Cover ($\gc$)]
Given a voting profile $\Rp$, $\gc$ proceeds iteratively as follows. Initially, all agents are unsatisfied and the winning set is empty. At each step, among the alternatives that can be added without exceeding the remaining budget, the rule selects an alternative $j$ that is approved by the largest number of currently unsatisfied agents, i.e., that maximizes
\[
|\{ i \in [n] : \Rp_{ij} = 1 \text{ and } i \text{ is unsatisfied} \}|.
\]
After selecting $j$, all agents approving $j$ become satisfied and are removed from further consideration. The process continues until adding any alternative will exceed the budget constraint. Ties are resolved using a fixed tie-breaking rule.
\end{dfn}
    
\begin{dfn}[Method of Equal Shares ($\mes$)]
Given a voting profile $\Rp$, $\mes$ initially assigns each agent $i \in [n]$ an equal budget share $b_i = B/n$. The rule proceeds iteratively. At each step, among all alternatives that can be funded by their approving agents, $\mes$ selects an alternative $j$ that minimizes the maximum individual payment required from any approving agent, subject to the constraint that the alternative is affordable. Formally, for an alternative $j$, let $p_j$ be the minimum value such that
\[
\sum_{i \in \{k \in [n] : \Rp_{kj} = 1\}} \min(b_i, p_j) \;\ge\; c_j.
\]
The rule selects an alternative with minimal $p_j$, subtracts the corresponding payments from the budget shares of the approving agents, and adds $j$ to the winning set. This process continues until no alternative can be funded. The rule may leave some budget unspent; in this case, an additional voting rule may be applied to exhaust the remaining budget among the unselected alternatives. Ties are resolved using a fixed tie-breaking rule.
\end{dfn}
    
\begin{dfn}[Phragmén ($\phr$)] The $\phr$ rule has two equivalent representations, one in a continuous formation, and the other in a discrete formation.

\textbf{Continuous Formation.} The rule is modeled as a dynamic process where each voter accumulates voting power at a uniform rate over time. Starting from zero at $t=0$, each agent's voting power increases linearly until an alternative’s approvers collectively possess the total cost of the alternative. At the instant this threshold is met, the alternative is added to the committee, and the voting power of their approvers is reset to zero to reflect the expenditure of their voting power. Conversely, agents who do not approve the selected alternative retain their accumulated voting power. This sequential selection continues until the committee reaches its target budget, with ties resolved by a predefined ordering.

\textbf{Discrete Formation.} Given a voting profile $\Rp$, (sequential) $\phr$ iteratively constructs a winning set $W$ starting from $W = \emptyset$ with initial agent loads $O_i = 0$. In each iteration, we select the alternative $j^* \in \mathcal{A} \setminus W$ that maintains the budget constraint $\sum_{k \in W \cup \{j^*\}} c_k \le B$ and minimizes the value $x$ satisfying $\sum_{i \in [n]: \Rp_{i,j}=1} \max(0, x - O_i) = c_j$; subsequently, we update $W \leftarrow W \cup \{j^*\}$ and set $O_i \leftarrow \max(O_i, x)$ for all $i$ such that $\Rp_{i,j^*} = 1$, repeating the process until no further alternatives can be affordably added.
\end{dfn} 

\section{Proof of Theorem~\ref{thm:unit_positive}}
\label{apx:unit_positive}
\begin{proof}    
      \myparagraph{Case 1: Approval Voting ($\av$).}
      A mistake occurs only if there exists a welfare-maximizing winning set 
      $\wn^\star \in \wnmax$, an alternative $j \in \wn^\star$, and an alternative 
      $\ell \notin \wn^\star$ such that the number of approvals for $\ell$ is at least the number of approvals for $j$. Since $\av$ selects the $k$ alternatives with highest approval counts, $\av(\Rp)\notin\wnmax$ implies that there exists $j\in\wnmax$ and $\ell\notin\wnmax$ such that $A_\ell \ge A_j$, where $A_j = |\{i \in [n] : \Rp_{ij} = 1\}|$ and $A_\ell = |\{i \in [n] : \Rp_{i\ell} = 1\}|$ denote the approval counts for alternatives $j$ and $\ell$ respectively.
      
      Define
      \[
      D := \min_{\wn^\star \in \wnmax}\;
     \min_{j\in\wn^\star,\;\ell\notin\wn^\star}
     \bigl(\sigp_j^{\Qual_j}-\sigp_\ell^{\Qual_\ell}\bigr).
      \]
      By quality dominance, $D>0$.
      Fix $\wn^\star\in\wnmax$, $j\in\wn^\star$, and $\ell\notin\wn^\star$, and define $D' := A_j - A_\ell$.

      For each agent $i$, define
      \[
        X_i := \mathbf{1}\{\Rp_{ij} = 1\} - \mathbf{1}\{\Rp_{i\ell} = 1\}.
      \]
      Then $X_i \in [-1,1]$ and $D' = \sum_{i=1}^n X_i$. Applying Hoeffding's inequality:
      \[
        \Pr[D' \le 0]
        \le \exp\!\Big(-\tfrac{1}{2} n D^2\Big).
      \]
      By the union bound over all $\wn^\star\in\wnmax$ and all pairs $j\in\wn^\star$, $\ell\notin\wn^\star$,
      \[
        \Pr[\av(\Rp) \notin \wnmax] \le m^2 \exp\!\Big(-\tfrac{1}{2} n D^2\Big).
      \]
      Hence
      \[
        \Pr[\av(\Rp) \in \wnmax] \ge 1 - m^2 \exp\!\Big(-\tfrac{1}{2} n D^2\Big),
      \]
      which tends to $1$ exponentially fast as $n\to\infty$. This implies that $\lim_{n\to\infty} \informratio(\av, n) = 1$.
      
      \myparagraph{Case 2: Proportional Approval Voting ($\pav$).}
      For a subset $\wn\subseteq \atn$, define per-agent contribution
      \[
        \phi_\wn(\Rp_i) = \sum_{t=1}^{|\wn\cap \{j \in \atn \mid \Rp_{ij} = 1\}|} \frac{1}{t},
      \]
      and total score $\pav(\wn) = \sum_{i=1}^n \phi_\wn(\Rp_i)$. Let $s(\wn) = \expect[\phi_\wn(\Rp_i)]$ be the expected per-agent score (as agent signals are i.i.d).
      Given the quality vector $\Qual$, the explicit form of $s(W)$ is
      \begin{equation*}
          s(W) = \sum_{A \in W} \prod_{j \in A} \sigp_j^{\Qual_j} \prod_{j \in (\wn\setminus A)} (1 - \sigp_j^{\Qual_j}) \cdot H_{|A|}. 
      \end{equation*}
     
      Let $\wnset_k = \{\wn \subseteq \atn : |\wn|=k\}$ and define
      \[
        D_s := \min_{\wn\neq\wn^\star} \bigl(s(\wn^\star)-s(\wn)\bigr),
      \]
      Let $\wnmax := \arg\max_{\wn\in\wnset_k} s(\wn)$ and define
      \[
      D_s := \min_{\wn^\star\in\wnmax}\;
           \min_{\wn\in\wnset_k\setminus\wnmax}
           \bigl(s(\wn^\star)-s(\wn)\bigr).
      \]

      Note that since $|\wn^*| = |\wn| = B$ and $\wn^* \in \wnmax$, we can create a mapping $f$ from $\wn^*$ to $\wn$, such that for all $j \in \wn^*$, $\Qual_j \ge \Qual_{f(j)}$ and thus $\sigp_j^{\Qual_j} \ge \sigp_{f(j)}^{\Qual_{f(j)}}$ (with at least one strictly larger), by quality dominance. Note that $s(w)$ solely depends on the signal distribution and increases when some $\sigp_j^{\Qual_j}$ increases. Therefore, $ D_s>0$.
      Under the quality dominance assumption, the welfare ordering over committees coincides with the ordering induced by $s(\wn)$. Hence $\wnmax$ is exactly the set of welfare-maximizing winning sets.

      Each $\phi_\wn(\Rp_i)\in [0, H_k]$ where $H_k = 1 + \tfrac12 + \cdots + \tfrac{1}{k}$ is the $k$-th harmonic number. Note that $\phi_\wn(\Rp_i)\le H_k$ since an agent can approve at most $k$ alternatives in $\wn$. Writing
      \[
      \pav(\wn)-n\cdot s(\wn) =\sum_{i=1}^n\bigl(\phi_\wn(\Rp_i)-\expect[\phi_\wn(\Rp_i)]\bigr),
      \]
      we obtain a sum of independent, bounded, mean-zero random variables.
      Hence, by Hoeffding's inequality, for each fixed $\wn$ and $\varepsilon>0$:
      \[
        \Pr\big[|\pav(\wn) - n \cdot s(\wn)| \ge n\varepsilon\big]
        \le 2\exp\!\Big(-\frac{2n\varepsilon^2}{(H_k)^2}\Big).
      \]
      Applying the union bound over $\binom{m}{k}$ possible $\wn$:
      \begin{align*}
    &\ \Pr\!\big[\forall \wn\in\wnset_k:\;|\pav(\wn) - n\cdot  s(\wn)| < n\varepsilon\big]\\
        &\ \ge 1 - 2\binom{m}{k}\exp\!\Big(-\frac{2n\varepsilon^2}{(H_k)^2}\Big).
      \end{align*}
      Choosing $\varepsilon = D_s/2$, we have
      \[
        \pav(\wnmax) - \pav(\wn) \ge n\cdot (s(\wnmax)-s(\wn)-2\varepsilon) \ge 0.
      \]
      Hence
      \[
        \Pr[\pav(\Rp) = \wnmax]
        \ge 1 - 2\binom{m}{k}\exp\!\Big(-\frac{2n D_s^2}{(H_k)^2}\Big),
      \]
      which tends to $1$ exponentially fast as $n\to\infty$. This implies that $\lim_{n\to\infty} \informratio(\pav, n) = 1$.
      
      \myparagraph{Case 3: Greedy Cover.}
      We prove this by induction on the number of selections $k$. Let $U_k \subseteq [n]$ denote the set of unsatisfied agents before the $k$-th
      selection, with $U_1 = [n]$, and $U_{k+1} = U_k \setminus \{i \in U_k : j_k \in \Rp_i\}$ where $j_k$ is the alternative selected at step $k$. Let $S_k$ denote the set of alternatives remaining unselected at step $k$, with $S_1=\atn$ and $S_{k+1}=S_k\setminus\{j_k\}$.

      \textbf{Base Case ($k=1$).} By Lemma~\ref{lem:gc_av}, for $k=1$, $\gc(\Rp) = \av(\Rp)$, so the result holds by Case 1.
      
      \textbf{Inductive Step.} Assume inductively that, with probability tending to $1$ as $n\to\infty$, there exists $\wn^\star\in\wnmax$ such that $\wn_k\subseteq\wn^\star$.
      
      At step $k+1$, the $\gc$ rule selects
      \[
      j^*_{k+1} = \arg\max_{j \in \atn \setminus \wn_k} |\{ i \in U_k : \Rp_{ij} = 1 \}|,
      \]
      and updates
      \[
      \wn_{k+1} = \wn_k \cup \{j^*_{k+1}\}, \quad
      U_{k+1} = U_k \setminus \{ i \in U_k : \Rp_{ij^*_{k+1}} = 1\}.
      \]
      
      \textit{Case 3a: $U_{k+1} \neq \emptyset$.} 
      the $(k+1)$-th selection chooses the highest approval alternative in the unsatisfied agents and remaining alternatives, e.g., $(U_k,S_k)$.
      Conditional on $U_k$, the approval signals of agents in $U_k$ remain independent and identically distributed, since satisfaction depends only on previously selected alternatives. Fix $\wn^\star\in\wnmax$ such that $\wn_k\subseteq\wn^\star$. Among the remaining alternatives $\wn^\star\setminus\wn_k$, quality dominance implies strictly higher expected approval from agents in $U_k$ than for any alternative in $S_k\setminus\wn^\star$. Applying Hoeffding’s inequality as in Case~1, the probability that a suboptimal alternative is selected at step $k+1$ decays exponentially in $|U_k|=\Theta(n)$.

      \textit{Case 3b: $U_{k+1} = \emptyset$.} Then all agents are already satisfied, and any further selection is immaterial to $\vt(\wn)$. In this case, $\gc$ selects arbitrarily among the remaining alternatives.
      
      By Lemma~\ref{lem:gc_unsatisfied}, $\Pr[\text{$\exists k$ such that $|U_k| = o(n)$}] \to 0$ as $n \to \infty$. Hence, with probability tending to $1$, $|U_k|=\Theta(n)$ at each step, so Case 3a applies.
      
      \textbf{Conclusion.} By Lemmas~\ref{lem:gc_unsatisfied} and~\ref{lem:gc_av} and induction, each step of the $\gc$ rule asymptotically selects the best remaining alternative under $\vt(\wn)$, and thus $\lim_{n \to \infty} \informratio(\gc, n) = 1$.

\begin{lem}
  \label{lem:gc_unsatisfied}
  For every fixed budget $k$,
  \[
  \Pr[|U_k| = o(n)] \to 0 \quad \text{as } n\to\infty,
  \]
  where $U_k$ is the set of unsatisfied agents after $k$ selections under $\gc$.
\end{lem}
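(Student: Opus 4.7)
The plan is to prove the stronger quantitative statement that $|U_k|\ge \tfrac{1}{2}nq^{k}$ with probability tending to $1$, for an explicit positive constant $q$, which immediately rules out $|U_k|=o(n)$. The crux is that each $\sigp_j^{\qual}\in(0,1)$ and both $\atn$ and $\qualset$ are finite, so
\[
q\;:=\;1-\max_{j\in\atn,\,\qual\in\qualset}\sigp_j^{\qual}\;>\;0
\]
is a universal positive constant. Hence, for any fixed sequence of distinct alternatives $\mathbf{j}=(j_1,\ldots,j_k)$ and any agent $i$, the conditional probability (given the quality vector $\Quallist$) that $i$ approves none of $j_1,\ldots,j_k$ equals $\prod_{t=1}^{k}(1-\sigp_{j_t}^{\Qual_{j_t}})\ge q^{k}$, a positive constant independent of $n$.

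Next I would combine Hoeffding's inequality with a union bound over the at most $m^{k}$ possible length-$k$ sequences of alternatives. For each fixed $\mathbf{j}$, define the i.i.d.\ indicators $Z_i^{\mathbf{j}}:=\mathbf{1}[\Rp_{i,j_1}=0\wedge\cdots\wedge\Rp_{i,j_k}=0]$, each with mean at least $q^{k}$. Hoeffding's inequality then yields
\[
\Pr\!\Big[\textstyle\sum_{i=1}^{n}Z_i^{\mathbf{j}}<\tfrac{1}{2}nq^{k}\Big]\;\le\;\exp\!\Big(-\tfrac{1}{2}nq^{2k}\Big),
\]
and a union bound over all $\mathbf{j}$ gives
\[
\Pr\!\Big[\exists\,\mathbf{j}:\textstyle\sum_{i=1}^{n}Z_i^{\mathbf{j}}<\tfrac{1}{2}nq^{k}\Big]\;\le\;m^{k}\exp\!\Big(-\tfrac{1}{2}nq^{2k}\Big)\longrightarrow 0.
\]

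The conclusion follows by specializing to the (random) sequence $\mathbf{j}^{\star}=(j_1,\ldots,j_k)$ of alternatives actually selected by $\gc$: on the complement of the bad event above, $U_k=\{i:Z_i^{\mathbf{j}^{\star}}=1\}$ has cardinality at least $\tfrac{1}{2}nq^{k}=\Omega(n)$. The main subtlety to handle carefully will be the dependence of $\mathbf{j}^{\star}$ on the entire signal profile $\Rp$: one cannot directly treat $|U_k|$ as a sum of i.i.d.\ indicators with a fixed, known mean, since the sequence of alternatives against which agents are evaluated is itself random. The union bound sidesteps this by uniformly controlling the bad event over every realization of the selection, so the data-dependent choice $\mathbf{j}^{\star}$ is automatically covered.
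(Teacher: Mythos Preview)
Your proof is correct and takes a genuinely different route from the paper's. The paper argues by induction on the number of selections: at each step it conditions on the set $U_k$ and the selected alternative $j_k$, treats $|U_{k+1}|$ as a sum of $|U_k|$ independent Bernoulli variables with mean $1-\sigp_{j_k}^{\Qual_{j_k}}$, and applies Hoeffding to show $|U_{k+1}|=\Theta(|U_k|)$ with high probability. Your approach instead fixes the entire length-$k$ selection sequence in advance, bounds the number of agents approving none of those $k$ alternatives uniformly over all $m^k$ sequences, and then specializes to the realized sequence $\mathbf{j}^{\star}$. The advantage of your method is that it cleanly sidesteps the dependence between the selected alternative $j_k$ and the approvals within $U_k$ (since $j_k$ is chosen precisely to maximize approvals among $U_k$, conditioning on $j_k$ does not leave those approvals i.i.d.\ Bernoulli, a point the paper's inductive argument glosses over). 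Your union bound handles this dependence automatically and delivers an explicit quantitative lower bound $|U_k|\ge\tfrac{1}{2}nq^{k}$; the paper's inductive argument is slightly more intuitive step-by-step but less rigorous on exactly this point.
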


\begin{proof}[Proof of Lemma~\ref{lem:gc_unsatisfied}]
\textbf{Base Case.} For $k=1$, $|U_2|$ is binomial with parameters $(n,1-\sigp_{j^\star}^{\Qual_{j^\star}})$, where $j^\star$ is the selected alternative. By Hoeffding's inequality, for any $\epsilon>0$,
\[
\Pr[|U_2| \le \epsilon n] \le \exp(-\Theta(n)).
\]

\textbf{Inductive Step.} Assume the lemma holds for some $k$. Then $U_k \neq \emptyset$. Conditional on $U_k$ and the selected alternative $j_k$, $|U_{k+1}|$ is a sum of $|U_k|$ independent Bernoulli variables with mean $1-\sigp_{j_k}^{\Qual_{j_k}}$. By Hoeffding's inequality,
\[
\Pr[|U_{k+1}| \le \epsilon |U_k|] \le \exp(-\Theta(|U_k|)).
\]

\textbf{Conclusion.} Since the number of steps $k$ is fixed, applying the union bound over $k$ preserves exponential decay. Hence,
\[
\Pr[\exists k : |U_k| = o(n)] \to 0.
\]
\end{proof}

\begin{lem}
\label{lem:gc_av}
For $k = 1$,
\[
\gc(\Rp) = \av(\Rp).
\]
\end{lem}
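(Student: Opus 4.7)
The plan is to observe that when the cardinality constraint equals one, both rules reduce to the same single selection step and pick the alternative that maximizes the approval count among $[n]$.

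First I would unpack the definitions at $k=1$. Under $\gc$, initialization sets $U_1=[n]$, and the rule selects
\[
j^* = \arg\max_{j \in \atn} \bigl|\{i \in U_1 : \Rp_{ij} = 1\}\bigr|
    = \arg\max_{j \in \atn} \bigl|\{i \in [n] : \Rp_{ij} = 1\}\bigr|,
\]
because every agent is unsatisfied before the first selection. Under $\av$ with $k=1$, the rule orders alternatives by approval count $A_j = |\{i \in [n] : \Rp_{ij} = 1\}|$ and greedily picks the first (and, since $k=1$, only) alternative in this ordering, which is precisely an arg-max of $A_j$. Hence both rules optimize the same objective at their single selection step.

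Next I would handle tie-breaking. Since both $\av$ and $\gc$ are assumed in the paper to resolve ties using the same fixed tie-breaking rule, the two arg-max selections return the same alternative. Therefore $\gc(\Rp) = \av(\Rp)$, completing the proof.

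There is no real obstacle here: the statement is essentially a definitional unfolding, and the only subtlety is to note that the ``unsatisfied-agent'' objective of $\gc$ coincides with the plain approval count exactly when no alternative has been selected yet, which is always the case at $k=1$.
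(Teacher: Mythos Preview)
Your proposal is correct and essentially identical to the paper's own proof: both simply observe that at $k=1$ the unsatisfied set is all of $[n]$, so $\gc$'s objective coincides with the plain approval count, and hence the single selection matches $\av$'s choice. Your extra remark about tie-breaking is a harmless (and reasonable) addition that the paper leaves implicit.
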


\begin{proof}[Proof of Lemma~\ref{lem:gc_av}]
At $k = 1$, no agents have yet been satisfied, so $\gc$ selects
\[
j^* = \arg\max_{j \in \atn} |\{i \in [n] : \Rp_{ij} = 1\}|,
\]
which coincides with $\av$'s $k=1$ choice.
\end{proof}

\medskip 
\myparagraph{Set-up of Proof for $\phr$ and $\mes+\av$}
Fix an environment and a quality vector $\Qual$.
If $B\ge m$, then the unique maximal feasible set is $\atn$ and both $\phr$ and $\mes+\av$ output $\atn$, so the performance is $1$.
Hence assume $B\le m$. If $\max_{\wn\in\wnset}\vt(\wn)=0$, then $\vt(r(\Rp))=0$ for all rules and the ratio is $1$; hence assume the optimum is positive.

Write $p_j:=\sigp_j^{\Qual_j}$ and $A_j:=|\{i\in[n]:\Rp_{ij}=1\}|$.
Quality dominance implies a strict global gap
\[
D := \min_{\substack{j,\ell\in\atn\\ \qual>\qual'}}\bigl(\sigp_j^{\qual}-\sigp_\ell^{\qual'}\bigr) > 0,
\]
so in particular, whenever $\Qual_j>\Qual_\ell$ we have $p_j-p_\ell\ge D$.

Let $q^\star$ be the $B$-th highest value among $\{\Qual_j:j\in\atn\}$ (ties allowed) and define
\[
H:=\{j:\Qual_j>q^\star\}
\]
\[
T:=\{j:\Qual_j=q^\star\}
\]
\[
L:=\{j:\Qual_j<q^\star\}
\]
Then $\wnmax$ consists exactly of sets $H\cup S$ where $S\subseteq T$ and $|S|=B-|H|$.

Define the event
\[
\mathcal{E}:=\{A_j>A_\ell\ \text{for all }(j,\ell)\text{ with }\Qual_j>\Qual_\ell\}.
\]
Arguing as in Case~1 (AV) with gap $D$ and a union bound over at most $m^2$ ordered pairs, we have $\Pr(\mathcal{E})\to 1$.

\begin{lem}[Uniform within-type concentration]
\label{lem:within_type_conc}
We show that the approval vote dominance leading by quality dominance can be extended to certain subset of agents. Fix an environment and a quality vector $\Qual$, and write $p_j:=\sigp_j^{\Qual_j}$ for all $j\in\atn$.
For any $S\subseteq\atn$ with $|S|\le B$ and any $u\in\{0,1\}^S$, define
\[
N_{S,u}:=\bigl|\{i\in[n]:(\Rp_{ij})_{j\in S}=u\}\bigr|
\]
and \[
N_{S,u}(j):=\bigl|\{i\in[n]:(\Rp_{i\ell})_{\ell\in S}=u,\ \Rp_{ij}=1\}\bigr|
\]
for $j\notin S$. Let $\pi_{S,u}:=\prod_{\ell\in S} p_\ell^{u_\ell}(1-p_\ell)^{1-u_\ell}$ and define
\[
\alpha := \Bigl(\min\{\min_{j\in\atn}p_j,\ \min_{j\in\atn}(1-p_j)\}\Bigr)^B>0.
\]
Fix $\eta\in(0,\alpha/4)$ and let $\mathcal{C}(\eta)$ be the event that for all triples $(S,u,j)$ with $|S|\le B$ and $j\notin S$,
\[
\left|\frac{N_{S,u}}{n}-\pi_{S,u}\right|\le \eta
\qquad\text{and}\qquad
\left|\frac{N_{S,u}(j)}{n}-\pi_{S,u}p_j\right|\le \eta.
\]
Then $\Pr(\mathcal{C}(\eta))\to 1$ as $n\to\infty$. Moreover, on $\mathcal{C}(\eta)$ we have $N_{S,u}\ge (\alpha/2)n$ for all such $(S,u)$, and
\begin{equation}
\label{eq:within_type_rate}
\left|\frac{N_{S,u}(j)}{N_{S,u}}-p_j\right|\le \delta
\qquad\text{for all } |S|\le B,\ u\in\{0,1\}^S,\ j\notin S,
\end{equation}
where $\delta:=4\eta/\alpha$.
\end{lem}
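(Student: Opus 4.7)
The plan is to establish the high-probability event $\mathcal{C}(\eta)$ via Hoeffding's inequality combined with a polynomial union bound, and then deduce the two deterministic consequences on $\mathcal{C}(\eta)$ by elementary algebra. The key probabilistic input, from the informative-voting setup in Section~\ref{sec:informative}, is that $\Rp_i=\Sigv_i$ and, conditional on $\Quallist$, the coordinates $\Rp_{i\ell}$ are mutually independent both across $i\in[n]$ and across $\ell\in\atn$, with $\Pr[\Rp_{i\ell}=1]=p_\ell$. Hence the indicators $\mathbf{1}\{(\Rp_{i\ell})_{\ell\in S}=u\}$ are i.i.d.\ Bernoulli with mean $\pi_{S,u}$ (the product form comes for free from within-agent independence), and the indicators $\mathbf{1}\{(\Rp_{i\ell})_{\ell\in S}=u,\ \Rp_{ij}=1\}$ are i.i.d.\ Bernoulli with mean $\pi_{S,u}p_j$. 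So $N_{S,u}$ and $N_{S,u}(j)$ are each sums of $n$ i.i.d.\ Bernoulli variables.

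First I would apply Hoeffding's inequality to obtain, for each fixed triple $(S,u,j)$ with $|S|\le B$ and $j\notin S$,
\[
\Pr\!\left[\left|\tfrac{N_{S,u}}{n}-\pi_{S,u}\right|>\eta\right]\le 2e^{-2n\eta^2}, \qquad
\Pr\!\left[\left|\tfrac{N_{S,u}(j)}{n}-\pi_{S,u}p_j\right|>\eta\right]\le 2e^{-2n\eta^2}.
\]
The number of such triples is at most $m\sum_{k=0}^{B}\binom{m}{k}2^k = O(m^{B+1})$, which is polynomial in $m$ for fixed $B$. A union bound over these triples then yields $\Pr(\mathcal{C}(\eta)^c)\le O(m^{B+1})\cdot 2e^{-2n\eta^2}\to 0$ as $n\to\infty$.

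To derive the deterministic consequences on $\mathcal{C}(\eta)$, I would first observe that $p_\ell^{u_\ell}(1-p_\ell)^{1-u_\ell}\ge \min(p_\ell,1-p_\ell)$ coordinate by coordinate, so $\pi_{S,u}\ge \bigl(\min_j\min(p_j,1-p_j)\bigr)^{|S|}\ge \alpha$, using $|S|\le B$ and $\min_j\min(p_j,1-p_j)\le 1$. Together with the hypothesis $\eta<\alpha/4$, the concentration estimate gives $N_{S,u}/n\ge \pi_{S,u}-\eta\ge \alpha-\alpha/4\ge \alpha/2$, as claimed. For the ratio bound \eqref{eq:within_type_rate}, I would write
\[
\frac{N_{S,u}(j)}{N_{S,u}}-p_j \;=\; \frac{\bigl(N_{S,u}(j)-n\pi_{S,u}p_j\bigr)\;-\;p_j\bigl(N_{S,u}-n\pi_{S,u}\bigr)}{N_{S,u}},
\]
bound each numerator term in absolute value by $n\eta$ (using $p_j\le 1$), and divide by $N_{S,u}\ge(\alpha/2)n$ to obtain $|N_{S,u}(j)/N_{S,u}-p_j|\le 2n\eta/((\alpha/2)n)=4\eta/\alpha=\delta$.

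I do not anticipate a genuine obstacle here: every ingredient is standard (Hoeffding, union bound, and a ratio-of-concentrated-quantities calculation). The only point requiring a small amount of care is ensuring the union bound remains effective; this is guaranteed because $B$ and $m$ are fixed environment parameters, so the number of triples is polynomial in $m$ while the per-triple failure probability decays exponentially in $n$.
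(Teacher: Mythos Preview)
Your proposal is correct and follows essentially the same approach as the paper: Hoeffding for each fixed $(S,u,j)$, a union bound over the polynomially many triples, the observation $\pi_{S,u}\ge\alpha$ to get $N_{S,u}\ge(\alpha/2)n$, and then the ratio bound. Your write-up is actually more explicit than the paper's (which simply says the ratio bound ``follows by dividing the two inequalities and using $\eta<\alpha/4$''); in particular, your algebraic identity for $N_{S,u}(j)/N_{S,u}-p_j$ makes the $4\eta/\alpha$ constant transparent.
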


\begin{proof}
For each fixed triple $(S,u,j)$, both $N_{S,u}$ and $N_{S,u}(j)$ are sums of $n$ independent Bernoulli variables, so Hoeffding's inequality gives
probability at most $2\exp(-2\eta^2 n)$ of violating either bound. The number of triples is at most
$\sum_{s=0}^B \binom{m}{s}2^s(m-s)\le m3^m$, so a union bound yields $\Pr(\mathcal{C}(\eta))\to 1$.
On $\mathcal{C}(\eta)$, $\pi_{S,u}\ge \alpha$ implies $N_{S,u}\ge (\alpha/2)n$, and then the ratio bound
\eqref{eq:within_type_rate} follows by dividing the two inequalities and using $\eta<\alpha/4$. \qedhere
\end{proof}

Fix $\eta$ in Lemma~\ref{lem:within_type_conc} so that $\delta<\min\{D/8,\ \tfrac12\min_{j\in\atn}p_j\}$, and let $\mathcal{C}:=\mathcal{C}(\eta)$.
Then $\Pr(\mathcal{C})\to 1$.

\myparagraph{Case 4:Phragm\'en ($\phr$).}
We show that on $\mathcal{E}\wedge\mathcal{C}$, sequential Phragm\'en selects a set in $\wnmax$. We use the discrete formation of $\phr$ to finish the proof.

Fix a round $t\in\{1,\dots,B\}$ and let $S_{t-1}=W^{(t-1)}$ be the set already selected. Let $O_i^{(t)}$ be the load agent $i$ posses at step $t$
Partition agents into types indexed by $u\in\{0,1\}^{S_{t-1}}$:
\[
G_u:=\{i\in[n]:(\Rp_{ij})_{j\in S_{t-1}}=u\}.
\]
Under sequential Phragm\'en, for any fixed $t$, all agents in the same $G_u$ have the same load $O_u^{(t-1)}$, since loads change only
when an agent approves a selected alternative. Let
\[
\Sigma_{t-1} := \sum_{u\in\{0, 1\}^{S_{t-1}}} O_i^{(t-1)} \cdot |G_u| = \sum_{i=1}^n O_i^{(t-1)} = t-1
\]
by the definition of $\phr$. 

Fix a remaining alternative $j\notin S_{t-1}$ and recall that Phragm\'en computes
\[
\ell_t(j)=\frac{1+\sum_{i:\Rp_{ij}=1}O_i^{(t-1)}}{A_j}.
\]
On $\mathcal{C}$, applying \eqref{eq:within_type_rate} and summing over types yields
\[
(p_j-\delta)n \le A_j \le (p_j+\delta)n.
\]
Likewise, note that 
\begin{equation*}
    \sum_{i:\Rp_{ij}=1}O_i^{(t-1)} = \sum_{u\in\{0, 1\}^{S_{t-1}}} \sum_{i\in G_u, \Rp_{ij}=1} O_u^{(t-1)} = \sum_{u\in\{0, 1\}^{S_{t-1}}} O_u^{(t-1)}  \cdot |\{i\in G_u, \Rp_{ij}=1\}|. 
\end{equation*}
Since $G_u$ is exactly $N_{S_{t-1}, u}$ in Lemma~\ref{lem:within_type_conc}, applying \eqref{eq:within_type_rate} on each $G_u$, we have

\[
(p_j-\delta)\cdot |G_u|\le |\{i\in G_u, \Rp_{ij}=1\}| \le  (p_j+\delta)\cdot |G_u|
\]
Therefore, by adding up all $G_u$, we have 
\[
(p_j-\delta)\Sigma_{t-1}\le \sum_{i:\Rp_{ij}=1}O_i^{(t-1)}\le (p_j+\delta)\Sigma_{t-1}.
\]
Therefore, for every remaining $j$,
\begin{equation}
\label{eq:phrag_bounds}
\frac{1+(p_j-\delta)\Sigma_{t-1}}{(p_j+\delta)n}
\ \le\ 
\ell_t(j)
\ \le\ 
\frac{1+(p_j+\delta)\Sigma_{t-1}}{(p_j-\delta)n}.
\end{equation}

Now fix two remaining alternatives $j,\ell$ with $\Qual_j>\Qual_\ell$. Then $p_j\ge p_\ell + D$.
Since $\Sigma_{t-1}\le B$ and $\delta<D/8$, the function $p\mapsto (B+\tfrac{1}{p})$ is strictly decreasing in $p$ and has derivative bounded by $1/(\min p_j)^2$,
so the gap $p_j-p_\ell\ge D$ implies a separation in $(B+\tfrac{1}{p})$ of order $\Omega(D)$.
By choosing $\delta$ as above, the perturbations in \eqref{eq:phrag_bounds} are dominated by this separation, and hence
\[
\ell_t(j) < \ell_t(\ell).
\]
Thus, at each round $t$, every minimizer of $\ell_t(\cdot)$ among remaining alternatives has maximum quality among the remaining alternatives.

Consequently, sequential Phragm\'en selects all alternatives in $H$ before selecting any alternative in $T\cup L$, and selects from $T$ before selecting from $L$.
Since $|H\cup T|\ge B$ by definition of $q^\star$, all $B$ selections lie in $H\cup T$ and include all of $H$; hence the final set has the form
$H\cup S$ for some $S\subseteq T$ with $|S|=B-|H|$, i.e.\ $\phr(\Rp)\in\wnmax$ on $\mathcal{E}\wedge\mathcal{C}$.

Therefore $\Pr[\phr(\Rp)\in\wnmax]\ge \Pr(\mathcal{E}\wedge\mathcal{C})\to 1$, which implies
$\lim_{n\to\infty}\informratio(\phr, n)=1$.

\myparagraph{Case 5: Method of Equal Shares with AV completion ($\mes+\av$).}
Let $W_{\mes}$ be the set selected by the MES phase, and let $B':=B-|W_{\mes}|$.
We show that on $\mathcal{E}\wedge\mathcal{C}$, $(\mes+\av)(\Rp)\in\wnmax$.

Fix a step $t$ during the MES phase and let $S_t$ be the set selected so far.
Under MES, all agents start with $B/n$ and each selected alternative charges a deterministic amount to each approving agent who haven't exhausted their budget,
so the remaining share of an agent depends only on its approvals on $S_t$. Thus, for each $u\in\{0,1\}^{S_t}$,
all agents in $G_u$ have the same remaining share; denote it by $b_u\ge 0$.
Let $A_j$ be the approvers of a remaining alternative $j\notin S_t$.
For a cap $p\ge 0$, define
\[
F_j(p) := \sum_{i\in A_j} \min(b_i,p)
= \sum_{u\in\{0,1\}^{S_t}} \min(b_u,p)\,|G_u\cap A_j|.
\]
Also define the share-only quantity
\[
H(p) := \sum_{i=1}^n \min(b_i,p)
= \sum_{u\in\{0,1\}^{S_t}} \min(b_u,p)\,|G_u|.
\]
On $\mathcal{C}$ and using the within-type approximation,
$|G_u\cap A_j|=(p_j\pm\delta)|G_u|$, hence for all $p\ge 0$,
\begin{equation}
\label{eq:Fbounds}
(p_j-\delta)H(p)\ \le\ F_j(p)\ \le\ (p_j+\delta)H(p).
\end{equation}
In particular, $H(\infty)=\sum_i b_i = B-t$, so
\begin{equation}
\label{eq:Finfty}
(p_j-\delta)(B-t)\ \le\ F_j(\infty)\ \le\ (p_j+\delta)(B-t).
\end{equation}

MES deems $j$ fundable iff $F_j(\infty)\ge 1$, and for fundable $j$ it defines $p_j^{(t)}$ as the minimum $p$ such that $F_j(p)\ge 1$.
MES selects a fundable alternative minimizing $p_j^{(t)}$.

\textit{Claim 5a (Fundability is monotone in quality).}
Suppose some remaining alternative $\ell$ is fundable at step $t$, i.e.\ $F_\ell(\infty)\ge 1$.
Then by \eqref{eq:Finfty}, $(p_\ell+\delta)(B-t)\ge 1$.
For any remaining $j$ with $\Qual_j>\Qual_\ell$ we have $p_j\ge p_\ell+D$, and by \eqref{eq:Finfty} and $\delta<D/8$,
\[
F_j(\infty)\ \ge\ (p_j-\delta)(B-t)
\ \ge\ (p_\ell+D-\delta)(B-t)
\ \ge\ 1 + (D-2\delta)(B-t)
\ >\ 1.
\]
Hence $j$ is also fundable.

\textit{Claim 5b (Among fundable alternatives, the MES cap is monotone in quality).}
Let $j,\ell$ be remaining fundable alternatives with $\Qual_j>\Qual_\ell$.
Let $p_\ell^{(t)}$ be the MES cap of $\ell$ at step $t$, so $F_\ell(p_\ell^{(t)})\ge 1$.
Using \eqref{eq:Fbounds}, we have $(p_\ell+\delta)H(p_\ell^{(t)})\ge 1$, hence
$H(p_\ell^{(t)})\ge 1/(p_\ell+\delta)$.
Applying \eqref{eq:Fbounds} to $j$ and using $p_j\ge p_\ell+D$ gives
\[
F_j(p_\ell^{(t)}) \ \ge\ (p_j-\delta)H(p_\ell^{(t)})
\ \ge\ \frac{p_\ell+D-\delta}{p_\ell+\delta}
\ =\ 1 + \frac{D-2\delta}{p_\ell+\delta}
\ >\ 1.
\]
Therefore $p_j^{(t)}\le p_\ell^{(t)}$. Thus, among fundable alternatives, minimizing the MES cap selects an alternative of maximum quality.

By Claims 5a and 5b, whenever MES makes a selection at step $t$, it selects a remaining alternative from the highest remaining quality class.
Thus throughout the MES phase, $W_{\mes}\subseteq H\cup T$.
If $|W_{\mes}|=B$ then $W_{\mes}\in\wnmax$.
Otherwise, MES stops with $|W_{\mes}|<B$ and the rule completes using $\av$ on the remaining alternatives.
On $\mathcal{E}$, all remaining alternatives in $H$ have higher approval counts than all remaining alternatives in $T\cup L$,
and all remaining alternatives in $T$ have higher approval counts than all remaining alternatives in $L$.
Since $W_{\mes}\subseteq H\cup T$ and $|H\cup T|\ge B$, there are at least $B-|W_{\mes}|$ remaining alternatives in $H\cup T$,
so AV completion selects only from $H\cup T$. Therefore the final outcome is of the form $H\cup S$ with $S\subseteq T$ and $|S|=B-|H|$,
i.e.\ it lies in $\wnmax$.

Thus $(\mes+\av)(\Rp)\in\wnmax$ on $\mathcal{E}\wedge\mathcal{C}$, and hence
\[
\Pr[(\mes+\av)(\Rp)\in\wnmax] \ge \Pr(\mathcal{E}\wedge\mathcal{C}) \to 1,
\]
which implies $\lim_{n\to\infty}\informratio(\mes+\av, n)=1$.

\myparagraph{Case 6: $\mes$+$\phr$.} The proof resembles that of Case 5. 
\end{proof}



\section{Complimentary Proof for Theorem~\ref{thm:general_impossibility}}
\label{apx:general_impossible}
\paragraph{Construction for $\vt^C$.} Let $\atn = \{1, 2\}$, and $B = 1$. The costs are $c_1 = \eta$, $c_2=1$, where $\eta>0$ is a small positive value. The quality set and the signal distribution are the same as for $\vt^N$ (``other alternatives'' contains only alternative 2). Using a similar proof, we can show that the performance in at least one environment is strictly less than $\frac12 +\varepsilon$.  

\section{Complimentary Proof of Corollary~\ref{coro:general_impossibility_a}}
\label{apx:coro}

We can do small modifications on the proof of Theorem~\ref{thm:general_impossibility} to prove the corollary. 

\myparagraph{Normal Utility.} Let the number of alternative $m$ = $\lceil \alpha \rceil$. The alternative $\atn = \{1, 2, \cdots, m\}$. The budget $B = m$. The cost $c_1 = \alpha$, while $c_2 = c_3 = \cdots = c_{m} = 1$. The quality set is binary $\qualset = \{0, 1\}$. 

Now, given the voting rule $r$, we can write the expected utility under the voting rule in each scenario as follows, respectively. For the first scenario, 
\begin{align*}
    \expect_{\Rp \sim (\sigp^{\Quallist_1})^n} [\vt(r(\Rp))] = &\ \Pr[r(\Rp) = \{1\}].
\end{align*}
For the second scenario, 
\begin{align*}
    \expect_{\Rp \sim (\sigp^{\Quallist_2})^n} [\vt(r(\Rp))] = &\ \Pr[r(\Rp) = \{1\}]+ (\lceil \alpha \rceil-1)\cdot \Pr[r(\Rp) = (\atn \setminus \{1\})]
\end{align*}

Under a similar reasoning, the minimum of $\informratio_{\inst_1}(r, n)$ and $\informratio_{\inst_2}(r, n)$ is maximized when $\Pr[r(\Rp) = \{1\}] = \frac{\lceil \alpha \rceil-1}{2\lceil \alpha \rceil-3}$. Consequently, the upper bound for performance is also $\frac{\lceil \alpha \rceil-1}{2\lceil \alpha \rceil-3}$. 

\myparagraph{Cost-Proportional Utility} Let $\atn = \{1, 2\}$, and $B = 1$. The cost $c_1 = 1/\alpha$, and $c_2 = 1$. With similar proof, we can show that the performance in at least one environment is strictly less than $\frac{\alpha}{2\alpha - 1}$.

\section{Proof of Theorem~\ref{thm:general_positive}}
\label{apx:general_positive}
\begin{proof}
    We first give the proof for $\vt^N$. We first fix an arbitrary quality vector $\Quallist$ and consider the performance condition on $\Qual$. Let $\wn_{\Quallist}^*$ be the set of winners that maximizes the quality under $\Qual$. 

    \begin{claim}
        There exists a set of winning sets $\wnset_{\Quallist}$ such that: (1) $\Pr[\av(\Rp)) \in \wn_{\Quallist} \mid \Quallist] \ge 1 - \exp(-\Theta(n))$ for all sufficiently large $n$. (2) For every $\wn_{\Quallist} \in \wnset_\Quallist$, $\sum_{j \in \wn_{\Quallist}} \Qual_j \ge \frac{1}{\lceil\alpha\rceil} \sum_{j \in \wn_{\Quallist}^*} \Qual_j$. 
    \end{claim}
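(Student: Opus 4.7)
The plan is to define $\wnset_\Quallist$ as the family of feasible sets that can arise as $\av$'s output under tie-breakings consistent with the quality ranking, establish property (1) by concentration of approval counts, and derive property (2) via a level-set decomposition combined with a combinatorial counting lemma. Explicitly, I would let $\wn \in \wnset_\Quallist$ iff there is a permutation $\pi$ of $\atn$ with $\Qual_{\pi(1)} \ge \cdots \ge \Qual_{\pi(m)}$ such that the greedy scan of $\pi$ (adding each alternative whose cost fits the remaining budget and skipping otherwise) produces $\wn$.

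For property (1), set $D := \min\{\sigp_j^{\qual} - \sigp_{j'}^{\qual'} : j, j' \in \atn,\ \qual > \qual'\}$, which is strictly positive by quality dominance. For each ordered pair $(j, j')$ with $\Qual_j > \Qual_{j'}$, the expected approval gap is at least $nD$, so Hoeffding's inequality gives $\Pr[A_j \le A_{j'} \mid \Quallist] \le \exp(-nD^2/2)$. A union bound over the at most $m^2$ such pairs yields, with probability at least $1 - m^2 \exp(-nD^2/2) = 1 - \exp(-\Theta(n))$, the event that every ordered pair of alternatives with strict quality inequality also has the strict approval inequality. On this event, no matter how $\av$ breaks ties, its processing order is consistent with the quality order, and hence $\av(\Rp) \in \wnset_\Quallist$.

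For property (2), the first step is the combinatorial lemma: for any $S \subseteq \atn$, any maximal feasible $\wn \subseteq S$, and any feasible $\wn' \subseteq S$, one has $|\wn'| \le \lceil\alpha\rceil\,|\wn|$. Granting this, decompose the utility as $\vt^N(\wn) = \sum_{k=1}^{\topqual}|\wn \cap H_k|$ with $H_k := \{j \in \atn : \Qual_j \ge k\}$. For any $\wn_\Quallist \in \wnset_\Quallist$ generated by $\pi$, every element of $H_k$ appears in $\pi$ before any element of $\atn \setminus H_k$, so $\wn_\Quallist \cap H_k$ coincides with the greedy output on $H_k$ with budget $B$ and is therefore a maximal feasible subset of $H_k$. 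Since $\wn^*_\Quallist \cap H_k$ is a feasible subset of $H_k$, the lemma gives $|\wn^*_\Quallist \cap H_k| \le \lceil\alpha\rceil\,|\wn_\Quallist \cap H_k|$, and summing over $k$ yields $\vt^N(\wn^*_\Quallist) \le \lceil\alpha\rceil\,\vt^N(\wn_\Quallist)$, which is exactly (2).

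The main obstacle is proving the combinatorial lemma. Normalize so that $\min_{j \in S} c_j = 1$ and hence $\max_{j \in S} c_j \le \alpha$. Maximality of $\wn$ forces $c_j > B - c(\wn)$ for every $j \in S \setminus \wn$, so the feasibility bound $c(\wn' \setminus \wn) \le B$ yields $|\wn' \setminus \wn| < B/(B - c(\wn))$; likewise $|\wn| \ge c(\wn)/\alpha$. A case split on whether $B - c(\wn) \ge 1$ (the cost-gap constraint is binding) or $B - c(\wn) < 1$ (where the integer lower bound $c_j \ge 1$ takes over), together with the standing assumption $B \ge \max_j c_j$ and the integrality of cardinalities, should yield $|\wn'| \le \lceil\alpha\rceil\,|\wn|$; the ceiling rather than $\alpha$ itself arises from rounding the cost-per-item ratio to an integer.
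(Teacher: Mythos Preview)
Your argument for property~(1) is essentially the same as the paper's: both apply Hoeffding's inequality pairwise across the quality gap and take a union bound over at most $m^2$ pairs. Your definition of $\wnset_{\Quallist}$ via quality-consistent orderings is slightly coarser than the paper's (which uses the induced ordering of expected approval probabilities $\sigp_j^{\Qual_j}$), but under quality dominance the two coincide up to ties within a quality level, and either suffices.

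For property~(2) you take a genuinely different route. The paper builds a direct $1$-to-$\lceil\alpha\rceil$ map from $\wn_{\Quallist}$ to $\wn_{\Quallist}^*$ (both ordered by the greedy ranking) and argues, via a somewhat intricate case analysis, that each $j_t \in \wn_{\Quallist}$ has quality at least that of every alternative in its image block $J_t$; the cardinality comparison $|\wn_{\Quallist}^*| \le \lceil\alpha\rceil\,|\wn_{\Quallist}|$ is supplied by a separate lemma (the paper's Lemma~4) to ensure $\wn_{\Quallist}^*$ is exhausted. Your level-set decomposition $\vt^N(\wn) = \sum_k |\wn \cap H_k|$ is cleaner and sidesteps the casework entirely, reducing everything to a single combinatorial fact: if $\wn$ is maximal feasible in $S$ and $\wn'$ is feasible in $S$, then $|\wn'| \le \lceil\alpha\rceil\,|\wn|$. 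Your observation that greedy run on a quality-sorted order produces, when restricted to $H_k$, a maximal feasible subset of $H_k$ is correct and is the key structural point.

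The one place where your sketch is incomplete is the combinatorial lemma itself. Your case split on $B - c(\wn) \gtrless 1$ is the right idea but you do not close it. In fact the lemma follows from the paper's own Lemma~4 applied to $S$ rather than $\atn$: renormalize costs by $\min_{j\in S} c_j$ so that the minimum cost inside $S$ is $1$, the cost ratio inside $S$ becomes $\alpha' \le \alpha$, and the budget becomes $B' = B/\min_{j\in S} c_j \ge \alpha'$; the paper's proof of Lemma~4 only uses maximality of one side (the upper bound $|\wn'| \le \lfloor B' \rfloor$ needs only feasibility), so it gives $|\wn'| \le \lceil\alpha'\rceil\,|\wn| \le \lceil\alpha\rceil\,|\wn|$. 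With this observation your proof goes through, and is arguably more transparent than the paper's mapping argument.
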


    We construct $\wnset_{\Quallist}$ as follows. Note that $\av$ relies only on the ordinal relations between the approval levels of alternatives when determining the winner set. Therefore, given a full ranking $\ranking$ over $\atn$, we slightly abuse notation and let $\av(\ranking)$ denote the winners as $\av$ follows $\ranking$ to greedily select the winners.  Then, let $\rankset_\Quallist = \{\ranking \mid j \succ_{\ranking} j'  \textrm{ if } \sigp_j^{\Qual_j} > \sigp_{j'}^{\Qual_j}\}$. That is, $\rankset_\Quallist$ includes all the rankings that respect strict ordinal relationships between expected approval votes on alternatives. For $\sigp_j^{\Qual_j} = \sigp_{j'}^{\Qual_j}$, $\rankset$ includes both rankings in which $j$ and $j'$ ranks higher respectively. Then, $\wnset_\Quallist = \{\wn \mid \exists \ranking \in \rankset_\Quallist \textrm{ s.t. } \av(\ranking) = \wn\}$. 

    Now we are ready to prove the claim. For (1), for any $\ranking \not \in \rankset$, there must exists some $j$ and $j$' such that $\sigp_j^{\Qual_j} > \sigp_{j'}^{\Qual_j}$ but $j' \succ_\ranking j$. Then, if agents' report $\Rp$ induces $\ranking$, there must be more approval votes for $j'$ than $j$ in $\Rp$. Given that $\sigp_j^{\Qual_j} > \sigp_{j'}^{\Qual_j}$, Hoeffding Inequality guarantees that this happens with probability at most $\exp(-\Theta(n))$. Then by applying union bound on all possible $\ranking \not \in \rankset$ (which is at most $m!$ of them), the probability $\wnset_{\Quallist}$ such that: (1) $\Pr[\av(\Rp)) \in \wn_{\Quallist} \mid \Quallist] \ge 1 - \exp(-\Theta(n))$. 

    For (2), we fix an arbitrary $\wn_{\Quallist} \in \wnset_\Quallist$. Let $\ranking$ be the full ranking on $\atn$ such that $\av(\ranking) = \wn_{\Quallist}$, where an alternative $j$ with larger $\wn_{\Quallist}^*$ ranks higher. Then, let $\ranking_\Quallist$ and $\ranking^*$ be the sub-ranking of $\ranking$ on $\wn_{\Quallist}$ and $\wn_{\Quallist}^*$, respectively.
    Then we consider the following procedure. In each step $t$, remove the highest-ranked alternative in $\ranking_\Quallist$ (denoted as $j_t$) as well as $\lceil\alpha\rceil$ alternatives from $\ranking$ (denoted as $J_t$) (if there is less than $\lceil \alpha \rceil$ alternatives, take all of them). Repeat this procedure until either $\ranking_\Quallist$ or $\ranking^*$ is depleted. 

    We have the following observations. Firstly, By Lemma~\ref{lem:cardinal}, there must be $\lceil\alpha \rceil \cdot |\wn| \ge |\wn_{\Quallist}^*|$. 
    Therefore, $\ranking^*$ will always be depleted when the procedure ends. 
    
    Secondly, for every $t$ and $j \in J_t$, there must be $\Qual_{j_t} \ge \Qual_j$. 
    Suppose this is not the case, and there exists some $j' \in J_t$ such that $L_{j'} > L_{j_t}$. By the quality dominant assumption, there must be  $\sigp_{j'}^{\Qual_{j'}} > \sigp_{j_t}^{\Qual_{j_t}}$. This implies that $j' \succ_\ranking j_t$. According to the procedure of $\av$, $j_t$ is exactly $t$-th alternative selected into $\wn_{\Quallist}$. Therefore, $j'$ must be one of the following two cases. (1) $j' \not\in \wn_{\Quallist}$, or (2) $j' \in \{j_1, j_2, \cdots, j_{t-1}\}$. We show contradiction for each case. For the first case, Suppose $j'$ is ranked between $j_{t'}$ and $j_{t'+1}$ with $t' \in \{1, 2, \cdots, t -1\}$. According to the procedure of $\av$, $j' \not\in \wn_{\Quallist}$ implies that $c_{j'} + \sum_{k = 1}^{t'} c_{j_k} > B$. On the other hand, for $t' = 1, 2,\cdots, t -1$ each $J_{t'}$ includes $\lceil\alpha\rceil$ items. Therefore, 
    \begin{equation*}
        c(\wn_{\Quallist}^*) \ge c_{j'} + \sum_{k = 1}^{t'} c(J_k) \ge  c_{j'} + \sum_{k = 1}^{t'} c_{j_k} > B,
    \end{equation*}
    which is a contradiction. 

    For the second case, suppose $j' = j_{t'}$ with $t' < t$. Firstly, $t' \neq 1$ because $j_1$ is ranked the top in $\ranking$ and must be in $J_1$ if it's in $\wn_{\Quallist}^*$. Now we consider $t' > 1$. Firstly, since $J_1$ to $J_{t'}$ contains $\lceil \alpha \rceil \cdot t'$ alternatives, yet $j_{t'}$ is ranked $t'$-th in $\wn_{\Quallist}$, there must exists an alternative $j'' \not\in \wn_{\Quallist}$ and $k' <= t'$ such that $j'' \in J_{k'}$. In this case, there must be $j'' \succ_\ranking j_{t'}$ by $t' < t$. Then we have the following.  
    \begin{align*}
        c(\wn_{\Quallist}^*) \ge &\ c_{j'} + \sum_{k = 1}^{t'} c(J_k) \\
        \ge &\ c_{j_{t'}} + t' \cdot \max_{j\in\atn} c_j \\
        \ge &\ \sum_{k =1}^{t'} c_{j_{k}} + c_{j''}\\
        >&\ B.
    \end{align*}
    The third inequality holds by reducing the max costs to $c_{j_k}$ for $k$ from 1 to $t'$-1 and $c_{j''}$. The last inequality holds because $j'' \succ_\ranking j_t'$ and $j'' \not\in \wn_{\Quallist}$ implies that, in the $\av$ procedure, $j''$ is not picked at its turn because of exceeding the remaining budget. Therefore this is also a contradiction. 

    Consequently, we have $\vt(j_{t}) \ge \lceil \alpha\rceil\cdot  \vt(J_t)$ for each $t$. Since $\ranking^*$ is always depleted, there must be $\vt(\wn_{\Quallist}) \ge \lceil \alpha\rceil \cdot  \vt(\wn_{\Quallist}^*)$, which finishes the proof.


    \myparagraph{Proof for $\vt^C$.}  The proof differs at (2): For every $\wn_{\Quallist} \in \wnset_\Quallist$, $\sum_{j \in \wn_{\Quallist}} \Qual_j \ge \frac{1}{\lceil\alpha\rceil} \sum_{j \in \wn_{\Quallist}^*} \Qual_j$. The definition of $\ranking$, $\ranking_\Quallist$, and $\ranking^*$ remains the same, yet we use a different way to bound the quality. 
    Let $\wntop\subseteq \wn_{\Quallist}$ be the set of alternatives such that any $j \in \wntop$ and $j' \in \wn_{\Quallist}^*\setminus \wntop$, $j \succ_\ranking j'$, which directly implies. $\Qual_j \ge \Qual_{j'}$. It's not hard to verify that under this condition, $\frac{\vt^C(\wntop)}{\vt^C(\wn_{\Quallist}^*)} \ge \frac{c(\wntop)}{c(\wn_{\Quallist}^*))}$. 
    
    Now it suffices to bound the cost ratio between $\wntop$ and $\wn_{\Quallist}^*$. We show this by discussing different cases between the budget $B$ and $\alpha$. Without loss of generality, we will assume that $\min_{j\in\atn} c_j = 1$ and $\max_{j\in\atn}c_j = \alpha$. 

    Case 1: $B \ge 2\alpha$. If $\wntop$ exhaust all the alternative, then $c(\wntop) = c(\atn) = c(\wn_{\Quallist}^*)$, and the case becomes trivial. Otherwise, $c(\wntop) \ge B - \alpha$ since it skipped an alternative. By $B \ge 2\alpha$, $\frac{c(\wntop)}{c(\wn_{\Quallist}^*)} \ge \frac12$. 

    Case 2: $B \in [\alpha, 2\alpha)$. If $c(\wntop) \ge B/2$, $\frac{c(\wntop)}{c(\wn_\Quallist^*)} \ge \frac12$ also holds. Now we consider $c(\wntop) < B/2$. Let $j$ be the alternative ranked highest in $\wntop \setminus \wn_\Quallist^*$. By the definition of $\wntop$, either $\wntop = \wn_{\Quallist}$ or $j$ is ranked higher than alternatives in $\wn_\Quallist \setminus \wntop$. In either case, there must be $c(\wntop) + c_{j} > B$, which implies $c_{j} > B/2$. (If it is not the case, $j$ should be picked by $\av$ next to $\wntop$ and thus be in $\wntop$, which is a contradiction). 

    Now we consider the components of $\wn_\Quallist^*$. Firstly, if $(\wn_\Quallist^*\setminus \{j\}) \subseteq \wntop$, then $(\wn _{\max} \setminus \wn_\Quallist^*) \neq \emptyset$. Therefore, 
    \begin{equation*}
        \frac{c(\wntop)}{c(\wn_\Quallist^*)} = \frac{c(\wntop\cap \wn_\Quallist^*) + c(\wntop \setminus \wn_\Quallist^*)}{ c(\wntop\cap \wn_\Quallist^*) + c_j } \ge \frac{1}{\alpha}. 
    \end{equation*}

    For all the cases above, the bounded cost leads to the bounded utility. 

    On the other hand, if $\wn_{2} = (\wn_{\Quallist}^* \setminus (\wntop \cup \{j\})) \neq \emptyset$, we need to directly compare the utility between $\wn_\Quallist$ and $\wn_{\Quallist}^*$. Note that $c(\wn_2) < B/2$. This implies there are still alternatives in $\atn$ that can be picked $\av$ for $\wn_\Quallist$ after $\wntop$. Moreover, by the nature of $\av$ to pick the highest-ranked alternative as possible, the first item $\av$ picks for $\wn_\Quallist$ after $\wntop$, denoted as $j_1$, must satisfies that for every $j' \in \wn_2$, $j_1 \succ_\ranking j'$, which implies $L_{j_1} \ge L_{j'}$. Therefore, if we compare the utility, we have 
    \begin{align*}
        \frac{\vt^C(\wn_{\Quallist})}{\vt^C(\wn_\Quallist^*)} \ge&\  \frac{\vt^C(\wntop\cap \wn_\Quallist^*) + \vt^C(\wntop \setminus \wn_\Quallist^*) + \Qual_{j_1} \cdot  c_{j_1}}{ \vt^C(\wntop\cap \wn_\Quallist^*) + \Qual_j \cdot c_j + \vt^C(\wn_2)}\\
        \ge &\ \frac{1\cdot \Qual_j + 1\cdot \Qual_{j_1}}{\alpha \cdot \Qual_{j} + (B/2) \cdot \Qual_{j_1}}\\
        \ge&\ \frac{1}{\alpha}.
    \end{align*}
    The second line comes from that $(\wntop \setminus \wn_\Quallist^*)$ is non-empty and has higher quality than $j$, that $j_1$ has higher quality than $\wn_2$, and that $c(\wn_2) < B/2$. The last line comes from that $B < 2\alpha$. This finishes all cases and thus finishes the proof. 
    
\end{proof}

\begin{lem}
    \label{lem:cardinal}
    Let $\atn$ be the set of alternatives. For each $j \in \atn$, $c_j \ge 0$ is the cost of $j$. W.l.o.g, let $\min_{j\in\atn} c_j = 1$ and $\max_{j\in\atn}c_j = \alpha$. $B \ge \alpha$ is the budget. We say $\wn$ is maximal if for any $j \not\in \wn$, $c(\wn) + c_j > B$. Then for any two maximal winning sets $\wn$ and $\wn'$, $\frac{|\wn|}{|\wn'|}\ge \frac{1}{\lceil \alpha \rceil}$. 
\end{lem}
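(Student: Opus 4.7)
The plan is to reduce the lemma to a single clean inequality: $B < \alpha|\wn|+1$. If I can prove this, then since every item in $\wn'$ has cost at least $1$ (by $\min_j c_j = 1$), I immediately get $|\wn'|\le c(\wn')\le B < \alpha|\wn|+1$, and integrality of $|\wn'|$ together with $\alpha|\wn|+1 \le \lceil\alpha\rceil|\wn|+1$ (which is an integer) forces $|\wn'|\le\lceil\alpha\rceil|\wn|$, as desired. So everything hinges on the budget bound.

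First I would dispose of the trivial case $\wn=\atn$: then $\wn'\subseteq\wn$ and $|\wn|\ge|\wn'|\ge|\wn'|/\lceil\alpha\rceil$, so the inequality holds. For the rest of the proof I assume $\wn\ne\atn$, which guarantees existence of some $j\notin\wn$ and lets me invoke maximality of $\wn$ against concrete items.

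The key step is the budget bound $B<\alpha|\wn|+1$, proved by two cases depending on where the (guaranteed) cost-$1$ item sits. In Case I, $\wn$ contains an item of cost $1$; then the remaining $|\wn|-1$ items contribute at most $\alpha$ each, so $c(\wn)\le 1+(|\wn|-1)\alpha$. Picking any $j\notin\wn$, maximality gives $c_j>B-c(\wn)$ and $c_j\le\alpha$, hence $B<c(\wn)+\alpha\le\alpha|\wn|+1$. In Case II, $\wn$ contains no cost-$1$ item; the assumption $\min_j c_j=1$ then forces the cost-$1$ item $j^*$ to lie outside $\wn$, and applying maximality to $j^*$ yields $c(\wn)+1>B$, so $B<c(\wn)+1\le\alpha|\wn|+1$. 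Either way, $B<\alpha|\wn|+1$.

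The main obstacle is recognizing that one must fully exploit the normalization $\min_j c_j=1$: the bound $c(\wn)>B-\alpha$ alone (which follows from maximality and $c_j\le\alpha$) only gives $B<\alpha(|\wn|+1)$, which is off by a factor of $\alpha/\lceil\alpha\rceil$ in the worst case and does not yield the lemma. It is precisely the existence of a cost-$1$ item — either inside $\wn$ (sharpening the bound on $c(\wn)$) or outside $\wn$ (sharpening the maximality inequality) — that closes this gap. Notably, the proof does not use the maximality of $\wn'$ at all; only the feasibility $c(\wn')\le B$ is needed.
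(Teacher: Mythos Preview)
Your proof is correct and uses the same core idea as the paper: a case split on whether the guaranteed cost-$1$ item lies in $\wn$, yielding (in your notation) $B<\alpha|\wn|+1$, equivalently the paper's $|\wn|>\tfrac{B-1}{\alpha}$. Your finishing step---passing directly from $|\wn'|\le B<\alpha|\wn|+1\le\lceil\alpha\rceil|\wn|+1$ to $|\wn'|\le\lceil\alpha\rceil|\wn|$ by integrality---is cleaner than the paper's route, which first proves the separate bounds $|\wn'|\le\lfloor B\rfloor$ and $|\wn|\ge\lfloor(B-1)/\alpha\rfloor+1$ and then does a somewhat fiddly case analysis on whether $\lceil\alpha\rceil$ divides $\lfloor B\rfloor$; you also explicitly dispose of the case $\wn=\atn$, which the paper's argument glosses over, and you correctly observe that maximality of $\wn'$ is never used (only feasibility).
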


\begin{proof}
    For the upper bound, it's not hard to verify that $|\wn'| \le \lfloor B\rfloor$. For the lower bound, we claim that $|\wn| \ge \lfloor\frac{B -1}{\alpha}\rfloor + 1$. Let $j'$ be an alternative such that $c_j = 1$. If $j \in \wn$, this implies all other item in $\wn$ takes at most $B - 1$ budget, each of which has a cost of at most $\alpha$. If $j\not\in\wn$, the $c(\wn) > B - 1$, otherwise $j$ can be added and $\wn$ is not maximal. Both lead to the lower bound $|\wn| \ge \lfloor\frac{B -1}{\alpha}\rfloor + 1$. Let $\lfloor B\rfloor = p$ and $\lceil \alpha \rceil = q$. Then, it suffices to show that $\lfloor\frac{B -1}{\alpha}\rfloor \ge \frac{\lfloor B\rfloor}{\lceil \alpha \rceil} - 1 = p/q -1$. Note that $\frac{B -1}{\alpha} \ge (p-1)/q$. If $p = qr$ for some integer $r$, then $\frac{B -1}{\alpha} \ge \frac{qr - 1}{q} = r - 1/q > r - 1$, and $\lfloor\frac{B -1}{\alpha}\rfloor \ge r -1$ (in this case $p/q - 1 = r - 1$ is an integer). On the other hand, if $p = qr + \rho$ for integer $r$ and $\rho \in [1, q)$, then $\frac{B -1}{\alpha} \ge \frac{qr + \rho - 1}{q} = r + (\rho - 1)/q > r$. Therefore, $\lfloor\frac{B -1}{\alpha}\rfloor \ge \lfloor r \rfloor \ge r-1$. This finishes the proof. 
\end{proof}

\section{Proof of Theorem~\ref{thm:strategic_binary} and Lemma~\ref{lem:saddlepoint}}
\label{apx:saddlepoint}
\subsection{Proof of Theorem~\ref{thm:strategic_binary}}

\begin{proof}
    We start by fixing an $n$. To analyze the condition whether $\stgp^*$ is a BNE, we fix an agent $i$, the signal $\Sigv_i$, and $i$'s deviating action, and compare their expected utility.
    For example, in this proof, suppose $\Sigv_i = [0, 0]$, indicating that agent $i$ receives negative signals for both alternatives. In $\stgp^*$, $i$'s report under $\Sigv_i$ is also $\Rp_i = [0, 0]$ (not approving any alternative). Then, let the deviating action be $[0, 1]$ (only approving $\attwo$). If this deviation brings $i$ higher expected utility, $i$ has incentives to switch to $\stg_i$, where $\stg_i([0,0]) = [0, 1]$ and $\stg_i = \stg^*_i$ for all other $\Sigv_i$. Let $\stgp = (\stg_i, \stgp^*_i)$. Therefore, the necessary condition for $\stgp^*$ to be a Nash equilibrium is $\ut_i(\stgp^*, \Sigv_i) \ge \ut_i(\stgp, \Sigv_i)$. 

    To compare the expected utility, we focus on scenarios where $i$'s deviation will change the outcome. Suppose the tie-breaking favors $\atone$. Then, $i$'s vote $[0, 0]$ and $[0, 1]$ only make a difference when two alternatives receive exactly the same number of approval from all other agents. In this case, informative voting favors $\atone$, while deviation favors $\attwo$. Let $\xrv_{\atone}$ and $\xrv_{\attwo}$ be the number of approval votes for two alternatives, respectively. Then, 
    \begin{align*}
        &\ \ut_i(\stgp^*, \Sigv_i) - \ut_i(\stgp, \Sigv_i) \\
        = &\ \sum_{\Quallist\in \qualset^{\atn}} \Pr[\Quallist\mid \Sigv_i] \cdot \Pr[\xrv_{\atone} = \xrv_{\attwo} \mid \Quallist, \stgp^*_{-i}] \cdot (\Qual_{\atone} - \Qual_{\attwo}). 
    \end{align*}
    Then, note that if two alternatives have equal utilities, $\ut_i$ does not change even if the outcome has changed. Therefore, 
    \begin{align*}
        &\ \ut_i(\stgp^*, \Sigv_i) - \ut_i(\stgp, \Sigv_i) \\
        = &\ \Pr[\Quallist= [1, 0] \mid \Sigv_i] \cdot \Pr[\xrv_{\atone} = \xrv_{\attwo} \mid \Quallist= [1, 0]] \\
        &\ - \Pr[\Quallist= [0, 1] \mid \Sigv_i] \cdot \Pr[\xrv_{\atone} = \xrv_{\attwo} \mid \Quallist= [0, 1]]. 
    \end{align*}
    Consequently, $\ut_i(\stgp^*, \Sigv_i) \ge \ut_i(\stgp, \Sigv_i)$ is equivalent to 
    \begin{equation}
        \label{eq:strategic1}
        \frac{\Pr[\xrv_{\atone} = \xrv_{\attwo} \mid \Quallist= [1, 0]]}{\Pr[\xrv_{\atone} = \xrv_{\attwo} \mid \Quallist= [0, 1]]} \ge \frac{\Pr[\Quallist= [0, 1] \mid \Sigv_i]}{\Pr[\Quallist= [1, 0] \mid \Sigv_i]}. 
    \end{equation}
    Then, notice that, given $\Qual_{\atone}$ and $\Qual_{\attwo}$, $\xrv_{\atone}$ and $\xrv_{\atone}$ are random variables following binomial distributions $B(n-1, \sigp_{\atone}^{\Qual_{\atone}})$ and $B(n-1, \sigp_{\attwo}^{\Qual_{\attwo}})$, respectively.  
Let
\begin{equation*}
    Q = \frac{\sqrt{\sigp_{\atone}^1 \cdot \sigp_{\attwo}^0} + \sqrt{(1-\sigp_{\atone}^1) \cdot (1-\sigp_{\attwo}^0)}}{ \sqrt{\sigp_{\atone}^0 \cdot \sigp_{\attwo}^1} + \sqrt{(1-\sigp_{\atone}^0) \cdot (1-\sigp_{\attwo}^1)}}
\end{equation*}
By applying Lemma~\ref{lem:saddlepoint}, we transfer the LHS of (\ref{eq:strategic1}) as $M\cdot Q^{2n-1} \cdot (1 + O(\frac1n))$,
where $M > 0$ is a constant. Note that when $Q < 1$, LHS of (\ref{eq:strategic1}) converges to 0 as $n$ goes to infinity. On the other hand, the RHS is independent of $n$. In this case, (\ref{eq:strategic1}) does not hold for all sufficiently large $n$. Therefore, a necessary condition for $\ut_i(\stgp^*, \Sigv_i) \ge \ut_i(\stgp, \Sigv_i)$ (where $\Sigv_i = [0, 0]$) is $Q \ge 1$. 

Likewise, we consider another scenario where the informative voting and the deviation are reversed. That is, $\Sigv_i = \Rp_i = [0, 1]$ (only approving $\attwo$) while the deviating action is $[0, 0]$ (not approving any alternative). With a similar reasoning, $\ut_i(\stgp^*, \Sigv_i) \ge \ut_i(\stgp, \Sigv_i)$  is equivalent to 
    \begin{equation}
        \label{eq:strategic2}
    \frac{\Pr[\xrv_{\atone} = \xrv_{\attwo} \mid \Quallist= [1, 0]]}{\Pr[\xrv_{\atone} = \xrv_{\attwo} \mid \Quallist= [0, 1]]} \le \frac{\Pr[\Quallist= [0, 1] \mid \Sigv_i]}{\Pr[\Quallist= [1, 0] \mid \Sigv_i]}. 
    \end{equation}
    Note that the LHS of (\ref{eq:strategic2}) is exactly the same as that of (\ref{eq:strategic1}). The RHS is different, as $\Sigv_i$ is different, but also independent of $n$. Therefore, when $Q > 1$, the LHS of (\ref{eq:strategic2}) increases to infinity as $n$ increases, and (\ref{eq:strategic2}) does not hold for all sufficiently large $n$. Therefore, $\ut_i(\stgp^*, \Sigv_i) \ge \ut_i(\stgp, \Sigv_i)$ (where $\Sigv_i = [0, 1]$) is $Q \le 1$. 

    Finally, if $\stgp^*$ is a BNE, then $\ut_i(\stgp^*, \Sigv_i) \ge \ut_i(\stgp, \Sigv_i)$ must be hold in both cases above. Consequently, only when $Q = 1$, $\stgp^*$ is a BNE for all sufficiently large $n$. This finishes our proof. 
\end{proof}

\subsection{Proof of Lemma~\ref{lem:saddlepoint}}

\begin{proof}
    We use saddlepoint approximation to approximate this probability. 

    \begin{dfn}[saddlepoint approximation]
        Given a discrete random variable $X$ with a finite, bounded support, let $K(s) = \log E[\exp(sX)]$ be the cumulant generating function (CGF) of $X$. Let$K'$ and $K''$ be the first and second order derivatives of $K$ on $s$. Then, 
        \begin{equation*}
            \Pr[X = x] = \frac{\exp(K(s^*) - s^* x)}{\sqrt{2\pi \cdot K''(s^*)}} \cdot (1 + O(n^{-1})),
        \end{equation*}
        where $s^*$ is the solution to $K'(s^*) = x$.
    \end{dfn}

    In our case, let $X = X_1 - X_2$. Then we use saddlepoint approximation to approximate $\Pr[X = 0]$. We first give the CGF of $X$ and its derivatives. 
    \begin{equation*}
        K(s) = \log E[\exp(s(X_1 - X_2))] = \log E[\exp(sX_1)] + \log E[\exp(-sX_2)].
    \end{equation*}
    The equation comes from that $X_1$ and $X_2$ are independent. 
    For each $X_i$, 
    \begin{align*}
        \log E[\exp(sX_i)] =&\ \log \sum_{k = 0}^n \binom{n}{k} \cdot p_i^k \cdot (1 - p_i)^{n-k} \cdot e^{sk}\\
        =&\ \log ((1 - p_i + p_i\cdot e^s)^n)\\
        =&\ n\log (1 - p_i + p_i\cdot e^s).
    \end{align*}
    Therefore, 
    \begin{equation*}
        K(s) = n\log (1 - p_1 + p_1\cdot e^s) + n\log (1 - p_2 + p_2\cdot e^{-s})
    \end{equation*}
    The we can directly get the derivatives of $K$. 
    \begin{align*}
        K'(s) =&\ \frac{n\cdot p_1\cdot e^s}{1 - p_1 + p_1\cdot e^s} - \frac{n\cdot p_2\cdot e^{-s}}{1 - p_2 + p_2\cdot e^{-s}},\\
        K''(s)=&\  \frac{n\cdot p_1\cdot e^s\cdot (1 -p1)}{(1 - p_1 + p_1\cdot e^s)^2} + \frac{n\cdot p_2\cdot e^{-s}\cdot (1-p_2)}{(1 - p_2 + p_2\cdot e^{-s})^2}.
    \end{align*}

    Second, we solve the saddlepoint function $K'(s^*) = 0$. 
    \begin{align*}
      K'(s^*) = 0 \Leftrightarrow&\   \frac{n\cdot p_1\cdot e^{s^*}}{1 - p_1 + p_1\cdot e^{s^*}} = \frac{n\cdot p_2\cdot e^{-s^*}}{1 - p_2 + p_2\cdot e^{-s^*}}\\
      \Leftrightarrow&\ e^{s^*} = \sqrt{\frac{p_2(1-p_1)}{p_1(1-p_2)}}.
    \end{align*}
    In this case, 
    \begin{align*}
        K(s^*) =&\ n\log (1 - p_1 + p_1\cdot e^{s^*}) + n\log (1 - p_2 + p_2\cdot e^{-s^*})\\
        =&\ n\log \left(p_1p_2 + (1-p_1)(1-p_2) + p_1(1-p_2)\cdot e^{s^*} + p_2\cdot (1-p_2)\cdot e^{-s^*}\right)\\
        =&\ n\log \left(p_1p_2 + (1-p_1)(1-p_2) + 2\sqrt{p_1p_2(1-p_1)(1-p_2)} \right)\\
        =&\ 2n\log\left(\sqrt{p_1p_2} + \sqrt{(1-p_1)(1-p_2)}\right). 
    \end{align*}
    And
    \begin{align*}
        K''(s^*) = \frac{2n\sqrt{p_1p_2(1-p_1)(1-p_2)}}{\left(\sqrt{p_1p_2} + \sqrt{(1-p_1)(1-p_2)}\right)^2}. 
    \end{align*}
    Therefore, we are ready to assign these values in the approximation formula. 
    \begin{align*}
        \frac{\exp(K(s^*) - s^* 0)}{\sqrt{2\pi \cdot K''(s^*)}} =&\ \frac{\exp(2n\log\left(\sqrt{p_1p_2} + \sqrt{(1-p_1)(1-p_2)}\right))}{\sqrt{2\pi \cdot \frac{2n\sqrt{p_1p_2(1-p_1)(1-p_2)}}{\left(\sqrt{p_1p_2} + \sqrt{(1-p_1)(1-p_2)}\right)^2}}}\\
        =&\ \frac{\left(\sqrt{p_1p_2} + \sqrt{(1-p_1)(1-p_2)}\right)^{2n+1}}{2\sqrt{\pi n} \cdot(p_1 p_2(1-p_1)(1-p_2))^{1/4}}. 
    \end{align*}
    This finishes the proof. 
\end{proof}

\section{Proof of Theorem~\ref{thm:strategic_unit}}
\label{apx:strategic_unit}
We now extend our analysis to the general setting where alternatives have unit costs. The high-level intuition remains similar. We consider two pairs of signals and deviating actions, and compare the expected utility in both scenarios. This leads us to a necessary condition that among all the pivotal cases (the case that deviation changes the utility) in the two scenarios, no pivotal case has a dominating likelihood to occur compared to other pivotal cases.

Let the alternatives $\atn = \{1, 2, \cdots, m\}$, and the tie-breaking rule disfavors alternative $1$. In the first scenario, let the signal $\Sigv_i = \vec{0}$ (not approving any alternative) and the deviating action be only approving alternative $1$. The deviating strategy $\stg_i$ is constructed likewise. 

The deviation of agent $i$ changes the outcome only if the remaining agents' votes place $1$ on the cusp of selection.  Let $\xrv_j$ be the number of votes for alternative $j$ from all other agents. For a fixed set of $(\xrv_j)_{j \in \atn}$, we partition $\atn \setminus \{1\}$ into three subsets $\calP = (\atn_>, \atn_=,\atn_<)$, which contains alternative $j$ such that $\xrv_j > \xrv_1$, $\xrv_j = \xrv_1$, and $\xrv_j < \xrv_1$, respectively. A partition $\calP$ and a quality vector $\Qual$ can determine the outcome change and the utility change, which determines a pivotal case. Let $j(\calP)$ denote the alternative that $1$ substitutes in the winner set. 


We denote $\piv^+$ and $\piv^-$ to be the set of pairs $(\calP, L)$ that lead to pivotal cases with utility increase ($\Qual_1 >\Qual_{j(\calP)}$) and those with utility decrease, respectively. Consequently, we can write the expected utility of agent $i$ as follows. Let $\Pr[\calP \mid \Quallist]$ be the likelihood that the votes of all other agents follow $\calP$ given that the quality vector is $\Qual$ and that all other agents vote informatively. 
\begin{align*}
        &\ \ut_i(\stgp^*, \Sigv_i) - \ut_i(\stgp, \Sigv_i) \\
        = &\ \sum_{(\calP, \Quallist) \in \piv^+} \Pr[\calP \mid \Quallist] \cdot  \Pr[\Quallist\mid \Sigv_i]\cdot (\Qual_{j(\calP)} - \Qual_{1})\\
        -&\ \sum_{(\calP, \Quallist) \in \piv^-} \Pr[\calP \mid \Quallist] \cdot  \Pr[\Quallist\mid \Sigv_i]\cdot (\Qual_{1} - \Qual_{j(\calP)}). 
    \end{align*}

There are at most $(3\bar{\qual})^m$ terms and at least one term in each summation. 
Like the binary case, each $\Pr[\calP \mid \Quallist]$ can be approximated in some closed form. 

\begin{lem}
\label{lem:strategic_unit}

    Given a partition $\calP$, a quality vector $\Quallist$, and an information structure $\sigp$, the probability that
 $(\xrv_j)_{j\in \atn}$ follows the partition $\calP$ is 
    \begin{equation*}
        \Pr[\calP \mid \Quallist] = M(\calP, \Quallist, \sigp) \cdot e^{-nG(\calP, \Quallist, \sigp) + O(\ln n)}, 
    \end{equation*}
    where $G$ and $M$ are positive constants determined by $\calP, \Quallist,$ and $ \sigp$. 
\end{lem}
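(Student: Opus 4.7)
The plan is to reduce the probability to a sum of independent binomial-factor products and then apply sharp large-deviation (saddlepoint / Laplace) estimates. Fix $\Quallist$ and write $p_j := \sigp_j^{\Qual_j}$. Given $\Quallist$, the approval counts $\xrv_1,\dots,\xrv_m$ are mutually independent with $\xrv_j \sim B(n-1,p_j)$. Conditioning on $\xrv_1 = k$ decomposes the target probability as
\begin{equation*}
\Pr[\calP \mid \Quallist] = \sum_{k=0}^{n-1} \Pr[\xrv_1=k] \prod_{j\in\atn_>}\Pr[\xrv_j>k] \prod_{j\in\atn_=}\Pr[\xrv_j=k] \prod_{j\in\atn_<}\Pr[\xrv_j<k].
\end{equation*}

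Next, I would estimate each factor via classical sharp asymptotics. With $t := k/(n-1)$ and binomial rate $I_p(t) := t\ln(t/p) + (1-t)\ln((1-t)/(1-p))$, Stirling's formula gives $\Pr[\xrv_j=k] = \Theta(n^{-1/2})\cdot\exp(-(n-1)I_{p_j}(t))$, and the Bahadur--Rao refinement of Chernoff's bound yields a tail estimate of the same exponential order (times a constant depending only on $p_j$ and $t$) for $\Pr[\xrv_j \ge k]$ when $t > p_j$, and symmetrically for $\Pr[\xrv_j \le k]$ when $t < p_j$; on the ``typical'' side of $p_j$ the corresponding tail is $1-o(1)$ and contributes no exponential penalty. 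Collecting exponents, the $k$-th summand has the form $n^{O(1)}\cdot \exp(-(n-1)R(t))$ with
\begin{equation*}
R(t) := I_{p_1}(t) + \sum_{j\in\atn_=}I_{p_j}(t) + \sum_{\substack{j\in\atn_>\\ p_j<t}}I_{p_j}(t) + \sum_{\substack{j\in\atn_<\\ p_j>t}}I_{p_j}(t).
\end{equation*}

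A discrete Laplace argument then concentrates the sum on the $\Theta(\sqrt n)$ values of $k$ near $k^* := \lfloor t^*(n-1)\rfloor$, where $t^*$ minimizes $R$ on $(0,1)$. The dominant contribution collapses into the announced form $M \cdot e^{-nG + O(\ln n)}$, with $G := R(t^*)$ and a strictly positive constant $M$ obtained as the product of the $O(1)$ saddlepoint constants from each factor together with the Gaussian normalization from the Laplace step; all polynomial-in-$n$ prefactors are absorbed into the $e^{O(\ln n)}$ slack. Positivity of $G$ follows from strict convexity of each $I_{p_j}$ at its unique zero $t=p_j$ combined with the observation that $\calP$ forces at least one strictly positive term in $R(t^*)$ (e.g.\ any $j\in\atn_=$ with $p_j\neq t^*$, or any active tail indicator).

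The main obstacle I anticipate is the nonsmoothness of $R$ at the thresholds $t = p_j$ for $j\in\atn_>\cup\atn_<$, where tail indicators toggle on or off. I would handle this by a case split. If $t^*$ lies in the interior of a smooth piece of $R$, standard Laplace expansion applies verbatim. If $t^*$ coincides with some $p_j$, I would combine one-sided Taylor expansions with the elementary fact that $\Pr[\xrv_j \ge p_j(n-1)] = \Theta(1)$, which preserves the exponential rate while contributing only a polynomial correction; the contribution from the ``opposite'' side of the threshold remains strictly subexponential and is absorbed into the error term. In either case, the polynomial slack is swallowed by $e^{O(\ln n)}$, yielding the lemma.
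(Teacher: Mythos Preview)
Your proposal is correct and takes essentially the same route as the paper's proof: decompose the probability as a sum over the tied value $k$, define the rate function as the sum of KL divergences over the active constraints (your $R(t)$ is exactly the paper's $G(t)$), localize the sum at its minimizer $t^*$ via a discrete Laplace argument, and handle the non-smooth case where $t^*$ coincides with some $p_j$ by a separate case analysis. One small caveat: your argument for strict positivity of $G$ has a gap in the degenerate case where all equality-constraint parameters $p_j$ coincide and no tail constraint is active at that common value (then $R(t^*)=0$), but this edge case does not affect the asymptotic formula $e^{-nG+O(\ln n)}$ itself, and the paper does not address it either.
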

Intuitively, the $G$ represents the "statistical cost" or "rarity" of the pivotal case. It is derived from the KL-divergence between the expected vote shares under $L$ and the specific tied vote shares required by $\mathcal{P}$. Events with lower $G$ values are exponentially more frequent than those with higher $G$ values.
The proof of Lemma~\ref{lem:strategic_unit}, including the explicit form of $Q$, is in Appendix~\ref{apx:stratic_unit_lemma}.

The sign of the utility difference is determined simply by comparing the minimum $G$ functions.
Given an information structure $\sigp$, let $G^+_{\min}(\sigp)$ and $G^-_{\min}(\sigp)$ be the minimum $G(\calP, \Quallist, \sigp)$ among all $(\calP, \Quallist) \in \piv^+$ and $\piv^-$, respectively. If $G^-_{\min}(\sigp) < G^+_{\min}(\sigp)$, the negative term will dominate the utility difference, and agent $i$ is willing to deviate. 

Then, we consider the second scenario, where the signal vector $\Sigv_i$ has a positive signal only for alternative $m$< and the deviating action is not approving any alternative. Then, $\piv^+$ (defined in the first case) includes all the pivotal cases leading to utility {\em loss} in the second scenario, and $\piv^-$ includes all the pivotal cases leading to the utility gain. Therefore, if $G^-_{\min}(\sigp) > G^+_{\min}(\sigp)$, the negative term will dominate the utility difference, and agent $i$ is willing to deviate. 

Combining the restrictions in the two scenarios leads to a necessary condition for informative voting to be a BNE, which finishes our proof. 

\myparagraph{Theorem 5} (Formal Version) 
{\em Given an environment $\inst$ with unit cost, let $G^+(\sigp) = \min_{(\calP, \Quallist) \in \piv+} G(\calP, \Quallist, \sigp)$, and $G^-(\sigp)$ likewise. Then, when $n\to\infty$, informative voting is a BNE only if $G^-_{\min}(\sigp) = G^+_{\min}(\sigp)$.}


\section{Proof of Lemma~\ref{lem:strategic_unit}.}
\label{apx:stratic_unit_lemma}

Before we start the proof, we reconstruct the formation into a more generic form. For a partition $\calP = (\atn_>, \atn_=, \atn_<)$, we renumber of index so that $\atn_= = \{2, 3, \cdots, k_1\}$, $\atn_> = \{k_1+1,\cdots, k_2\}$, and $\atn_< = \{k_2+1, \cdots, m\}$. Given the quality vector $\Quallist$, $\xrv_j$ follows binomial distribution $B(n-1, \sigp_j^{\Qual_j})$. Let $p_j = \sigp_j^{\Qual_j}$. In this way, we convert the statement of Lemma~\ref{lem:strategic_unit} into the following form.

Let $X_1, \ldots, X_m$ be independent binomial random variables with $X_i \sim \text{Binomial}(n, p_i)$ where $p_i \in (0,1)$ are fixed parameters. Let
\[
P = \Pr[(X_2= \cdots=X_{k_1} = X_1), (X_{k_1+1}, \cdots, X_{k_2} > X_1), (X_{k_2+1}, \cdots, X_{m} < X_1)] = \sum_{x=0}^{n} f(x)
\]
where
\[
f(x) = \prod_{i=1}^{k_1} \Pr[X_i = x] \cdot \prod_{i=k_1+1}^{k_2} \Pr[X_i > x] \cdot \prod_{j=k_2+1}^{m} \Pr[X_j < x].
\]

Here $0 \leq k_1 \leq k_2 \leq m$. We assume $k_1 \geq 1$ (at least one equality constraint), which aligns with the pivotal case where $A\neq \emptyset$ and provides the key structure for our analysis. The following theorem directly implies Lemma~\ref{lem:strategic_unit}.

\begin{thm}\label{thm:main}
As $n \to \infty$:
\[
\ln P = -nG(\tilde{t}) + O(\ln n)
\]
where $G$ and $\tilde{t}$ are defined below. Equivalently, $P = e^{-nG(\tilde{t}) + O(\ln n)}$.
\end{thm}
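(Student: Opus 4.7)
The object $P=\sum_{x=0}^{n} f(x)$ is a sum of at most $n+1$ nonnegative terms, so Laplace's principle says $\ln P=\ln \max_{x} f(x)+O(\ln n)$, and the task reduces to finding the exponential rate of $\max_x f(x)$. For this I will use the standard large-deviation asymptotics for a binomial. Writing $x=tn$ with $t\in[0,1]$ and $H(t,p):=t\ln(t/p)+(1-t)\ln((1-t)/(1-p))$ (the Bernoulli KL divergence), one has, uniformly on compact subsets of $(0,1)$,
\begin{equation*}
\Pr[X_i=x]=\Theta(n^{-1/2})\,e^{-nH(t,p_i)},
\quad
\Pr[X_i>x]=\Theta(n^{-1/2})\,e^{-nH(t,p_i)}\ \text{if }t>p_i,
\end{equation*}
while $\Pr[X_i>x]=1-o(1)$ when $t<p_i$ (and the symmetric statements for $<$). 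These facts are classical Cram\'er/Chernoff-type estimates, provable by Stirling on the leading binomial coefficient and a geometric-series bound for the tails.

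\textbf{Rate function.} Define
\begin{equation*}
\phi(t)\;=\;\sum_{i=1}^{k_1}H(t,p_i)\;+\;\sum_{i=k_1+1}^{k_2}H(t,p_i)\mathbf{1}[t>p_i]\;+\;\sum_{j=k_2+1}^{m}H(t,p_j)\mathbf{1}[t<p_j],
\end{equation*}
and set $G(\tilde t):=\inf_{t\in[0,1]}\phi(t)$, with $\tilde t$ a minimizer. Since $k_1\ge 1$, the first sum forces $\phi(t)\to\infty$ as $t\to 0$ or $t\to 1$, so the infimum is attained in $(0,1)$ and is finite. The indicator terms make $\phi$ potentially discontinuous at the points $\{p_i\}$, but $\phi$ is piecewise smooth on the finitely many sub-intervals, so the infimum is well-defined. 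By the binomial asymptotics above, uniformly for $x$ such that $x/n$ stays in a compact subset of $(0,1)$ and avoids a fixed neighborhood of any $p_i$,
\begin{equation*}
f(x)=\Theta(n^{-m/2})\,e^{-n\phi(x/n)}.
\end{equation*}

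\textbf{Upper bound.} Since $P\le (n+1)\max_x f(x)$, I need $\max_x f(x)\le e^{-n G(\tilde t)+O(\ln n)}$. Away from the critical points $\{p_i\}$ the displayed asymptotic gives $f(x)\le n^{O(1)}e^{-n\phi(x/n)}\le n^{O(1)}e^{-nG(\tilde t)}$. Near a critical point $p_i$ the indicator may drop, but then the corresponding tail probability is at most $1$, so $f(x)$ is bounded by the value obtained by removing that term; this value dominates $e^{-n\phi(t)}$ on the relevant side and is still $\le e^{-nG(\tilde t)+O(\ln n)}$ by the definition of $\phi$. Taking logs and adding $\ln(n+1)$ gives $\ln P\le -nG(\tilde t)+O(\ln n)$.

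\textbf{Lower bound.} Let $x^\star:=\lfloor \tilde t\, n\rfloor$, and choose a small open interval $I$ around $\tilde t$ on which all indicators in $\phi$ are constant (possible because $\tilde t$ lies in the interior of one of the sub-intervals; if $\tilde t$ happens to equal some $p_i$, perturb by an $O(1/n)$ amount into the cheaper side, which changes $\phi(t)$ by $O(1/n)$ and hence affects $\ln P$ only by $O(1)$). On $I$, the asymptotic $f(x)=\Theta(n^{-m/2})e^{-n\phi(x/n)}$ is uniform, and $\phi$ is smooth, so $\phi(x^\star/n)=\phi(\tilde t)+O(1/n)=G(\tilde t)+O(1/n)$. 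Thus $f(x^\star)\ge n^{-O(1)}e^{-nG(\tilde t)}$, which gives $\ln P\ge \ln f(x^\star)\ge -nG(\tilde t)-O(\ln n)$. Combined with the upper bound this is the claim.

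\textbf{Expected difficulty.} The routine parts are Stirling and the geometric-series tail bounds; the only delicate issue is the discontinuity of $\phi$ at the points $\{p_i\}$, which must be handled in both the upper and lower bounds. For the upper bound, one must verify that the ``cliff'' in $\phi$ does not yield a higher $f$ than the minimizer predicts (true because the tail probability is always $\le 1$, matching exactly the effect of dropping the corresponding $H$ term). For the lower bound, the argument works cleanly when $\tilde t\notin\{p_i\}$, and an $O(1/n)$ perturbation handles the boundary case at a cost absorbed into the $O(\ln n)$ error. No further algebraic input is needed beyond tracking these $O(\ln n)$ prefactors.
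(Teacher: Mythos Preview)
Your approach is essentially the same as the paper's: define the rate function as a sum of Bernoulli KL divergences over ``active'' constraints, bound $f(x)$ above by $e^{-nG(x/n)}$ via Chernoff and the entropy bound, and bound a single term $f(\lfloor n\tilde t\rfloor)$ below via Stirling, absorbing all polynomial prefactors into $O(\ln n)$. The paper carries out exactly this plan (its Lemmas for the upper and lower bounds on $f(x)$, the localization lemma, and the split into non-singular and singular $\tilde t$), just with more bookkeeping.

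One correctable slip: your claim that ``$\phi(t)\to\infty$ as $t\to 0$ or $t\to 1$'' is false, since $D(0\|p)=-\ln(1-p)$ and $D(1\|p)=-\ln p$ are finite for $p\in(0,1)$. What is true (and what the paper proves) is that $G'(t)\to-\infty$ as $t\to 0^+$ and $G'(t)\to+\infty$ as $t\to 1^-$, because each summand $\frac{d}{dt}D(t\|p_i)=\ln\frac{t(1-p_i)}{p_i(1-t)}$ diverges at the endpoints; this forces the minimizer into $(0,1)$, which is what you actually need for your Stirling-based lower bound. Also, $\phi$ is in fact continuous at each $p_i$ (the term that switches on or off equals $D(p_i\|p_i)=0$ there), so your worry about discontinuity is unfounded, though your handling of it is harmless. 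With these two points patched, your argument is sound.
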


\subsection{Definitions}

\begin{dfn}[Active sets]
For $t \in [0,1]$, define:

\myparagraph{For ``$>$'' constraints} ($i \in \{k_1+1, \ldots, k_2\}$):
\begin{align*}
\calA(t) &= \{i : p_i > t\} & &\text{($\Pr[X_i > nt] \to 1$)} \\
\calB(t) &= \{i : p_i < t\} & &\text{($\Pr[X_i > nt] \to 0$ exponentially)} \\
\calAp(t) &= \{i : p_i = t\} & &\text{($\Pr[X_i > nt] \to 1/2$)}
\end{align*}

\myparagraph{For ``$<$'' constraints} ($j \in \{k_2+1, \ldots, m\}$):
\begin{align*}
\calC(t) &= \{j : p_j < t\} & &\text{($\Pr[X_j < nt] \to 1$)} \\
\calD(t) &= \{j : p_j > t\} & &\text{($\Pr[X_j < nt] \to 0$ exponentially)} \\
\calCp(t) &= \{j : p_j = t\} & &\text{($\Pr[X_j < nt] \to 1/2$)}
\end{align*}

The \myparagraph{active set} is:
\[
\calE(t) = \{1, \ldots, k_1\} \cup \calB(t) \cup \calD(t).
\]
These are the indices contributing to the exponential decay rate.
\end{dfn}

\begin{dfn}[Rate function]
The rate function $G: [0,1] \to [0, +\infty)$ is:
\[
G(t) = \sum_{i \in \calE(t)} \KL{t}{p_i}
\]
where $\KL{t}{p} = t\ln\frac{t}{p} + (1-t)\ln\frac{1-t}{1-p}$ is the KL divergence (relative entropy), with the conventions $0 \ln 0 = 0$. 
\end{dfn}

As we will show in the following section, $G(t)$ has a unique minimum in $[0, 1]$. We denote the $t$ at this minimum as $\tilde{t}$. We give an algorithm to explicitly compute $\tilde{t}$ in Appendix~\ref{apx:compute_tilde_t}.

Note that as $t$ increases, different indices $i$ will transfer into/out of the active set $\calE(t)$ as $t$ crosses $p_i$. Consequently, in these points, $G(t)$ may not be differentiable. We define these points as follows. 

\begin{dfn}[Singular points]
Let $Q = \{p_{k_1+1}, \ldots, p_m\} \cap (0,1)$ be the set of parameters for non-equality constraints (duplicated $p_i$ are removed). A point $t \in (0,1)$ is \myparagraph{singular} if $t \in Q$, and \myparagraph{non-singular} otherwise.

At singular points, the active set $\calE(t)$ changes as $t$ crosses from below to above.
\end{dfn}

\subsection{Properties of $G(t)$}

\begin{lem}\label{lem:G-continuous}
$G(t)$ is continuous on $[0,1]$ and finite everywhere.
\end{lem}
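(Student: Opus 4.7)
The plan is to isolate the only two potential sources of discontinuity in $G$, namely (i) the endpoints $t\in\{0,1\}$, and (ii) the singular points $t^*\in Q$ at which the active set $\calE(t)$ changes, and to verify continuity (and finiteness) at each. For a fixed $p\in(0,1)$, the single-term function $t\mapsto \KL{t}{p}=t\ln(t/p)+(1-t)\ln((1-t)/(1-p))$ is well-defined and continuous on $[0,1]$ under the convention $0\ln 0=0$, with finite boundary values $\ln(1/(1-p))$ at $t=0$ and $\ln(1/p)$ at $t=1$. Since $G(t)$ is a finite sum of such terms (at most $m$ of them), finiteness on $[0,1]$ follows immediately, and continuity on $(0,1)\setminus Q$ follows because on any open interval avoiding $Q$ the active set $\calE(\cdot)$ is constant, so $G$ is a finite sum of continuous functions.

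The core of the argument is handling singular $t^*\in Q$. The key observation I would exploit is that $\KL{p}{p}=0$: any index that enters or leaves $\calE$ as $t$ crosses its own $p$-value contributes exactly $0$ to $G$ at the transition. Concretely, suppose $t^*=p_j$. If $j\in\{k_1+1,\dots,k_2\}$, then $j\in\calA(t)$ for $t<t^*$ and $j\in\calB(t)\subseteq\calE(t)$ for $t>t^*$, so the sum for $G(t)$ acquires the extra term $\KL{t}{p_j}$ as $t$ increases past $t^*$; the jump in the sum at $t^*$ equals $\KL{t^*}{p_j}=0$. Symmetrically, if $j\in\{k_2+1,\dots,m\}$, then $j\in\calD(t)\subseteq\calE(t)$ for $t<t^*$ and $j\in\calC(t)$ for $t>t^*$, so a term disappears whose value at $t^*$ is again $0$. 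All other terms in $G$ are continuous in a neighborhood of $t^*$, so combining the two one-sided limits gives $\lim_{t\to t^{*-}}G(t)=G(t^*)=\lim_{t\to t^{*+}}G(t)$. The same style of argument handles the endpoints: near $t=0$ the active set is eventually constant (since all $p_i\in(0,1)$ are bounded away from $0$), and likewise near $t=1$, so $G$ extends continuously to $[0,1]$ using the finite boundary values of each KL term.

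I do not expect any real obstacle here: the content of the lemma is essentially the identity $\KL{p}{p}=0$ plus the observation that the active-set changes happen precisely at values of $t$ where the entering/leaving term vanishes. The only piece that deserves some care, and which I would write explicitly, is the case analysis distinguishing $\calA/\calB$ (for ``$>$'' constraints) from $\calC/\calD$ (for ``$<$'' constraints), together with the possibility that several indices in $\{k_1+1,\dots,m\}$ share the same value $p_j=t^*$, in which case the argument is applied term by term and the total jump at $t^*$ is still $0$.
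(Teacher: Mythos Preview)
Your proposal is correct and follows essentially the same approach as the paper's proof: piecewise smoothness on intervals between singular points, the key identity $\KL{p_j}{p_j}=0$ to kill the jump at each singular $t^*=p_j$, and finiteness at the endpoints via the explicit values of $\KL{0}{p}$ and $\KL{1}{p}$. Your write-up is in fact a bit more careful (distinguishing the $\calA/\calB$ versus $\calC/\calD$ cases and noting that multiple indices may share the same $p$-value), but the argument is the same.
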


\begin{proof}
On each interval between consecutive singular points, $G(t)$ is a sum of $\KL{t}{p_i}$ terms with a fixed index set, hence smooth.

At a singular point $t = p_j$ (where $j \in \{k_1+1, \ldots, m\}$), index $j$ either enters or leaves $\calE(t)$. The term being added or removed is $\KL{p_j}{p_j} = 0$. Therefore:
\[
\lim_{t \to p_j^-} G(t) = \lim_{t \to p_j^+} G(t) = G(p_j).
\]

At the boundaries: $G(0) = \sum_{i=1}^{k_1} D(0\|p_i) + \sum_{j \in \calD(0)} D(0\|p_j)$ and $G(1) = \sum_{i=1}^{k_1} D(1\|p_i) + \sum_{i \in \calB(1)} D(1\|p_i)$, both finite since $D(0\|p) = -\ln(1-p)$ and $D(1\|p) = -\ln p$ are finite for $p \in (0,1)$.
\end{proof}

\begin{lem}\label{lem:G-boundary}
The minimum of $G$ is attained in the interior $(0,1)$, not at the boundaries.
\end{lem}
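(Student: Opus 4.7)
The plan is to show that $G$ strictly decreases just to the right of $t=0$ and strictly increases just to the left of $t=1$. Combined with the continuity established in Lemma~\ref{lem:G-continuous} and compactness of $[0,1]$, this forces every minimizer of $G$ into the open interval $(0,1)$.

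First I would record the one-variable derivative of a KL divergence: for fixed $p\in(0,1)$,
$$
\frac{d}{dt}\KL{t}{p}=\ln\frac{t(1-p)}{p(1-t)},
$$
which tends to $-\infty$ as $t\to 0^+$ and to $+\infty$ as $t\to 1^-$.

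Next I would argue that in a small right-neighborhood of $0$ the active set $\calE(t)$ is constant and non-empty. Let $t_0=\min\{p_j:j\in\{k_1+1,\ldots,m\}\}$, interpreted as $1$ if this index set is empty. On $(0,t_0)$ no singular point occurs, so $\calE(t)$ does not change, and the standing assumption $k_1\geq 1$ guarantees $\{1,\ldots,k_1\}\subseteq \calE(t)$. Differentiating $G$ term by term on this interval yields
$$
G'(t)=\sum_{i\in\calE(t)}\ln\frac{t(1-p_i)}{p_i(1-t)}\longrightarrow -\infty \quad\text{as } t\to 0^+,
$$
since every summand diverges to $-\infty$ and the sum is finite. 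Hence $G'<0$ on some interval $(0,\delta)$, and by continuity of $G$ at $0$ we conclude $G(\delta/2)<G(0)$, ruling out $t=0$ as a minimizer. The symmetric argument on the interval $(t_1,1)$ with $t_1:=\max\{p_j:j\in\{k_1+1,\ldots,m\}\}$ (again $0$ if the set is empty) shows $G'(t)\to+\infty$ as $t\to 1^-$, giving $G(1-\delta'/2)<G(1)$ for some small $\delta'>0$, and ruling out $t=1$.

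The argument is essentially routine once the derivative formula for $\KL{t}{p}$ is in hand; the only point requiring care is ensuring the active set near each endpoint is non-empty so that $G'(t)$ inherits the $\pm\infty$ behavior of a single KL derivative rather than having potentially cancelling contributions. This is exactly where the standing hypothesis $k_1\geq 1$ enters, since the equality-constraint block contributes $\{1,\ldots,k_1\}$ to $\calE(t)$ for \emph{every} $t\in(0,1)$, independently of the values $p_{k_1+1},\ldots,p_m$.
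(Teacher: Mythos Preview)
Your proposal is correct and follows essentially the same approach as the paper: compute $\frac{d}{dt}\KL{t}{p}=\ln\frac{t(1-p)}{p(1-t)}$, observe that each summand in $G'(t)$ diverges to $-\infty$ as $t\to 0^+$ and to $+\infty$ as $t\to 1^-$, and conclude via continuity that the minimum lies in $(0,1)$. Your version is slightly more explicit than the paper's in isolating the interval on which $\calE(t)$ is constant and in flagging that the standing hypothesis $k_1\ge 1$ is what guarantees $\calE(t)\neq\emptyset$ near the endpoints, but the argument is the same.
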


\begin{proof}
We show that $G$ is strictly decreasing near $t = 0$ and strictly increasing near $t = 1$.

The derivative of a single KL-Divergence term is:
\[
\frac{d}{dt} \KL{t}{p} = \ln \frac{t(1-p)}{p(1-t)}.
\]

For any fixed $\calE$:
\[
G'(t) = \sum_{i \in \calE(t)} \ln \frac{t(1-p_i)}{p_i(1-t)}.
\]

Near $t = 0$: For small $t > 0$, we have $t < p_i$ for all $i \in \calE(t)$. For each $i$:
\[
\ln \frac{t(1-p_i)}{p_i(1-t)} \to -\infty \quad \text{as } t \to 0^+.
\]
Therefore $G'(t) \to -\infty$ as $t \to 0^+$, so $G$ is strictly decreasing near $t = 0$.

Near $t = 1$: Similarly, for each $i \in \calE(t)$:
\[
\ln \frac{t(1-p_i)}{p_i(1-t)} \to +\infty \quad \text{as } t \to 1^-.
\]
Therefore $G'(t) \to +\infty$ as $t \to 1^-$, so $G$ is strictly increasing near $t = 1$.

Since $G$ is continuous on $[0,1]$, strictly decreasing near 0, and strictly increasing near 1, the minimum must be attained at some $\tilde{t} \in (0,1)$.
\end{proof}

\begin{lem}\label{lem:G-minimum}
\begin{enumerate}
    \item[(a)] $G(t)$ has a unique global minimum at some $\tilde{t} \in (0,1)$. 
    \item[(b)] For any $\delta > 0$, there exists $\varepsilon > 0$ such that $|t - \tilde{t}| \geq \delta$ implies $G(t) \geq G(\tilde{t}) + \varepsilon$ for all $t \in [0,1]$.
\end{enumerate}
\end{lem}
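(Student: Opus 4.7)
\textbf{Proof plan for Lemma~\ref{lem:G-minimum}.} The plan is to reduce both claims to a single structural fact: $G$ is convex on $[0,1]$, and strictly convex whenever $k_1 \geq 1$ (which is our standing assumption). Once strict convexity is established, uniqueness of the minimum, its location in the interior, and the quantitative separation in part~(b) all follow from standard convex-analytic arguments on a compact interval.

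The first step is to rewrite $G$ in a way that makes the effect of singular points transparent. For each $i \in \{k_1+1,\ldots,k_2\}$, define $\phi_i(t) = \KL{t}{p_i} \cdot \mathbf{1}_{\{t > p_i\}}$, and for each $j \in \{k_2+1,\ldots,m\}$, define $\psi_j(t) = \KL{t}{p_j} \cdot \mathbf{1}_{\{t < p_j\}}$. Since $\KL{p_i}{p_i} = 0$ and $\tfrac{d}{dt}\KL{t}{p_i}\big|_{t=p_i}=0$, both $\phi_i$ and $\psi_j$ agree with the zero function on one side of the kink and with a strictly convex $\KL{\cdot}{p_i}$ piece on the other, with matching value and one-sided derivative at the junction. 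Hence each $\phi_i$ and each $\psi_j$ is convex on $[0,1]$ (convex on each side, continuous, and with non-decreasing one-sided derivatives across the junction). Consequently,
\begin{equation*}
G(t) = \sum_{i=1}^{k_1}\KL{t}{p_i} + \sum_{i=k_1+1}^{k_2}\phi_i(t) + \sum_{j=k_2+1}^{m}\psi_j(t)
\end{equation*}
is convex on $[0,1]$, and strictly convex because the first sum contains the strictly convex term $\KL{t}{p_1}$ (using $k_1 \geq 1$).

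With strict convexity in hand, part~(a) is immediate: continuity on $[0,1]$ (Lemma~\ref{lem:G-continuous}) yields existence of a global minimizer, Lemma~\ref{lem:G-boundary} rules out the endpoints, and strict convexity rules out any second minimizer. For part~(b), strict convexity forces $G$ to be strictly decreasing on $[0,\tilde t]$ and strictly increasing on $[\tilde t, 1]$, so for any $\delta > 0$ we may take
\begin{equation*}
\varepsilon = \min\bigl(G(\max(\tilde t - \delta, 0)),\ G(\min(\tilde t + \delta, 1))\bigr) - G(\tilde t),
\end{equation*}
which is strictly positive by uniqueness, and which uniformly bounds $G(t) - G(\tilde t)$ from below on $\{t : |t-\tilde t|\ge \delta\}\cap[0,1]$ by monotonicity.

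The main technical obstacle is verifying convexity of the truncated summands $\phi_i$ and $\psi_j$, because $G$ is only piecewise smooth and the active set $\calE(t)$ changes discontinuously at singular points. The key observation that removes this obstacle is that whenever an index enters or leaves $\calE(t)$ at $t = p_i$, the added/removed KL term vanishes \emph{together with its first derivative} at that point, so the kink introduces no concavity: the one-sided derivative of each $\phi_i$ (resp.\ $\psi_j$) is non-decreasing across $p_i$, which is the definitional criterion for convexity. Everything else is routine.
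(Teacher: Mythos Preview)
Your proof is correct and takes a cleaner route than the paper. The paper establishes only \emph{piecewise} strict convexity (on each interval between singular points) and then argues uniqueness by a case split: two minimizers in the same interval contradict strict convexity; two minimizers in different intervals force the intervening singular points to be minimizers as well, which in turn forces both adjacent pieces to have zero derivative there, a contradiction. Your approach sidesteps this case analysis entirely by observing that each truncated summand $\phi_i$ (resp.\ $\psi_j$) is \emph{globally} convex on $[0,1]$---the crucial point being that $\KL{t}{p_i}$ vanishes together with its first derivative at $t=p_i$, so gluing it to the zero function produces a $C^1$ convex function---and that the always-present term $\KL{t}{p_1}$ supplies strict convexity of the whole sum. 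This yields global strict convexity of $G$ in one stroke, from which uniqueness and part~(b) are immediate.

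What your approach buys is a stronger structural fact (global strict convexity rather than piecewise) with less work; the paper's argument is more hands-on but never isolates this property. For part~(b) the paper uses a bare compactness argument, which is equally valid and perhaps more robust if one did not have strict convexity, but your explicit $\varepsilon$ via monotonicity is a natural dividend of the convexity you already proved.
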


\begin{proof}
\myparagraph{Property (a):}
By Lemmas~\ref{lem:G-continuous} and~\ref{lem:G-boundary}, $G$ is continuous on $(0,1)$, and the minimum cannot not be at $t= 0$ or $t  =1$. Therefore $G$ attains its minimum at some $\tilde{t} \in (0,1)$.

\myparagraph{Uniqueness:} Let $0 = q_0 < q_1 < \cdots < q_r < q_{r+1} = 1$ be the sorted singular points (including 0 and 1). On each interval $I_j = (q_j, q_{j+1})$, the active set $\calE(t)$ is constant, so:
\[
G(t) = \sum_{i \in \calE_j} \KL{t}{p_i} \quad \text{for } t \in I_j.
\]
Each term $\KL{t}{p_i}$ is strictly convex in $t$ (since $\frac{d^2}{dt^2}\KL{t}{p_i} = \frac{1}{t(1-t)} > 0$ and that $\calE$ is non-empty). Therefore, $G$ is strictly convex on each $I_j$.

Suppose $G$ has two distinct global minimizers $\tilde{t}_1 < \tilde{t}_2$. Let $I_j \ni \tilde{t}_1$ and $I_k \ni \tilde{t}_2$.

\textit{Case 1:} $j = k$ (same interval). This contradicts strict convexity of $G$ on $I_j$.

\textit{Case 2:} $j < k$ (different intervals). Consider the function $G$ on $[\tilde{t}_1, \tilde{t}_2]$. Since $G(\tilde{t}_1) = G(\tilde{t}_2) = \min G$, and $G$ is continuous, $G$ must equal this minimum value somewhere in each interval between. But strict convexity on each piece implies $G$ is strictly above the minimum in the interior, except at the minimizer. The only way for $G$ to equal the minimum in multiple intervals is if the boundary points $q_j, q_{j+1}, \ldots$ are also minimizers. But at a boundary $q_\ell$, the left and right pieces are both strictly convex, so $q_\ell$ can only be a minimizer if both pieces have zero derivative there. By strict convexity, this means $q_\ell$ is the unique minimizer for both pieces, contradicting our assumption of two distinct minimizers.

Therefore, the global minimizer $\tilde{t}$ is unique.

\myparagraph{Property (b):} Since $G$ is continuous on the compact set $[0,1]$ and has a unique minimizer $\tilde{t}$, for any $\delta > 0$, the set $K = \{t \in [0,1] : |t - \tilde{t}| \geq \delta\}$ is compact and does not contain $\tilde{t}$. Therefore $G$ attains its minimum on $K$ at some $t^* \in K$ with $G(t^*) > G(\tilde{t})$. Setting $\varepsilon = G(t^*) - G(\tilde{t}) > 0$ gives the result.

\end{proof}

\subsection{Upper and Lower Bounds for $f(x)$}

\begin{lem}\label{lem:f-upper}
For all $x \in \{0, 1, \ldots, n-1, n\}$ with $t = x/n$:
\[
f(x) \leq e^{-nG(t)}.
\]
\end{lem}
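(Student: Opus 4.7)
The plan is to bound each factor of $f(x)$ separately by an exponential in $n$, so that the product collapses to $e^{-nG(t)}$. The key observation is that the indices appearing in $\calE(t)$ are precisely those whose factors admit a nontrivial exponential decay, while all other factors can be bounded by $1$.

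First, for the equality factors ($i \in \{1,\dots,k_1\}$), I would use the standard ``probability-ratio'' derivation of the Chernoff bound for the binomial PMF. Since $t = x/n$, the fact that $\binom{n}{x} t^x (1-t)^{n-x} \leq 1$ (it is itself a probability under $\mathrm{Binomial}(n,t)$) gives $\binom{n}{x} \leq t^{-x}(1-t)^{-(n-x)}$, and hence
\[
\Pr[X_i = x] = \binom{n}{x} p_i^x (1-p_i)^{n-x} \leq \left(\tfrac{p_i}{t}\right)^x \left(\tfrac{1-p_i}{1-t}\right)^{n-x} = e^{-n \KL{t}{p_i}}.
\]

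Next, for the strict-inequality factors I would invoke the one-sided Chernoff bound $\Pr[X \geq y] \leq e^{-n\KL{y/n}{p}}$ whenever $y/n \geq p$, and its symmetric counterpart $\Pr[X \leq y] \leq e^{-n\KL{y/n}{p}}$ whenever $y/n \leq p$. Concretely, for $i \in \{k_1+1,\dots,k_2\}$ with $p_i < t$ (i.e. $i \in \calB(t)$), I use $\Pr[X_i > x] \leq \Pr[X_i \geq x] \leq e^{-n \KL{t}{p_i}}$; for $j \in \{k_2+1,\dots,m\}$ with $p_j > t$ (i.e. $j \in \calD(t)$), I use $\Pr[X_j < x] \leq \Pr[X_j \leq x] \leq e^{-n \KL{t}{p_j}}$. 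For the remaining indices (in $\calA(t) \cup \calAp(t)$ for the ``greater than'' constraints, and in $\calC(t) \cup \calCp(t)$ for the ``less than'' constraints) I use only the trivial bound $\leq 1$.

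Finally, I would take the product of these bounds. Only the indices in $\{1,\dots,k_1\} \cup \calB(t) \cup \calD(t) = \calE(t)$ contribute nontrivial exponential factors, so
\[
f(x) \leq \prod_{i \in \calE(t)} e^{-n\KL{t}{p_i}} = \exp\!\Bigl(-n \sum_{i \in \calE(t)} \KL{t}{p_i}\Bigr) = e^{-nG(t)},
\]
which is exactly the claim. Edge cases (e.g.\ $x = 0$ or $x = n$, and indices with $p_i = t$ where $\KL{t}{p_i} = 0$) are handled automatically since the KL terms vanish in those cases and the trivial bound remains valid. I do not anticipate a serious obstacle here; the main care needed is simply verifying that the partition of indices into ``Chernoff side'' and ``trivial side'' matches the definition of $\calE(t)$, which is immediate from the definitions of $\calA,\calB,\calC,\calD$ and their primed variants.
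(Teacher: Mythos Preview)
Your proposal is correct and follows essentially the same argument as the paper: bound the point-mass factors via $\binom{n}{x}\le e^{nH(t)}$ (which is exactly your probability-ratio inequality), apply Chernoff to the tail factors in $\calB(t)$ and $\calD(t)$, bound the remaining factors by $1$, and multiply. One small imprecision: at $x=0$ or $x=n$ the KL terms do \emph{not} vanish (e.g.\ $\KL{0}{p_i}=-\ln(1-p_i)$); rather, the bound holds with equality there, as the paper notes explicitly---but this does not affect the validity of your argument.
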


\begin{proof}
\myparagraph{Point probabilities:} For $x \in \{1, \ldots, n-1\}$ with $t = x/n \in (0,1)$, the standard entropy bound gives $\binom{n}{x} \leq e^{nH(t)}$ where $H(t) = -t\ln t - (1-t)\ln(1-t)$ is the binary entropy. Thus:
\[
P(X_i = x) = \binom{n}{x} p_i^x (1-p_i)^{n-x} \leq e^{nH(t)} \cdot p_i^{nt} (1-p_i)^{n(1-t)} = e^{-n\KL{t}{p_i}}.
\]

For $x = 0$: $P(X_i = 0) = (1-p_i)^n = e^{n\ln(1-p_i)} = e^{-nD(0\|p_i)}$ where $D(0\|p_i) = -\ln(1-p_i)$.

For $x = n$: $P(X_i = n) = p_i^n = e^{n\ln p_i} = e^{-nD(1\|p_i)}$ where $D(1\|p_i) = -\ln p_i$.

So the bound $P(X_i = x) \leq e^{-n\KL{t}{p_i}}$ holds for all $x \in \{0, 1, \ldots, n\}$.

\myparagraph{Tail probabilities:} For ``$>$'' constraints with $p_i < t$ (i.e., $i \in \calB(t)$), Chernoff's bound gives:
\[
\Pr[X_i > x] \leq e^{-n\KL{t}{p_i}}.
\]
For ``$<$'' constraints with $p_j > t$ (i.e., $j \in \calD(t)$):
\[
\Pr[X_j < x] \leq e^{-n\KL{t}{p_j}}.
\]
For other tail probabilities ($p_i > t$ or $p_j < t$), we use the trivial bound $\leq 1$.

\myparagraph{Combining:} 
\[
f(x) \leq \prod_{i=1}^{k_1} e^{-n\KL{t}{p_i}} \cdot \prod_{i \in \calB(t)} e^{-n\KL{t}{p_i}} \cdot \prod_{j \in \calD(t)} e^{-n\KL{t}{p_j}} = e^{-nG(t)}. \qedhere
\]
\end{proof}

\begin{lem}\label{lem:f-lower}
There exist constants $\delta_0 > 0$ and $C > 0$ (depending only on $\{p_i\}$) such that for $t = x/n \in [\delta_0, 1-\delta_0]$:
\[
f(x) \geq C \cdot n^{-m/2} \cdot e^{-nG(t)}.
\]
\end{lem}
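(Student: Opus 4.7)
The plan is to establish matching lower bounds on each factor in
\begin{equation*}
f(x) \;=\; \prod_{i=1}^{k_1}\Pr[X_i=x]\cdot \prod_{i=k_1+1}^{k_2}\Pr[X_i>x]\cdot \prod_{j=k_2+1}^{m}\Pr[X_j<x],
\end{equation*}
using a different technique depending on whether the factor corresponds to an equality constraint, to the favorable side of a tail constraint (where the event is asymptotically likely), or to the adverse side (where the event is exponentially rare). Throughout, I will pick $\delta_0 \in (0,1)$ small enough that $\min_i p_i$ and $1-\max_i p_i$ both exceed $\delta_0$, and restrict $t = x/n$ to $[\delta_0, 1-\delta_0]$, so that all constants introduced below can be chosen uniform in $n$ and $t$.

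First, for each equality factor $\Pr[X_i=x]$ with $i \in \{1,\ldots,k_1\}$, I will invoke Stirling to obtain $\binom{n}{x} \ge c\, n^{-1/2} e^{nH(t)}$ with $H(t) = -t\ln t - (1-t)\ln(1-t)$, which multiplied by $p_i^{x}(1-p_i)^{n-x}$ yields $\Pr[X_i=x] \ge c\, n^{-1/2} e^{-n \KL{t}{p_i}}$. Second, for tail factors on the favorable side --- namely $i \in \calA(t) \cup \calAp(t)$ (since then $np_i \ge x$) and $j \in \calC(t) \cup \calCp(t)$ (since then $np_j \le x$) --- I will appeal to the central limit theorem with a Berry--Esseen correction to show each such probability is at least $1/3$ for all sufficiently large $n$, uniformly in the parameters. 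Third, for tail factors on the adverse side --- $i \in \calB(t)$ and $j \in \calD(t)$ --- I will lower-bound the tail by a single point probability via $\Pr[X_i>x] \ge \Pr[X_i=x+1]$ and $\Pr[X_j<x] \ge \Pr[X_j=x-1]$, apply the Stirling bound to that point probability, and use Lipschitz continuity of $u \mapsto \KL{u}{p_i}$ on $[\delta_0, 1-\delta_0]$ (whose derivative $\ln(u(1-p_i)/(p_i(1-u)))$ is uniformly bounded there) to replace $\KL{(x\pm 1)/n}{p_i}$ by $\KL{t}{p_i}$ up to an $O(1/n)$ additive error, which becomes an $e^{O(1)}$ multiplicative factor after scaling by $n$ in the exponent.

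Multiplying the per-factor lower bounds, the exponential contributions sum to exactly $\sum_{i \in \calE(t)} \KL{t}{p_i} = G(t)$, because by definition $\calE(t) = \{1,\ldots,k_1\} \cup \calB(t) \cup \calD(t)$ is precisely the set of indices producing an exponential term (equality constraints together with adverse tails); the $n^{-1/2}$ factors appear once per index in $\calE(t)$, giving $n^{-|\calE(t)|/2} \ge n^{-m/2}$; and all numerical constants combine into a single $C > 0$ depending only on $\delta_0$ and $\{p_i\}_{i \in [m]}$.

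The main obstacle I anticipate is uniformity of the constants as $t$ approaches some $p_i$, because the index $i$ then straddles the boundary between $\calA(t)$ and $\calB(t)$ (or $\calC(t)$ and $\calD(t)$), and the two sides produce lower bounds of different orders --- a positive constant on the favorable side, versus the strictly weaker $n^{-1/2}$-type bound on the adverse side. The lemma's conservative $n^{-m/2}$ factor absorbs this discrepancy, and continuity of $G$ across the transition (since $\KL{p_i}{p_i}=0$) means no cancellation in the exponent needs to be tracked by hand. Combined with the Lipschitz control above, which handles the unavoidable discreteness gap between $(x\pm 1)/n$ and $t$, these uniform-in-$t$ estimates close the argument.
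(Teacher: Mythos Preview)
Your proposal is correct and follows essentially the same approach as the paper: Stirling for the equality factors, Berry--Esseen for the favorable tails, and the single-point lower bound $\Pr[X_i>x]\ge \Pr[X_i=x+1]$ (resp.\ $\Pr[X_j<x]\ge \Pr[X_j=x-1]$) plus an $O(1/n)$ Lipschitz correction for the adverse tails. Your explicit discussion of uniformity in $t$ near the singular values $p_i$, and of how the conservative $n^{-m/2}$ factor absorbs the order mismatch between the favorable and adverse bounds, is a point the paper's proof leaves somewhat implicit.
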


\begin{proof}
\myparagraph{Point probabilities:} By Stirling's approximation, for $t \in [\delta_0, 1-\delta_0]$:
\[
\Pr[X_i = x] = \frac{e^{-n\KL{t}{p_i}}}{\sqrt{2\pi n t(1-t)}} \cdot \sqrt{\frac{t(1-t)}{p_i(1-p_i)}} \cdot (1 + O(1/n)) \geq \frac{c_1}{\sqrt{n}} e^{-n\KL{t}{p_i}}
\]
for some constant $c_1 > 0$ depending on $\delta_0$ and $\{p_i\}$.

\myparagraph{Tail probabilities:} By the Berry-Esseen theorem, for $X_i \sim \text{Binomial}(n, p_i)$:
\[
\sup_x \left| P\left(\frac{X_i - np_i}{\sqrt{np_i(1-p_i)}} \leq x\right) - \Phi(x) \right| \leq \frac{C}{\sqrt{n}}
\]
where $\Phi$ is the standard normal CDF and $C$ is an absolute constant.

\begin{itemize}
    \item For $i \in \calA(t)$ (i.e., $p_i > t$): The standardized threshold is $z_n = \frac{nt - np_i}{\sqrt{np_i(1-p_i)}} = -\sqrt{n} \cdot \frac{p_i - t}{\sqrt{p_i(1-p_i)}} \to -\infty$. Thus $\Pr[X_i > nt] = 1 - \Phi(z_n) + O(1/\sqrt{n}) \to 1$. For $t \in [\delta_0, 1-\delta_0]$ and $p_i$ bounded away from $t$, we have $\Pr[X_i > nt] \geq 1/2$ for all $n \geq 1$.
    
    \item For $i \in \calAp(t)$ (i.e., $p_i = t$): The standardized threshold is $z_n = 0$, so $\Pr[X_i > nt] = 1 - \Phi(0) + O(1/\sqrt{n}) = 1/2 + O(1/\sqrt{n}) \geq 1/4$ for large $n$.
    
    \item For $i \in \calB(t)$ (i.e., $p_i < t$): We have $\Pr[X_i > x] \geq \Pr[X_i = x+1]$. Let $t' = (x+1)/n = t + 1/n$. By Stirling's approximation:
    \[
    \Pr[X_i = x+1] = \frac{e^{-n\KL{t'}{p_i}}}{\sqrt{2\pi n t'(1-t')}} \cdot \sqrt{\frac{t'(1-t')}{p_i(1-p_i)}} \cdot (1 + O(1/n)).
    \]
    Since $\KL{t'}{p_i} = \KL{t}{p_i} + O(1/n)$ when $t \in [\delta_0, 1 - \delta_0]$, we have:
    \[
    \Pr[X_i > x] \geq \Pr[X_i = x+1] \geq \frac{c_2}{\sqrt{n}} e^{-n\KL{t}{p_i}}
    \]
    for some constant $c_2 > 0$ depending on $\delta_0$ and $\{p_i\}$.
\end{itemize}

Similar bounds hold for the ``$<$'' constraints (using $\Pr[X_j < x] \geq \Pr[X_j = x-1]$ for $j \in \calD(t)$).

\myparagraph{Combining:} Each of the $m$ factors contributes at least $c/\sqrt{n} \cdot e^{-n\KL{t}{p_i}}$ (where the KL term is 0 for non-active indices). Taking the product:
\[
f(x) \geq C \cdot n^{-m/2} \cdot e^{-nG(t)}. \qedhere
\]
\end{proof}

\subsection{Localizing the Sum}

We now show that the sum $P = \sum_x f(x)$ is dominated by $x$ near $n\tilde{t}$.

\begin{lem}\label{lem:localize}
For any $\eta > 0$:
\[
\sum_{|x/n - \tilde{t}| \geq \eta} f(x) = o\left(n^{-m/2} \cdot e^{-nG(\tilde{t})}\right).
\]
\end{lem}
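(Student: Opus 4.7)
The plan is to combine the pointwise upper bound on $f$ from Lemma~\ref{lem:f-upper} with the quantitative ``gap away from the minimizer'' from Lemma~\ref{lem:G-minimum}(b), and then observe that exponential decay dominates the polynomial prefactor $n^{-m/2}$. Since the desired bound only involves the dominant exponential rate $G(\tilde t)$, no sharp Stirling analysis is needed for this lemma; the lower bound in Lemma~\ref{lem:f-lower} is used only to understand what the target scale $n^{-m/2}e^{-nG(\tilde t)}$ represents, not inside the estimate itself.

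First I would invoke Lemma~\ref{lem:G-minimum}(b) with the given $\eta>0$ to obtain an $\varepsilon=\varepsilon(\eta)>0$ such that
\[
G(t)\ \ge\ G(\tilde t)+\varepsilon \qquad\text{whenever } t\in[0,1]\text{ and } |t-\tilde t|\ge\eta.
\]
Then for every integer $x\in\{0,1,\dots,n\}$ with $|x/n-\tilde t|\ge\eta$, Lemma~\ref{lem:f-upper} gives
\[
f(x)\ \le\ e^{-nG(x/n)}\ \le\ e^{-n(G(\tilde t)+\varepsilon)}\ =\ e^{-\varepsilon n}\cdot e^{-nG(\tilde t)}.
\]
Summing over the at most $n+1$ such indices yields
\[
\sum_{|x/n-\tilde t|\ge\eta} f(x)\ \le\ (n+1)\,e^{-\varepsilon n}\cdot e^{-nG(\tilde t)}.
\]

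Finally I would compare this to the target $n^{-m/2}e^{-nG(\tilde t)}$: since
\[
\frac{(n+1)\,e^{-\varepsilon n}\cdot e^{-nG(\tilde t)}}{n^{-m/2}\cdot e^{-nG(\tilde t)}}\ =\ (n+1)\,n^{m/2}\,e^{-\varepsilon n}\ \longrightarrow\ 0
\]
as $n\to\infty$, the claim follows. The argument has no real obstacle; the only mild subtlety is that the bound of Lemma~\ref{lem:f-upper} is valid uniformly for every $x\in\{0,\dots,n\}$ including the endpoints (where $G$ is still finite by Lemma~\ref{lem:G-continuous}), which is precisely why we do not need to separate the boundary terms or appeal to the more delicate lower bound of Lemma~\ref{lem:f-lower}, whose hypothesis $t\in[\delta_0,1-\delta_0]$ would otherwise have to be checked.
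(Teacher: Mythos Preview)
Your proposal is correct and follows essentially the same argument as the paper: invoke Lemma~\ref{lem:G-minimum}(b) to get a uniform gap $\varepsilon>0$, apply the pointwise bound of Lemma~\ref{lem:f-upper}, sum over at most $n+1$ terms, and note that $(n+1)\,n^{m/2}e^{-\varepsilon n}\to 0$. Your additional remarks about boundary points and the irrelevance of Lemma~\ref{lem:f-lower} here are accurate but not needed for the proof itself.
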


\begin{proof}
By Lemma~\ref{lem:G-minimum}(b) with $\delta = \eta$, there exists $\varepsilon > 0$ such that $|t - \tilde{t}| \geq \eta$ implies $G(t) \geq G(\tilde{t}) + \varepsilon$ for all $t \in [0,1]$.

For any $x$ with $|x/n - \tilde{t}| \geq \eta$ and $x \in \{0, 1, \ldots, n-1, n\}$:
\[
f(x) \leq e^{-nG(x/n)} \leq e^{-n(G(\tilde{t}) + \varepsilon)}.
\]

There are at most $n+1$ terms in the sum, so:
\[
\sum_{|x/n - \tilde{t}| \geq \eta} f(x) \leq (n+1) e^{-n(G(\tilde{t}) + \varepsilon)} = e^{-nG(\tilde{t})} \cdot (n+1) e^{-n\varepsilon} = o\left(n^{-m/2} e^{-nG(\tilde{t})}\right). \qedhere
\]
\end{proof}

By this lemma, we have:
\[
P = \sum_{|x - n\tilde{t}| < \eta n} f(x) + o\left(n^{-m/2} e^{-nG(\tilde{t})}\right).
\]

\subsection{Evaluating the Main Sum}

We now evaluate $\sum_{|x - n\tilde{t}| < \eta n} f(x)$. Set $s = |\calE(\tilde{t})|$ (the size of the active set at the minimum).

\subsubsection{Case 1: $\tilde{t}$ is non-singular}

If $\tilde{t} \neq p_j$ for all $j \in \{k_1+1, \ldots, m\}$, then $\tilde{t}$ lies in the interior of some interval $(q_j, q_{j+1})$, and we can choose $\eta$ small enough that $[\tilde{t} - \eta, \tilde{t} + \eta] \subset (q_j, q_{j+1})$.

In this region, the active set $\calE(t)$ is constant (equal to $\calE(\tilde{t})$), and $G(t)$ is smooth. Since $\tilde{t}$ is a local (hence global) minimum in the interior:
\[
G'(\tilde{t}) = 0, \quad G''(\tilde{t}) = \frac{s}{\tilde{t}(1-\tilde{t})} > 0.
\]

For $x = n\tilde{t} + z$ with $|z| < \eta n$, the Taylor series is as follows:
\[
G(x/n) = G(\tilde{t}) + \frac{G''(\tilde{t})}{2} \cdot \frac{z^2}{n^2} + O(z^3/n^3) = G(\tilde{t}) + \frac{s z^2}{2n^2 \tilde{t}(1-\tilde{t})} + O(z^3/n^3).
\]

\begin{lem}\label{lem:f-precise-nonsingular}
In the non-singular case, for $|z| \leq L := \sqrt{n} \ln n$:
\[
f(n\tilde{t} + z) = \frac{C_0}{\sqrt{n}^s} \cdot e^{-nG(\tilde{t})} \cdot e^{-\frac{s z^2}{2n\tilde{t}(1-\tilde{t})}} \cdot (1 + O(\ln n / \sqrt{n}))
\]
where $C_0 > 0$ is a constant depending on $\{p_i\}$ and $\tilde{t}$.
\end{lem}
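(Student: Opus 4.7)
The plan is to reduce $f(n\tilde{t}+z)$ to a product of sharp asymptotic estimates, one per factor, and then collapse the exponential parts via a Taylor expansion of $G$ at $\tilde{t}$. Because $\tilde{t}$ is non-singular, I first fix $\eta>0$ small enough that no $p_j$ with $j\in\{k_1+1,\ldots,m\}$ lies in $[\tilde{t}-2\eta,\tilde{t}+2\eta]$. For $|z|\le L=\sqrt{n}\ln n$ we have $t:=x/n=\tilde{t}+z/n\in[\tilde{t}-\eta,\tilde{t}+\eta]$ for large $n$, so $\calE(t)=\calE(\tilde{t})$ is constant on the window, and every non-active $p_i$ is bounded away from $t$.

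Next I estimate each of the $m$ factors of $f$ uniformly on this window. For each equality factor $i\in\{1,\ldots,k_1\}$, Stirling's formula yields $\Pr[X_i=x]=(2\pi n\tilde{t}(1-\tilde{t}))^{-1/2}\,e^{-n\KL{t}{p_i}}\,(1+O(1/\sqrt{n}))$, where the prefactor absorbs the slow variation of $t(1-t)$ around $\tilde{t}$ into the $O(\ln n/\sqrt{n})$ remainder. For the active tail factors $i\in\calB(\tilde{t})$ (so $p_i<\tilde{t}$, and $t-p_i$ is bounded below by a positive constant throughout the window), the Bahadur--Rao sharp large-deviation expansion gives $\Pr[X_i>x]=\frac{g_i(\tilde{t})}{\sqrt{n}}\,e^{-n\KL{t}{p_i}}(1+O(1/\sqrt{n}))$ for an explicit positive constant $g_i(\tilde{t})$; the symmetric statement handles $\calD(\tilde{t})$. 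Finally, for every inactive tail factor (indices in $\calA(\tilde{t})\cup\calC(\tilde{t})$) the Chernoff bound gives a multiplicative correction of $1+O(e^{-cn})$ since the relevant $p_i$ is bounded away from $t$ on the favorable side. Multiplying everything in, the polynomial prefactors combine into a single $C_0\,n^{-s/2}$, where $s=|\calE(\tilde{t})|$ and $C_0$ is the explicit product of the prefactors from Steps on equality and active tails, and the exponential parts combine to $\exp\!\bigl(-n\sum_{i\in\calE(\tilde{t})}\KL{t}{p_i}\bigr)=\exp(-nG(t))$ by constancy of $\calE(t)$ on the window.

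It remains to pass from $nG(t)$ to $nG(\tilde{t})+\tfrac{sz^2}{2n\tilde{t}(1-\tilde{t})}$. On the non-singular interval, $G$ is smooth and the critical-point identities $G'(\tilde{t})=0$ and $G''(\tilde{t})=s/(\tilde{t}(1-\tilde{t}))$ from the proof of Lemma~\ref{lem:G-minimum} apply. A second-order Taylor expansion gives $nG(t)=nG(\tilde{t})+\tfrac{sz^2}{2n\tilde{t}(1-\tilde{t})}+O(z^3/n^2)$, and for $|z|\le L$ the cubic error is $O((\ln n)^3/\sqrt{n})=O(\ln n/\sqrt{n})$ inside the exponential, hence fits into a $(1+O(\ln n/\sqrt{n}))$ multiplicative factor. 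Combining with the $O(1/\sqrt{n})$ Stirling/Bahadur--Rao remainders and the exponentially small inactive corrections yields the stated form.

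The main obstacle is the uniformity of the Bahadur--Rao sharp tail estimate across the entire window $|z|\le\sqrt{n}\ln n$, since standard statements are usually given pointwise or on a $O(\sqrt{n})$ local-CLT scale. The uniformity is handled by noting that the tilt parameter $\lambda^*(t)$ solving $\mathbb{E}[X_ie^{\lambda X_i}]/\mathbb{E}[e^{\lambda X_i}]=t$ is smooth in $t$ on a compact interval of $(p_i,1)$ and bounded away from $0$ and $\infty$, so the saddle-point prefactor is a Lipschitz function of $t$; shifts of size $O(\ln n/\sqrt{n})$ in $t$ therefore perturb the prefactor by $O(\ln n/\sqrt{n})$ and only contribute a lower-order correction. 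Once this uniform version of Bahadur--Rao (and the analogous uniform Stirling bound for the equality factors) is in hand, the rest is bookkeeping.
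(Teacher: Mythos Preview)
Your proposal is correct and follows essentially the same approach as the paper: factor-by-factor asymptotics (Stirling for equality constraints, sharp large-deviation for active tails, trivial $1+O(e^{-cn})$ for inactive tails), combined with a Taylor expansion at $\tilde{t}$ using $G'(\tilde{t})=0$ to kill the linear term. The only cosmetic difference is that the paper derives the active-tail prefactor elementarily by summing $\Pr[X_i=x+k]$ as a truncated geometric series (recovering exactly the Bahadur--Rao constant $e^{-\lambda_i}/(1-e^{-\lambda_i})$), and expands each $\KL{t}{p_i}$ separately before cancelling $\sum_i\lambda_i z$, whereas you cite Bahadur--Rao directly and Taylor-expand the aggregated $G(t)$; these are equivalent rearrangements of the same computation.
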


\begin{proof}
Let $x = n\tilde{t} + z$ with $|z| \leq L = \sqrt{n}\ln n$, and let $t = x/n = \tilde{t} + z/n$.

\myparagraph{Equality constraints ($i \in \{1, \ldots, k_1\}$):}

Since $\tilde{t}$ is not at $0$ or $1$, we could take a small $\eta$ such that for all $t \in [\tilde{t} - \eta, \tilde{t} + \eta]$, $nt = \Theta(n)$, and $1 - nt = \Theta(n)$, then by Stirling's approximation:
\[
\Pr[X_i = x] = \binom{n}{x} p_i^x (1-p_i)^{n-x} = \frac{1}{\sqrt{2\pi n t(1-t)}} \cdot e^{-n\KL{t}{p_i}} \cdot \sqrt{\frac{t(1-t)}{p_i(1-p_i)}} \cdot (1 + O(1/n)).
\]

For $t = \tilde{t} + z/n$ with $|z| \leq L$:
\begin{itemize}
    \item $\sqrt{t(1-t)} = \sqrt{\tilde{t}(1-\tilde{t})}(1 + O(z/n)) = \sqrt{\tilde{t}(1-\tilde{t})}(1 + O(\ln n/\sqrt{n}))$.
    \item $\KL{t}{p_i} = \KL{\tilde{t}}{p_i} + \KL{\tilde{t}}{p_i}' \cdot \frac{z}{n} + \frac{1}{2}\KL{\tilde{t}}{p_i}'' \cdot \frac{z^2}{n^2} + O(z^3/n^3)$.
\end{itemize}

Since $\KL{t}{p}' = \ln\frac{t(1-p)}{p(1-t)}$ and $\KL{t}{p}'' = \frac{1}{t(1-t)}$, we have:
\[
n\KL{t}{p_i} = n\KL{\tilde{t}}{p_i} + \lambda_i z + \frac{z^2}{2n\tilde{t}(1-\tilde{t})} + O(z^3/n^2)
\]
where $\lambda_i = \ln\frac{\tilde{t}(1-p_i)}{p_i(1-\tilde{t})}$.

Thus:
\[
\Pr[X_i = x] = \frac{c_i}{\sqrt{n}} \cdot e^{-n\KL{\tilde{t}}{p_i}} \cdot e^{-\lambda_i z} \cdot e^{-\frac{z^2}{2n\tilde{t}(1-\tilde{t})}} \cdot (1 + O(\ln n/\sqrt{n}))
\]
where $c_i = \frac{1}{\sqrt{2\pi \tilde{t}(1-\tilde{t})}} \cdot \sqrt{\frac{\tilde{t}(1-\tilde{t})}{p_i(1-p_i)}}$.

\myparagraph{``$>$'' constraints with $i \in \calA(\tilde{t})$ (i.e., $p_i > \tilde{t}$):}

Since $\tilde{t}$ is non-singular and $\eta$ is chosen so that the active set is constant on $[\tilde{t}-\eta, \tilde{t}+\eta]$, we have $p_i > t$ for all $t$ in this range (when $|z| \leq \eta n$). Thus $\Pr[X_i > x] \to 1$ exponentially fast:
\[
\Pr[X_i > x] = 1 - O(e^{-cn}) = 1 \cdot (1 + O(e^{-cn})).
\]

\myparagraph{``$>$'' constraints with $i \in \calAp(\tilde{t})$ (i.e., $p_i = \tilde{t}$):}

This case does not occur when $\tilde{t}$ is non-singular.

\myparagraph{``$>$'' constraints with $i \in \calB(\tilde{t})$ (i.e., $p_i < \tilde{t}$):}

For such $i$, the event $\{X_i > x\}$ is rare. Since $p_i < \tilde{t}$ and $\tilde{t}$ is bounded away from $p_i$ (by the non-singular assumption with small enough $\eta$), we have $\lambda_i = \ln\frac{\tilde{t}(1-p_i)}{p_i(1-\tilde{t})} \geq c > 0$ for some constant $c$.

We compute $\Pr[X_i > x] = \sum_{k=1}^{n-x} \Pr[X_i = x+k]$. The ratio of consecutive terms is:
\[
\frac{\Pr[X_i = x+k+1]}{\Pr[X_i = x+k]} = \frac{(n-x-k)p_i}{(x+k+1)(1-p_i)}.
\]
Let $t = x/n$. At $k = 0$, this ratio equals $\frac{(n-x)p_i}{(x+1)(1-p_i)} = \frac{(1-t)p_i}{t(1-p_i)} \cdot \frac{x}{x+1} = e^{-\lambda_i}(1 + O(1/n))$.

More generally, for $k \leq K := \sqrt{\sqrt n \cdot \ln n}$:
\[
\frac{\Pr[X_i = x+k+1]}{\Pr[X_i = x+k]} = e^{-\lambda_i}\left(1 - \frac{k}{n-x}\right)\left(1 + \frac{k}{x+1}\right)^{-1} = e^{-\lambda_i} (1 + O(k/n)).
\]

Thus 
\begin{align*}
    \Pr[X_i = x+k] = &\ \Pr[X_i = x] \cdot\prod_{j=0}^k\frac{\Pr[X_i = x+k+1]}{\Pr[X_i = x+k]}\\
    =&\ \Pr[X_i = x]\cdot  e^{-k\lambda_i} (1 + O(k^2/n)).
\end{align*}
and:
\begin{align*}
    \sum_{k=1}^{K} \Pr[X_i = x+k] =&\ \Pr[X_i = x] \sum_{k=1}^{K} e^{-k\lambda_i} (1 + O(k^2/n))\\
    =&\ \Pr[X_i = x](1 + O(k^2/n)) \cdot ( \frac{e^{-\lambda_i}(1 - e^{-K\lambda_i})}{1-e^{-\lambda_i}})\\
    =&\ \Pr[X_i = x] \cdot \frac{e^{-\lambda_i}}{1-e^{-\lambda_i}}(1 + O(\ln n/\sqrt{n})). 
\end{align*}

At the same time, for $k > K$, note that $\frac{\Pr[X_i = x+k+1]}{\Pr[X_i = x+k]} \le e^{-\lambda_i}$
\begin{align*}
    \sum_{k=K+1}^{n-x} \Pr[X_i = x+k]\le \Pr[X_i = x]\cdot  \sum_{k=K+1}^{+\infty} e^{-\lambda_i} = \Pr[X_i = x]\cdot \frac{e^{-(K+1)\lambda_i}}{1-e^{-\lambda_i}},
\end{align*}
which is negligible compared to $\sum_{k=1}^{K} \Pr[X_i = x+k]$.
Therefore:
\[
\Pr[X_i > x] = \Pr[X_i = x] \cdot \frac{e^{-\lambda_i}}{1 - e^{-\lambda_i}} \cdot (1 + O(\ln n/\sqrt{n})).
\]

Combined with the Stirling series for $\Pr[X_i = x]$:
\[
\Pr[X_i > x] = \frac{c_i'}{\sqrt{n}} \cdot e^{-n\KL{\tilde{t}}{p_i}} \cdot e^{-\lambda_i z} \cdot e^{-\frac{z^2}{2n\tilde{t}(1-\tilde{t})}} \cdot (1 + O(\ln n/\sqrt{n}))
\]
for some constant $c_i' > 0$.

\myparagraph{``$<$'' constraints:} By symmetric arguments.

\myparagraph{Combining all factors:}

Taking the product over all $m$ factors:
\[
f(x) = \prod_{i=1}^{m} (\text{factor}_i) = \frac{C_0}{\sqrt{n}^s} \cdot \exp\left(-\sum_{i \in \calE(\tilde{t})} n\KL{\tilde{t}}{p_i}\right) \cdot \exp\left(-\sum_{i \in \calE(\tilde{t})} \lambda_i z\right) \cdot e^{-\frac{sz^2}{2n\tilde{t}(1-\tilde{t})}} \cdot (1 + O(\ln n/\sqrt{n}))
\]
where $s = |\calE(\tilde{t})|$ and $C_0$ is a positive constant.

Since $\tilde{t}$ is the minimizer of $G$, we have $G'(\tilde{t}) = 0$, which means $\sum_{i \in \calE(\tilde{t})} \lambda_i = 0$ (this is the first-order optimality condition). Therefore $\sum_{i \in \calE(\tilde{t})} \lambda_i z = 0$, and:
\[
f(x) = \frac{C_0}{\sqrt{n}^s} \cdot e^{-nG(\tilde{t})} \cdot e^{-\frac{sz^2}{2n\tilde{t}(1-\tilde{t})}} \cdot (1 + O(\ln n/\sqrt{n})). \qedhere
\]
\end{proof}

\begin{lem}\label{lem:sum-nonsingular}
In the non-singular case:
\[
\sum_{|z| \leq \eta n} f(n\tilde{t} + z) = \Theta\left(n^{-(s-1)/2} \cdot e^{-nG(\tilde{t})}\right) = e^{-nG(\tilde{t}) + O(\ln n)}.
\]
\end{lem}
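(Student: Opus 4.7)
The plan is to split the sum $\sum_{|z| \leq \eta n} f(n\tilde{t} + z)$ at the threshold $L := \sqrt{n}\ln n$ into a local region $|z| \le L$, where the precise Gaussian expansion of Lemma~\ref{lem:f-precise-nonsingular} applies, and a tail region $L < |z| \le \eta n$, which must be shown to be exponentially smaller than the local contribution. The local piece will supply the leading $\Theta(n^{-(s-1)/2}) e^{-nG(\tilde{t})}$ behavior via a Gaussian integral approximation, while the tail is driven to subdominance by the strict convexity of $G$ at $\tilde{t}$.

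For the local region, I would substitute the expansion from Lemma~\ref{lem:f-precise-nonsingular}, factor out $\tfrac{C_0}{n^{s/2}} e^{-nG(\tilde{t})}$ together with the uniform multiplicative error $(1+O(\ln n/\sqrt{n}))$, and reduce the problem to evaluating the discrete Gaussian sum $\sum_{|z| \le L} \exp\!\bigl(-sz^2/(2n\tilde{t}(1-\tilde{t}))\bigr)$. Since the Gaussian width $\sigma := \sqrt{n\tilde{t}(1-\tilde{t})/s} = \Theta(\sqrt{n})$ dominates the step size $1$, a standard Riemann-sum comparison approximates the sum by $\int_{-L}^{L} \exp\!\bigl(-sz^2/(2n\tilde{t}(1-\tilde{t}))\bigr)\,dz$. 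Because $L/\sigma \to \infty$, this integral is $\sqrt{2\pi\sigma^2}\,(1+o(1)) = \Theta(\sqrt{n})$, so the local contribution is $\Theta(n^{-(s-1)/2}) e^{-nG(\tilde{t})}$ with matching constants above and below.

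For the tail region I would use the crude bound $f(n\tilde{t}+z) \le e^{-nG(\tilde{t}+z/n)}$ from Lemma~\ref{lem:f-upper}. The non-singularity hypothesis allows me to shrink $\eta$ so that $[\tilde{t}-\eta,\tilde{t}+\eta]$ lies strictly inside the interval on which $\calE(t)$ is constant; there $G$ is smooth and strictly convex with $G''(\tilde{t}) = s/(\tilde{t}(1-\tilde{t})) > 0$, so a second-order Taylor expansion at $\tilde{t}$ yields $G(\tilde{t}+z/n) \ge G(\tilde{t}) + c(z/n)^2$ for some $c>0$ and every $|z|\le \eta n$. Hence $f(n\tilde{t}+z) \le e^{-nG(\tilde{t})} e^{-cz^2/n}$, and summing over $L < |z| \le \eta n$ bounds the tail by $2\eta n \cdot e^{-nG(\tilde{t})} \cdot e^{-cL^2/n} = e^{-nG(\tilde{t})} \cdot O\!\bigl(n\, e^{-c(\ln n)^2}\bigr)$, which is $o\!\bigl(n^{-(s-1)/2} e^{-nG(\tilde{t})}\bigr)$.

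Adding the two pieces gives the stated $\Theta\!\bigl(n^{-(s-1)/2} e^{-nG(\tilde{t})}\bigr)$, and the equivalent $e^{-nG(\tilde{t})+O(\ln n)}$ form follows immediately. The main technical subtlety is the Riemann-sum comparison in the local region: I must verify that consecutive summands in the discrete Gaussian differ by a multiplicative factor $1+O(|z|/n)$ uniformly on $|z|\le L$, so that the total discretization error is negligible against the $\Theta(\sqrt{n})$ leading term, and that the uniform $(1+O(\ln n/\sqrt{n}))$ factor from Lemma~\ref{lem:f-precise-nonsingular} can genuinely be pulled outside the sum. Once that quadratic lower bound on $G$ and the sum-to-integral comparison are secured, the rest of the argument is routine.
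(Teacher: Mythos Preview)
Your proposal is correct and follows the same overall decomposition as the paper: split at $L=\sqrt{n}\ln n$, use Lemma~\ref{lem:f-precise-nonsingular} on the local window, and use the quadratic lower bound $G(\tilde{t}+z/n)\ge G(\tilde{t})+c(z/n)^2$ from strict convexity (via Taylor with Lagrange remainder) to kill the tail.

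The one noteworthy difference is how the discrete Gaussian sum $\sum_{|z|\le L} e^{-z^2/(2\sigma^2)}$ is evaluated. The paper invokes the Poisson summation formula, which gives $\sum_{z\in\mathbb Z} e^{-z^2/(2\sigma^2)} = \sqrt{2\pi}\sigma\,(1+O(e^{-cn}))$ with an exponentially small error, and then separately bounds the truncation $|z|>L$. You instead compare the sum to the integral $\int_{-L}^L e^{-z^2/(2\sigma^2)}\,dz$ via a Riemann-sum argument. Your route is more elementary and avoids Poisson summation entirely; since the integrand is monotone on each half-line, the discretization error is $O(1)$, which is already negligible against the $\Theta(\sqrt n)$ leading term, so the extra precision of Poisson summation is unnecessary here. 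Either method yields the same $\Theta(\sqrt n)$ factor and hence the same $\Theta(n^{-(s-1)/2})e^{-nG(\tilde t)}$ conclusion.
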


\begin{proof}
\textbf{Main region $|z| \leq L$:} Using Lemma~\ref{lem:f-precise-nonsingular}:
\[
\sum_{|z| \leq L} f(n\tilde{t} + z) = \frac{C_0}{\sqrt{n}^s} e^{-nG(\tilde{t})} \sum_{|z| \leq L} e^{-\frac{s z^2}{2n\tilde{t}(1-\tilde{t})}} (1 + O(\ln n /\sqrt{n})).
\]

Let $\sigma^2 = n\tilde{t}(1-\tilde{t})/s = \Theta(n)$. We need to evaluate:
\[
S := \sum_{|z| \leq L} e^{-z^2/(2\sigma^2)}.
\]

\textit{Step 1: Full sum over $\mathbb{Z}$.} By the Poisson summation formula:
\[
\sum_{z \in \mathbb{Z}} e^{-z^2/(2\sigma^2)} = \sqrt{2\pi}\sigma \sum_{k \in \mathbb{Z}} e^{-2\pi^2 k^2 \sigma^2}.
\]
The $k = 0$ term gives $\sqrt{2\pi}\sigma$. For $k \neq 0$, each term is $O(e^{-2\pi^2\sigma^2}) = O(e^{-cn})$ since $\sigma^2 = \Theta(n)$. Therefore:
\[
\sum_{z \in \mathbb{Z}} e^{-z^2/(2\sigma^2)} = \sqrt{2\pi}\sigma (1 + O(e^{-cn})).
\]

\textit{Step 2: Truncation error.} For $|z| > L = \sqrt{n}\ln n$:
\[
\sum_{|z| > L} e^{-z^2/(2\sigma^2)} \leq 2\int_L^\infty e^{-x^2/(2\sigma^2)} dx = 2\sigma \int_{L/\sigma}^\infty e^{-u^2/2} du.
\]
Since $L/\sigma = \sqrt{n}\ln n / \sqrt{n\tilde{t}(1-\tilde{t})/s} = \Theta(\ln n)$, the integral is $O(e^{-c\ln^2 n}/\ln n) = o(n^{-k})$ for any $k > 0$.

\textit{Step 3: Combining.} Therefore:
\[
S = \sqrt{2\pi}\sigma (1 + O(e^{-cn})) - o(n^{-k}) = \sqrt{2\pi}\sigma (1 + o(1)) = \sqrt{\frac{2\pi n \tilde{t}(1-\tilde{t})}{s}}(1 + o(1)).
\]

Substituting back:
\[
\sum_{|z| \leq L} f(n\tilde{t} + z) = \frac{C_0}{\sqrt{n}^s} e^{-nG(\tilde{t})} \cdot \sqrt{\frac{2\pi n \tilde{t}(1-\tilde{t})}{s}} \cdot (1 + O(\ln n / \sqrt{n})) = \Theta(n^{-(s-1)/2} e^{-nG(\tilde{t})}).
\]

\myparagraph{Tail region $L < |z| \leq \eta n$:} Since $G$ is $C^2$ on $(q_j, q_{j+1}) \ni \tilde{t}$ with $G'(\tilde{t}) = 0$ and $G''(t) = s/(t(1-t))$, by Taylor's series with Lagrange remainder: for $t = \tilde{t} + z/n$ with $|z| \leq \eta n$,
\[
G(t) = G(\tilde{t}) + \frac{1}{2}G''(\xi) \cdot \frac{z^2}{n^2}
\]
for some $\xi$ between $\tilde{t}$ and $t$. Since $\xi \in [\tilde{t} - \eta, \tilde{t} + \eta] \subset (q_j, q_{j+1})$ and $G''(\xi) = s/(\xi(1-\xi)) \geq s/((\tilde{t}+\eta)(1-\tilde{t}+\eta)) =: 2c > 0$, we have:
\[
G(t) \geq G(\tilde{t}) + c \cdot \frac{z^2}{n^2}.
\]

Thus $f(n\tilde{t} + z) \leq e^{-nG(t)} \leq e^{-nG(\tilde{t}) - cz^2/n}$.

The sum over $L < |z| \leq \eta n$ is:
\[
\sum_{L < |z| \leq \eta n} f(n\tilde{t} + z) \leq \sum_{|z| > L} e^{-nG(\tilde{t}) - cz^2/n} \leq e^{-nG(\tilde{t})} \cdot O(ne^{-cL^2/n}) = e^{-nG(\tilde{t})} \cdot O(n^{1-c\ln^2 n}).
\]
This is $o(n^{-(s-1)/2} e^{-nG(\tilde{t})})$.
\end{proof}

\subsubsection{Case 2: $\tilde{t}$ is singular}

Suppose $\tilde{t} = q_j$ for some singular point $q_j \in (0, 1)$. Then $G$ is continuous but not differentiable at $\tilde{t}$.

Let $\calE^- = \calE(\tilde{t}^-)$ and $\calE^+ = \calE(\tilde{t}^+)$ be the active sets just below and just above $\tilde{t}$. We have $|\calE^-| = s^-$ and $|\calE^+| = s^+$, with $|s^+ - s^-| = 1$.

On $(q_{j-1}, \tilde{t})$, $G$ is strictly convex with $G''(t) = s^-/(t(1-t))$.

On $(\tilde{t}, q_{j+1})$, $G$ is strictly convex with $G''(t) = s^+/(t(1-t))$.

Since $\tilde{t}$ is a global minimum:
\begin{itemize}
    \item $G'(\tilde{t}^-) \leq 0$ (non-increasing from the left)
    \item $G'(\tilde{t}^+) \geq 0$ (non-decreasing to the right)
\end{itemize}

\begin{lem}\label{lem:sum-singular}
In the singular case:
\[
\sum_{|z| \leq \eta n} f(n\tilde{t} + z) = e^{-nG(\tilde{t}) + O(\ln n)}.
\]
\end{lem}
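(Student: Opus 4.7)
The plan is to split the sum at $z=0$ into the left tail $-\eta n \le z < 0$, the single term $z=0$, and the right tail $0 < z \le \eta n$, and to treat each side separately using the one-sided smoothness of $G$. On the interval $[\tilde{t}-\eta,\tilde{t})$ the active set is constantly $\calE^-$ and $G$ is $C^\infty$ with $G''(t)=s^-/(t(1-t))$; analogously on $(\tilde{t},\tilde{t}+\eta]$. Since $\tilde{t}$ is a global minimum of $G$, the one-sided optimality gives $G'(\tilde{t}^-) \le 0$ and $G'(\tilde{t}^+) \ge 0$, so moving away from $\tilde{t}$ strictly increases $nG(t)$ (linearly in $|z|$ when the one-sided derivative is nonzero, quadratically in $z^2/n$ when it vanishes). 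This is exactly what is needed to localize the sum without needing a sharp second-order asymptotic.

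For the upper bound, I would use Lemma~\ref{lem:f-upper} to write $f(n\tilde{t}+z) \le e^{-nG(\tilde{t}+z/n)}$ and Taylor-expand $G$ on each side. On the right: if $G'(\tilde{t}^+) = c>0$, then $nG(\tilde{t}+z/n) \ge nG(\tilde{t}) + cz - O(z^2/n)$, so for $z \le L := \sqrt{n}\ln n$ the sum is dominated by a geometric series summing to $O(1/c)$; and for $L < z \le \eta n$ the contribution is $o(1)$ by the same argument as in Lemma~\ref{lem:sum-nonsingular}. If instead $G'(\tilde{t}^+) = 0$ (which forces $s^+$ to be the ``balanced'' configuration), I would repeat verbatim the Gaussian argument of Lemma~\ref{lem:sum-nonsingular} on the right half, obtaining an $O(\sqrt{n})$ sum. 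The symmetric analysis handles the left. In every case the upper bound is at most a polynomial-in-$n$ factor times $e^{-nG(\tilde{t})}$, which is $e^{-nG(\tilde{t}) + O(\ln n)}$.

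For the matching lower bound, it suffices to keep just the $z=0$ term. By Lemma~\ref{lem:f-lower}, $f(n\tilde{t}) \ge C\, n^{-m/2}\, e^{-nG(\tilde{t})}$, which is already $e^{-nG(\tilde{t}) + O(\ln n)}$ because the polynomial prefactor only contributes $O(\ln n)$ to the exponent. This is the crucial simplification coming from the weaker target in the singular case: we do not need to identify the sharp order of the sum (which would depend on whether one, both, or neither one-sided derivative vanishes), only that it lies within polynomial factors of $e^{-nG(\tilde{t})}$.

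The main obstacle I anticipate is bookkeeping at the transition point itself. The proof of Lemma~\ref{lem:f-precise-nonsingular} used that the active set $\calE(t)$ is constant in a full two-sided neighborhood of $\tilde{t}$, so that the one factor coming from the index $j^\star \in \calE^+ \triangle \calE^-$ (the index that enters or leaves $\calE$ across $\tilde{t}$) behaves uniformly. Here that uniformity fails: for $j^\star$, the tail probability $\Pr[X_{j^\star} > x]$ or $\Pr[X_{j^\star} < x]$ is asymptotically $1/2$ at $z=0$, but crosses between the ``$\to 1$'' and ``$\to 0$ exponentially'' regimes as $z$ changes sign. I would handle this by bounding this single factor by $1$ in the upper bound (which only costs a constant) and by the Berry--Esseen estimate from the proof of Lemma~\ref{lem:f-lower} at $z=0$ (which gives $\ge 1/4$), so this factor contributes at most an $O(1)$ multiplicative error on both ends. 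All other factors admit the smooth one-sided Stirling/Chernoff expansions already developed, and the proof then reduces to the geometric/Gaussian sums described above.
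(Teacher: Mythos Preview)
Your proposal is correct, and your lower bound is exactly the paper's argument (keep a single term near $z=0$ and invoke Lemma~\ref{lem:f-lower}). However, for the upper bound you work much harder than the paper does. The paper's upper bound is one line: from Lemma~\ref{lem:f-upper} and $G(t)\ge G(\tilde t)$ for all $t$, every term satisfies $f(n\tilde t+z)\le e^{-nG(\tilde t)}$, and there are at most $2\eta n+1=O(n)$ terms, so the sum is $\le O(n)\,e^{-nG(\tilde t)}$. Since the target is only $e^{-nG(\tilde t)+O(\ln n)}$, an $O(n)$ prefactor is already good enough; no Taylor expansion, no case split on the sign of $G'(\tilde t^\pm)$, and no geometric/Gaussian summation are needed.

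Your more refined analysis is the right idea if one wants the sharper polynomial prefactor discussed in Remark~\ref{rem:polynomial-factors}, and your ``main obstacle'' paragraph correctly identifies where that refinement would require care. But for the stated lemma, all of this extra machinery---including the handling of the transitioning index $j^\star$---is unnecessary: the crude upper bound already absorbs every such complication into the $O(\ln n)$ slack. One small slip to note: in the line ``$nG(\tilde t+z/n)\ge nG(\tilde t)+cz-O(z^2/n)$'' the sign on the quadratic correction should be $+$ (convexity helps you here), so the geometric-series bound is in fact cleaner than you wrote.
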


\begin{proof}
\myparagraph{Upper bound:}

By Lemma~\ref{lem:f-upper}, $f(x) \leq e^{-nG(x/n)}$ for all $x$. Since $G(t) \geq G(\tilde{t})$ for all $t$:
\[
\sum_{|z| \leq \eta n} f(n\tilde{t} + z) \leq (2\eta n + 1) \cdot e^{-nG(\tilde{t})} = O(n) \cdot e^{-nG(\tilde{t})}.
\]

\myparagraph{Lower bound:}

By Lemma~\ref{lem:f-lower}, there exists $\delta_0 > 0$ such that for $t \in [\delta_0, 1-\delta_0]$:
\[
f(x) \geq C n^{-m/2} e^{-nG(t)}.
\]

Consider the single term $x = \lfloor n\tilde{t} \rfloor$ (or $\lceil n\tilde{t} \rceil$). Let $t^* = x/n$, so $|t^* - \tilde{t}| \leq 1/n$.

By Taylor series of $G$ around $\tilde{t}$:
\[
G(t^*) = G(\tilde{t}) + O(1/n)
\]
(using the fact that $G$ is continuous and has bounded derivatives on compact subsets).

Therefore:
\[
f(x) \geq C n^{-m/2} e^{-nG(t^*)} = C n^{-m/2} e^{-nG(\tilde{t}) - O(1)} = \Omega(n^{-m/2}) \cdot e^{-nG(\tilde{t})}.
\]

\myparagraph{Conclusion:}

Combining the bounds:
\[
\Omega(n^{-m/2}) \cdot e^{-nG(\tilde{t})} \leq \sum_{|z| \leq \eta n} f(n\tilde{t} + z) \leq O(n) \cdot e^{-nG(\tilde{t})}.
\]

\end{proof}

\begin{remark}\label{rem:polynomial-factors}
In the non-singular case, the more precise result $\sum f = \Theta(n^{-(s-1)/2} e^{-nG(\tilde{t})})$ holds (Lemma~\ref{lem:sum-nonsingular}). 

In the singular case where $\tilde{t} = p_j$, the polynomial factor depends on the left and right derivatives $G'(\tilde{t}^-)$ and $G'(\tilde{t}^+)$. Specifically:
\begin{itemize}
    \item If $G'(\tilde{t}^-) = G'(\tilde{t}^+) = 0$ (i.e., $\tilde{t}$ is also a critical point from both sides), the polynomial factor is $\Theta(n^{-(s-1)/2})$, same as the non-singular case.
    \item If $G'(\tilde{t}^-) < 0 < G'(\tilde{t}^+)$ (the generic singular case), the tilted Gaussian sums on each side yield polynomial factors between $n^{-m/2}$ and $n^{-(s-1)/2}$.
\end{itemize}
In all cases, the weaker bound $e^{-nG(\tilde{t}) + O(\ln n)}$ holds.
\end{remark}

\subsection{Completing the Proof of Theorem~\ref{thm:main}}

\begin{proof}[Proof of Theorem~\ref{thm:main}]
By Lemma~\ref{lem:localize} with any fixed $\eta > 0$:
\[
P = \sum_{|x - n\tilde{t}| < \eta n} f(x) + o(n^{-m/2} e^{-nG(\tilde{t})}).
\]

By Lemmas~\ref{lem:sum-nonsingular} and~\ref{lem:sum-singular}:
\[
\sum_{|x - n\tilde{t}| < \eta n} f(x) = e^{-nG(\tilde{t}) + O(\ln n)}.
\]

The error term $o(n^{-m/2} e^{-nG(\tilde{t})}) = O(e^{-nG(\tilde{t}) + O(\ln n)})$ as well.

Therefore:
\[
\ln P = -nG(\tilde{t}) + O(\ln n). \qedhere
\]
\end{proof}

\subsection{Discussion}
\label{apx:algorithm}
\subsubsection{Computing $\tilde{t}$}
\label{apx:compute_tilde_t}
Given parameters $\{p_1, \ldots, p_m\}$, the minimizer $\tilde{t}$ can be found by the following algorithm:

\myparagraph{Step 1:} Sort the singular points. Let $Q = \{p_{k_1+1}, \ldots, p_m\}$ be the parameters of non-equality constraints, and sort them as $0 = q_0 < q_1 < \cdots < q_r < q_{r+1} = 1$.

\myparagraph{Step 2:} For each interval $(q_j, q_{j+1})$, the active set $\calE_j := \calE(t)$ is constant for $t \in (q_j, q_{j+1})$. Compute minimum of G(t) with $\calE_j$ in $(0, 1)$:
\[
t_j^* = \frac{\alpha_j}{\alpha_j + \beta_j}, \quad \text{where } \alpha_j = \left(\prod_{i \in \calE_j} p_i\right)^{1/|\calE_j|}, \quad \beta_j = \left(\prod_{i \in \calE_j} (1-p_i)\right)^{1/|\calE_j|}.
\]
If $t_j^* \in (q_j, q_{j+1})$, compute $G(t_j^*)$.

\myparagraph{Step 3:} For each singular point $q_\ell \in Q$, check if $\tilde{t} = q_\ell$ is a local minimum by verifying:
\[
G'(q_\ell^-) \leq 0 \leq G'(q_\ell^+).
\]
If so, compute $G(q_\ell)$.

\myparagraph{Step 4:} The global minimizer $\tilde{t}$ is the candidate with the smallest value of $G$.

Since there are at most $r+1$ intervals and $r$ singular points, this algorithm runs in $O(m \log m)$ time (dominated by sorting).

\subsubsection{The role of $s$}

The parameter $s = |\calE(\tilde{t})|$ counts:
\begin{itemize}
    \item All equality constraints ($k_1$ indices)
    \item ``$>$'' constraints where $p_i < \tilde{t}$ (rare events)
    \item ``$<$'' constraints where $p_j > \tilde{t}$ (rare events)
\end{itemize}

In the non-singular case, the polynomial correction is precisely $n^{-(s-1)/2}$.

\subsubsection{Explicit formula in the non-singular case}

When $\tilde{t}$ is non-singular, the minimizer $\tilde{t}$ satisfies $G'(\tilde{t}) = 0$, which gives:
\[
\sum_{i \in \calE(\tilde{t})} \ln\frac{\tilde{t}(1-p_i)}{p_i(1-\tilde{t})} = 0.
\]
Define:
\[
\alpha = \left(\prod_{i \in \calE(\tilde{t})} p_i\right)^{1/s}, \quad \beta = \left(\prod_{i \in \calE(\tilde{t})} (1-p_i)\right)^{1/s}.
\]

Then $\tilde{t} = \alpha/(\alpha + \beta)$ and $e^{-nG(\tilde{t})} = (\alpha + \beta)^{sn}$.

The full formula is:
\[
P = \frac{(\alpha+\beta)^{sn}}{(2\pi n)^{(s-1)/2} \cdot (\alpha\beta)^{(s-1)/2} \cdot \sqrt{s}} \cdot \prod_{i \in \calB(\tilde{t})} \frac{e^{-\lambda_i}}{1-e^{-\lambda_i}} \cdot \prod_{j \in \calD(\tilde{t})} \frac{e^{-\mu_j}}{1-e^{-\mu_j}} \cdot \frac{1}{2^{a'+c'}} \cdot (1 + O(1/\sqrt{n}))
\]
where $\lambda_i = \ln\frac{\tilde{t}(1-p_i)}{p_i(1-\tilde{t})}$ for $i \in \calB(\tilde{t})$, $\mu_j = \ln\frac{p_j(1-\tilde{t})}{\tilde{t}(1-p_j)}$ for $j \in \calD(\tilde{t})$, and $a' = |\calAp(\tilde{t})|$, $c' = |\calCp(\tilde{t})|$.

\subsubsection{The singular case}

When $\tilde{t} = p_j$ for some $j \in \{k_1+1, \ldots, m\}$, the minimizer is explicitly given (not determined by a first-order condition). The exponential rate is:
\[
G(\tilde{t}) = G(p_j) = \sum_{i \in \calE(p_j)} \KL{p_j}{p_i}
\]
where $\calE(p_j) = \{1, \ldots, k_1\} \cup \calB(p_j) \cup \calD(p_j)$ and index $j$ itself is \emph{not} in $\calE(p_j)$ (since $p_j \not< p_j$ and $p_j \not> p_j$).

Equivalently:
\[
e^{-nG(p_j)} = \prod_{i \in \calE(p_j)} \left(\frac{p_j}{p_i}\right)^{np_j} \left(\frac{1-p_j}{1-p_i}\right)^{n(1-p_j)}.
\]

The polynomial factor in this case is not simply $n^{-(s-1)/2}$ but depends on $G'(p_j^-)$ and $G'(p_j^+)$, as discussed in Remark~\ref{rem:polynomial-factors}.

\section{Numerical Verification for Theorem~\ref{thm:strategic_unit}.}
\label{apx:rarity}
Here we display our numerical verification that the necessary condition characterized in Theorem~\ref{thm:strategic_unit} is rare in all information structures. 

\myparagraph{Simulation Design} In each setting, we fix the number of alternative $m$, number of winners $B$, range of quality $\qualset$. Then we enumerate all the pairs of partition $\calP$ and quality vector $\Quallist$ in $\piv^+$ and $\piv^-$ (defined in Appendix~\ref{apx:strategic_unit}) and construct the two sets. Then, we randomly sample 1000 points from the information structure space $(0, 1)^{\bar{\ell} m}$. For each sample $\sigp$, we calculate $G(\calP, \Quallist, \sigp)$ for every pair in $\piv^+$ and $\piv^-$, and compare the minimum between both sets. If the difference between $G^+(\sigp)$ and $G^-(\sigp)$ is in $\varepsilon = 10^{-8}$, we view as the necessary condition holds under this $\sigp$. $G$ functions are computed as the algorithm in Appendix~\ref{apx:algorithm}. 

\myparagraph{Execution Environment.} All the experiments are run with Python 3.14 on a Laptop with an Intel Ultra 258V CPU, 32G RAM, and Windows 11 Home System. 

\myparagraph{Result}
Table~\ref{tbl:rarity} shows our results of the simulation. In both settings, the condition holds in a rate less than 1\%. 
\begin{table}[htbp]
\centering
\begin{tabular}{@{}rrrrrrr@{}}
\toprule
$m$ & $B$ & Max Quality $\bar{\ell}$ & Pairs in $\piv^+$ & Pairs in $\piv^-$ & Conditions hold times & Condition hold rate \\ \midrule
5 & 2 & 2 & 3159 & 3159 & 7 & 0.7\% \\
6 & 3 & 2 & 24543 & 24543 & 2 & 0.2\% \\ \bottomrule
\end{tabular}
\caption{Result of Simulation~\label{tbl:rarity}}
\end{table}

\section{More Experiments}
\label{apx:experiment}

Here, we give a more detailed explanation of how we design our experiments as well as a more thorough description of the experimental results. 

\subsection{Experimental Design}

\myparagraph{Instance Generation.} In each instance, we fix the number of agents $n$, number of alternatives $m$, cost ratio $\alpha$ (we fix the min-cost at 1 in the experiments), budget $B$, the utility function $\vt$, and quality range $\qualset$. We call a set of fixed parameters a {\em setting}. In the first step, we generate a quality vector, the cost of agents, and the information structure. The common prior of every alternative is fixed to the uniform distribution on $\qualset$. The cost of each alternative is uniformly generated from $[1, \alpha]$, with a guaranteed 1 and a guaranteed $\alpha$. For information structure, we divide $[0.1, 0.9]$ by the number of possible qualities, and pick each $\sigp_j^\ell$ u.a.r in each region. We call a setting together with the first-step generated parameters a {\em trial}. In the second step, we follow the generated information structure to drawn agent signal i.i.d. A trial together with a set of signals is called a {\em sample}. For each set of fixed parameters, we generate 100 trials, and for each trial, we generate 100 samples. Therefore, each set of fixed parameters (which corresponds to one point in the figures) includes 10000 samples.

\myparagraph{Voting Rules.} We test eight rules in our experiments: $\av$, $\av$/cost (pick the alternative with the highest (approval count)/(cost) ratio in each step, degenerated to $\av$ in unit cost), $\pav$, $\gc$, $\phr$, $\mes$, $\mes$+$\av$, and $\mes$ + $\phr$. 

\myparagraph{Performance Calculation.} For each sample, we run each voting rule and calculate the utility of its outcome. At the same time, we also run a knapsack program to find the optimum to find the feasible set of alternatives with the highest utility. For each voting rule $r$, we compute the ratio between its outcome and the optimal outcome. The empirical performance of $r$ in a setting is the average of all such ratios of all the samples in this setting. 

\myparagraph{Execution Environment.} All the experiments are run with Python 3.14 on a Laptop with an Intel Ultra 258V CPU, 32G RAM, and Windows 11 Home System. 

\subsection{Results.}

In all the experiments, we fix $m = 8$, and $\qualset= \{0, 1, 2\}$. 

\begin{figure}
\centering
\begin{subfigure}{0.49\textwidth}
\centering
\includegraphics[width=\textwidth]{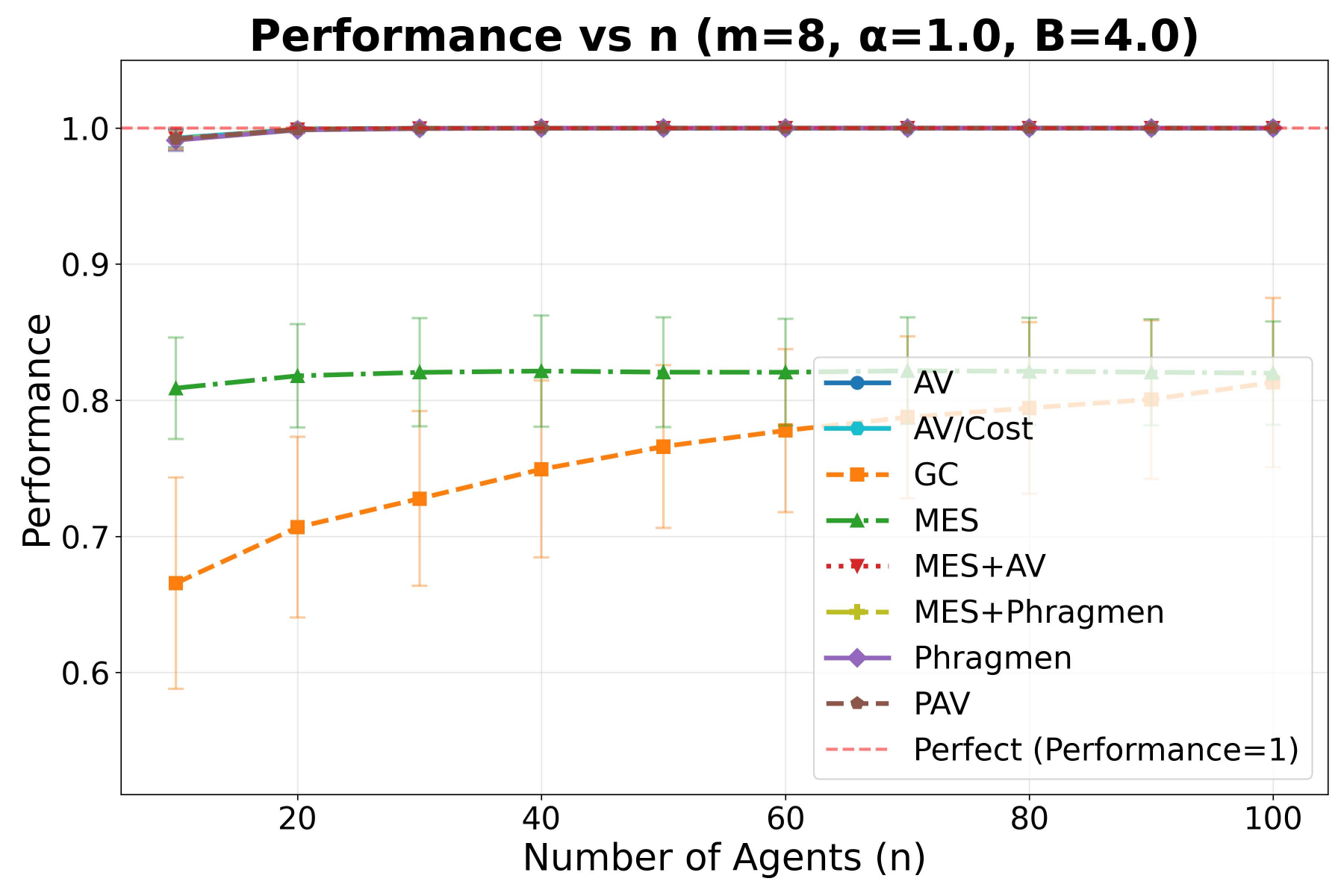}
\caption{$\alpha=1$}
\end{subfigure}
\begin{subfigure}{0.49\textwidth}
\centering
\includegraphics[width=\textwidth]{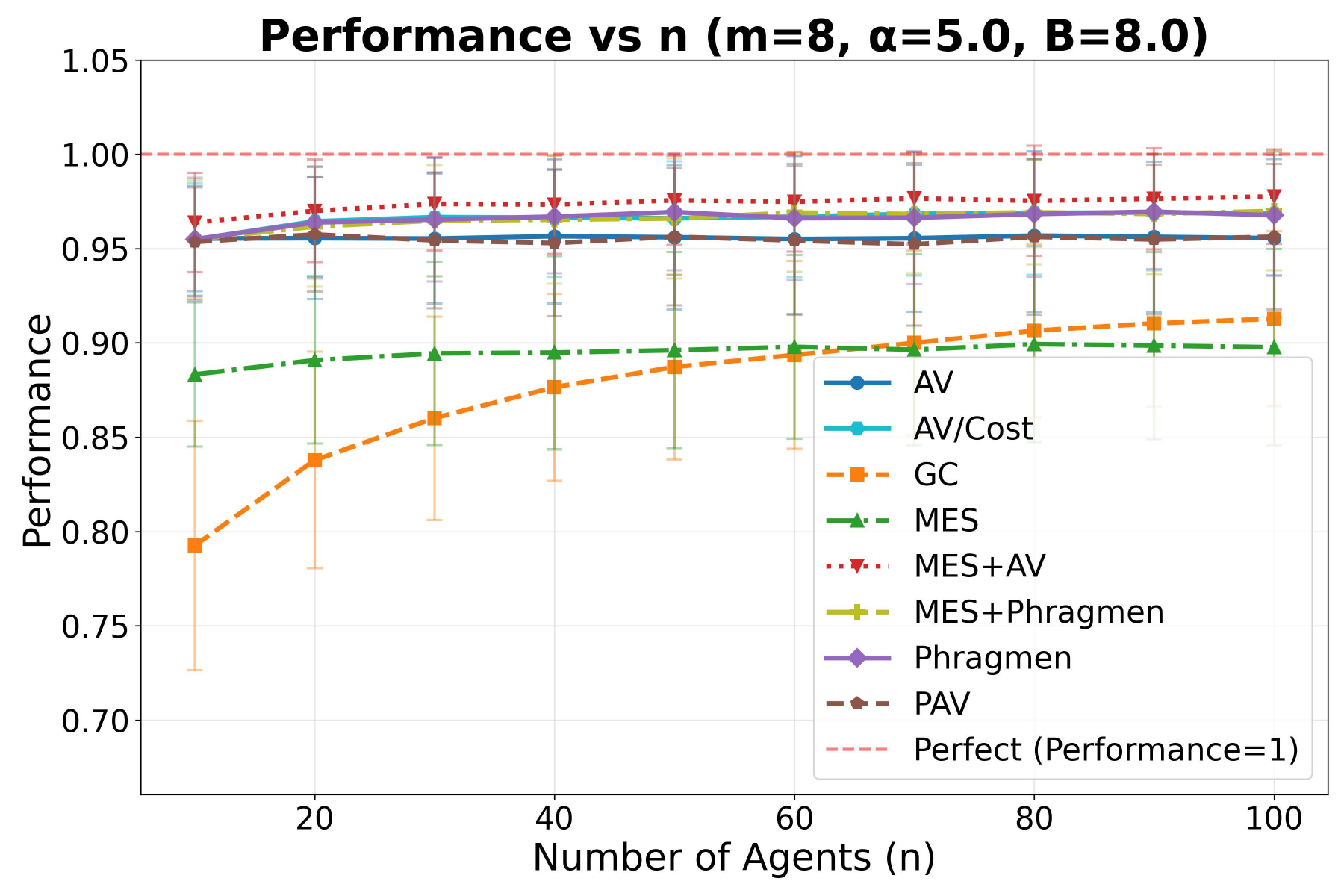}
\caption{$\alpha = 5$}
\end{subfigure}
\caption{Performance on the number of agents, normal utility.\label{fig:apx_performance}}
\end{figure}

\myparagraph{Performance on number of agents.} Figure~\ref{fig:apx_performance} shows the relationship between performance and the number of agents in different scenarios. We consider two settings in this scenario: unit cost ($\alpha=1$ and $B = 4$) and general cost ($\alpha = 5$ and $B = 8$). The number of agents takes $n = 10, 20, \cdots, 100$. In the unit case, most voting rules have performance very close to 1 starting from $n = 20$, supporting our Theorem~\ref{thm:unit_positive}. The only two exceptions are $\mes$ and $\gc$. The cause of $\mes$'s low performance is that it may not exhaust the budget. After adding an exhaustion rule, the performance goes back to 1. For $\gc$, as we will see in Figure~\ref{fig:gc}, $\gc$ has an extremely low convergence rate, and its performance reaches $0.99$ only after $n \ge 3000$. In the general case, on the other hand, the performance of all rules is bounded away from 1, but still much better than the worst case upper bound, which is $4/7\approx 0.57$ in this case. With a paired t-test on each pair of voting rules, we have 95\% confidence that $\mes$ + $\av$ has the highest performance; $\av$/cost, $\mes$+$\phr$, and $\phr$ are in the second place; and $\av$ and $\pav$ are in the third place. Additional experiments on the convergence of each rule can be found in Figure~\ref{fig:convergence}. 

\myparagraph{Effect of budget }
Figure~\ref{fig:budget} shows the effect of the budget $B$ on the performance. We fix $\alpha =5$ and increase the budget $B$ from $5$ to $30$ with a step of 5. When $B$ is close to 30 and can hold most of the alternatives (the expected total cost is 27), all rules have higher performance. However, when $B$ is small, the behavior of different voting rules is quite different.
\begin{figure}
    \centering
    \includegraphics[width=0.6\linewidth]{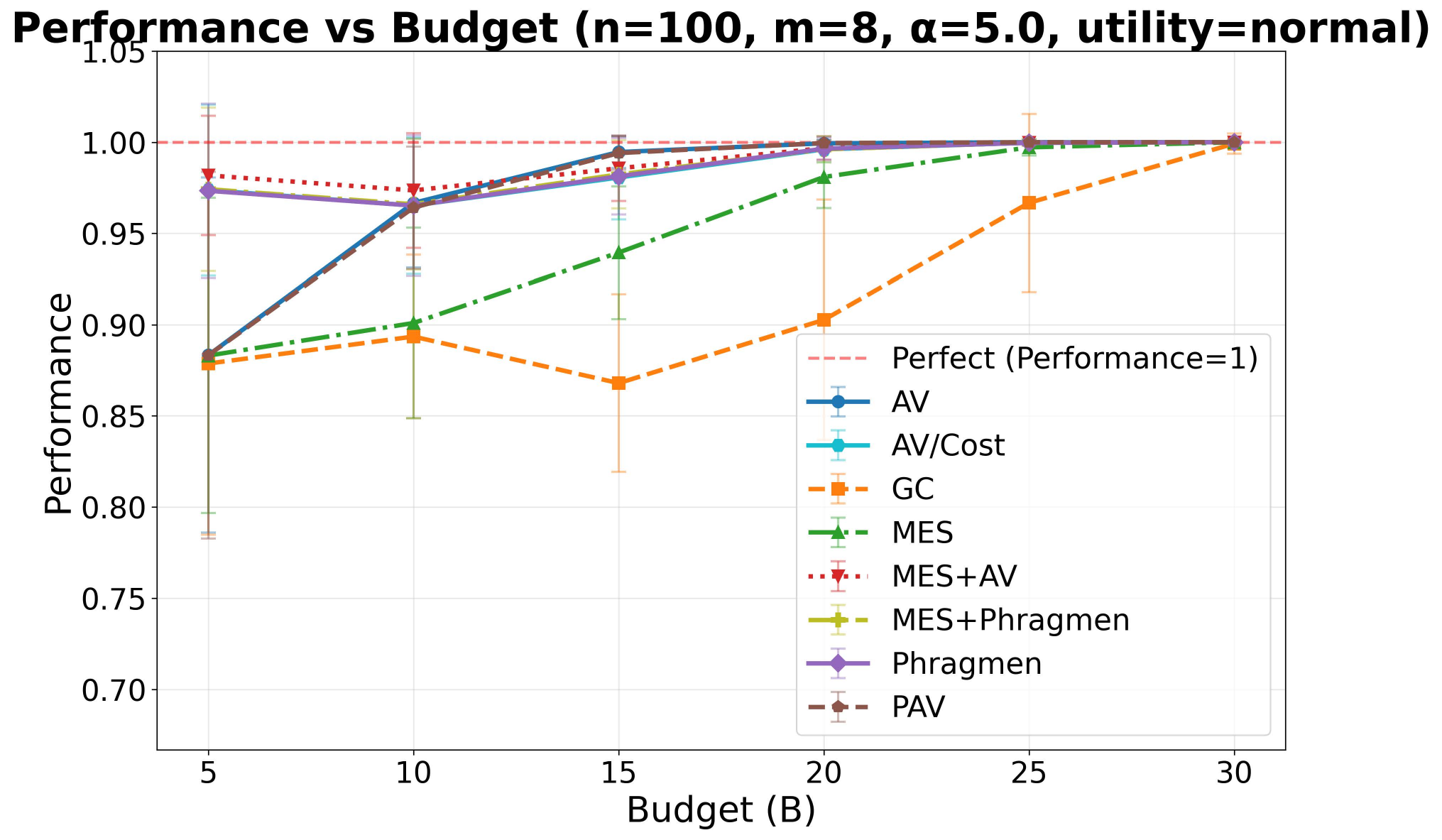}
    \caption{Performance under different budgets, normal utility \label{fig:budget}}. 
   
\end{figure}

\myparagraph{Effect of cost ratio.} Figure~\ref{fig:apx_alpha} tests the effect of diferrent cost ratio $\alpha$ on the performance. We fix the budget $B = 7$ and set $\alpha = 1, 3, 5, 7$. In general, increasing $\alpha$ causes a decrease in the performance. On the other hand, the effects of different voting rules are different. For  $\av$ and $\pav$, there is an obvious difference (95\% confidence by paired t-test) between different levels of $\alpha$. For $\mes$ and its exhausted variants, the performance between $\alpha = 5$ and $\alpha = 7$ is not significant. Most surprisingly, for $\gc$, the performance is higher for $\alpha = 5$ than $\alpha = 3$. 

We also do an additional test by fixing $B$ to be exactly half of the expected total cost (which is $\frac{(\alpha + 1)m}{2}$) and increasing $\alpha$ from 1 to 10, as shown in Figure~\ref{fig:alpha_ratio}. Except for $\mes$ and $\gc$, there is a decrease in most voting rules as $\alpha$ increases, and the performance seems to be converging as $\alpha$ approaches 10. We conjecture that both the increase in the absolute value of $\alpha$ and the decrease in the ratio between $B$ and $\alpha$ cause the performance decrease in Figure~\ref{fig:apx_alpha}. 

\myparagraph{Utility Function.} For all the experiments mentioned above, we run the experiments for both normal utility and cost-proportional utility. No significant difference occurs after changing the utility function. The cost-proportional figures are at the end of this section. 

\begin{figure}
    \centering
    \includegraphics[width=0.5\linewidth]{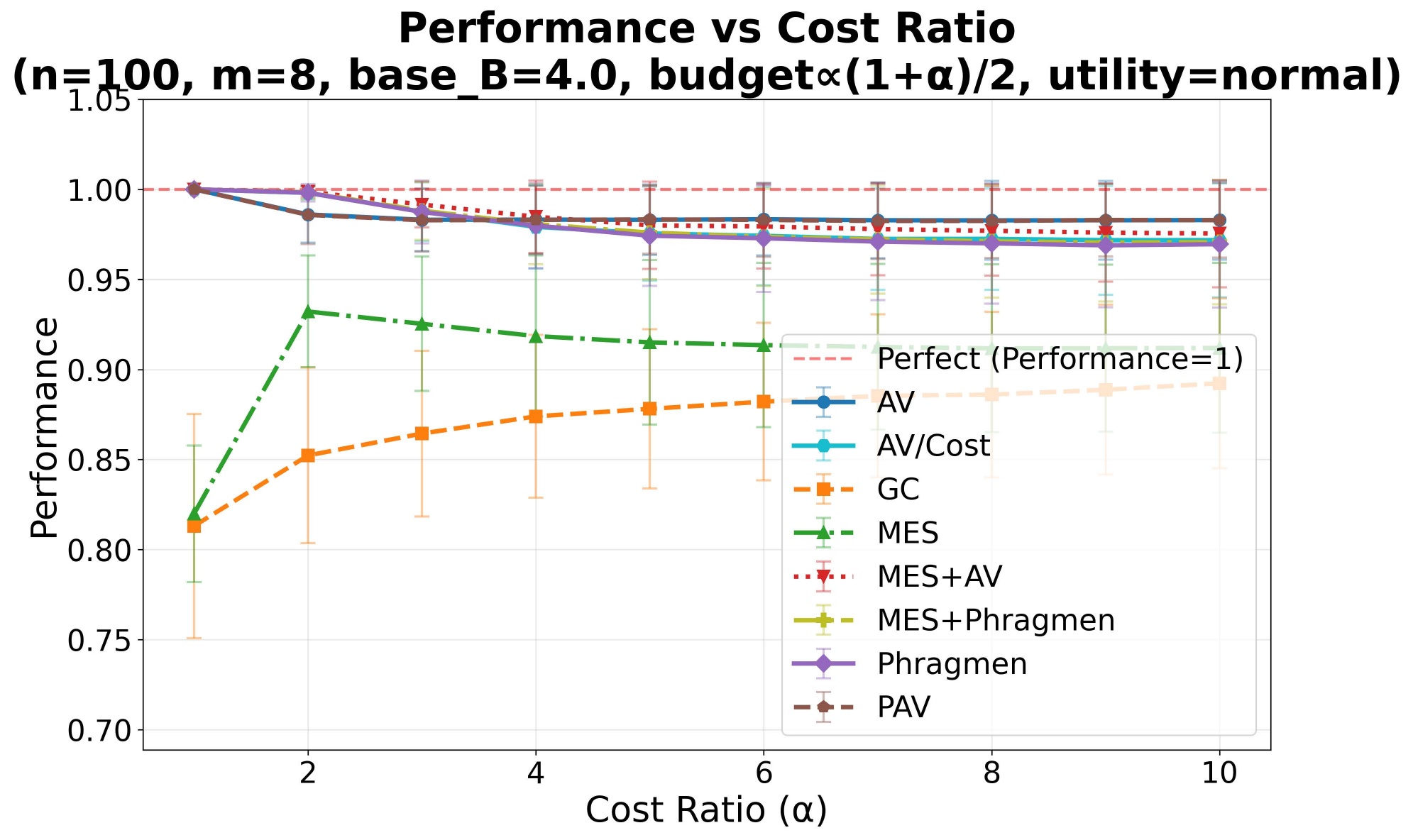}
    \caption{performance when $B$ and total cost are correlated, normal utility \label{fig:alpha_ratio}}
    
\end{figure}

\begin{figure}
    \centering
    \includegraphics[width=0.8\linewidth]{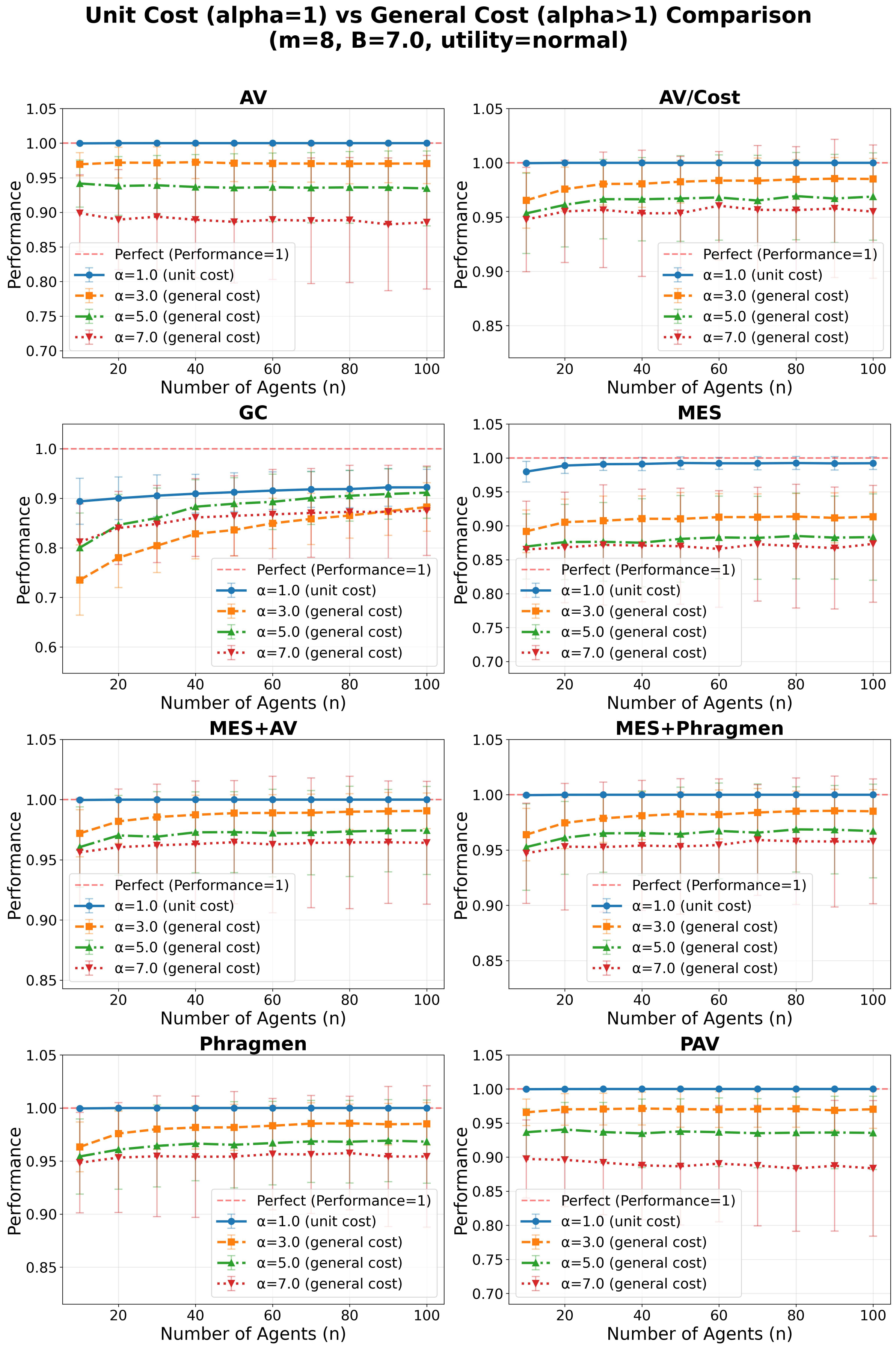}
    \caption{Performance on different cost ratios, normal utility\label{fig:apx_alpha}}
    
\end{figure}

\begin{figure}
    \centering
    \includegraphics[width=0.8\linewidth]{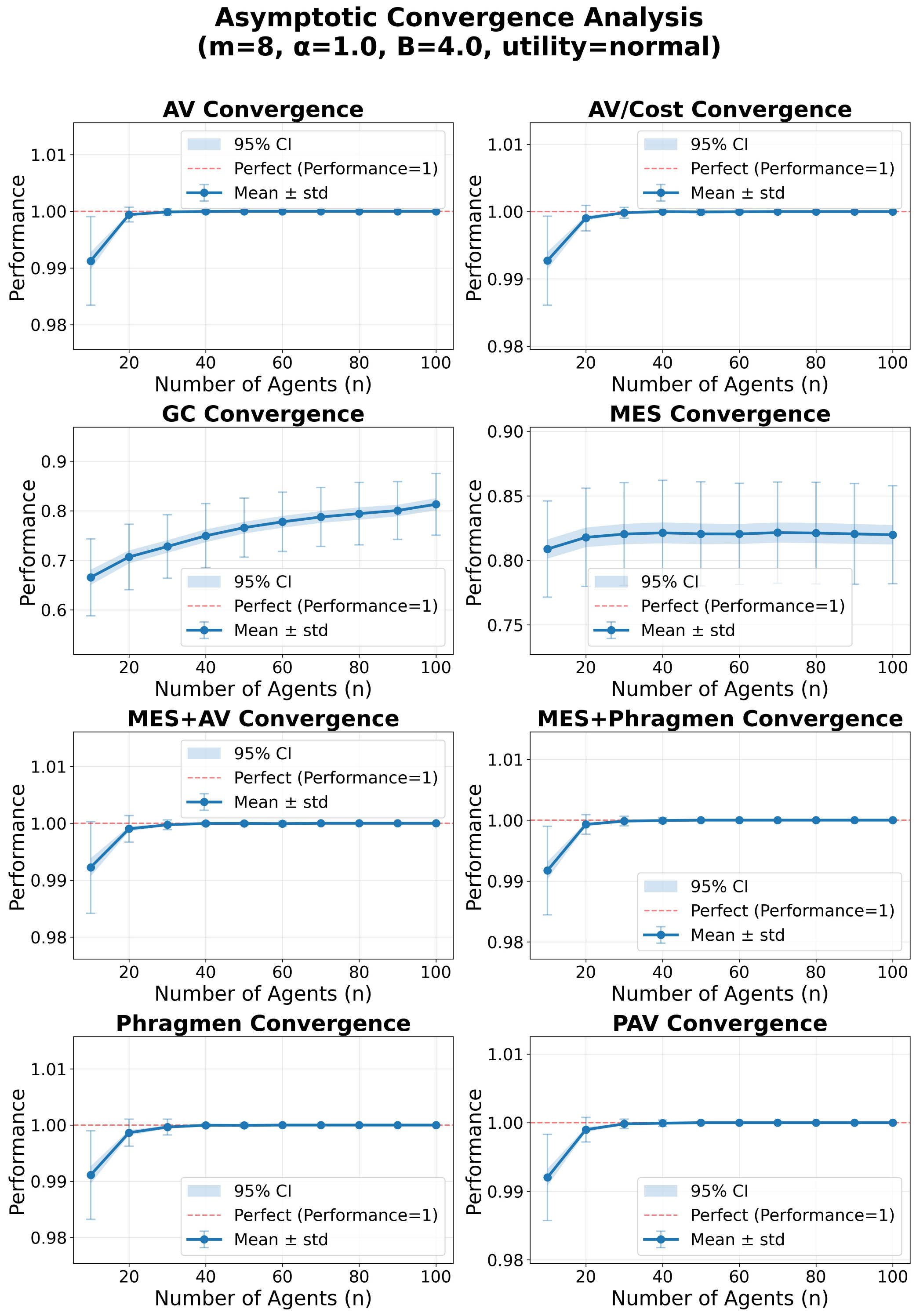}
    \caption{Convergence of rules, normal utility}
    \label{fig:convergence}
\end{figure}

\begin{figure}
    \centering
    \includegraphics[width=0.99\linewidth]{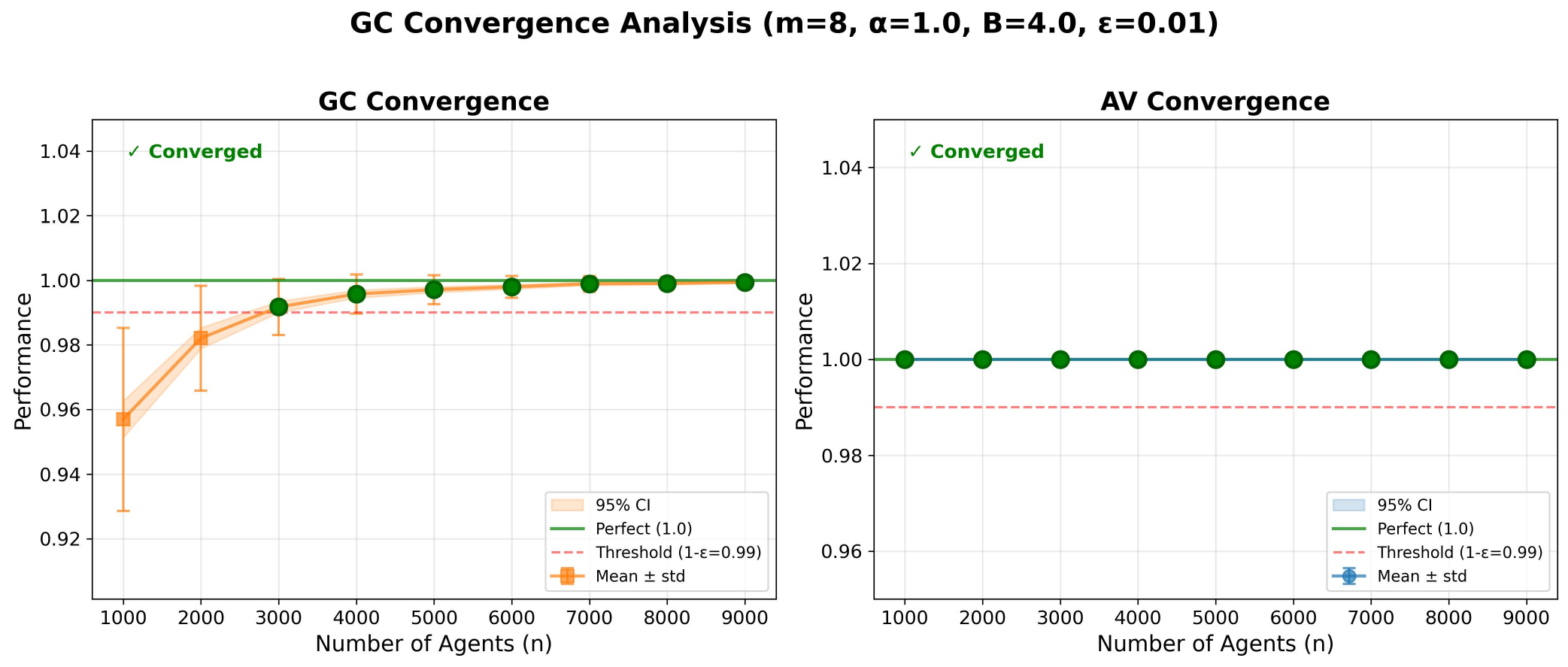}
    \caption{Convergence on $\gc$ in large $n$ (in comparison with $\av$), normal utility\label{fig:gc}}. 
    
\end{figure}

\begin{figure}[b]
\centering
\begin{subfigure}{0.49\textwidth}
\centering
\includegraphics[width=\textwidth]{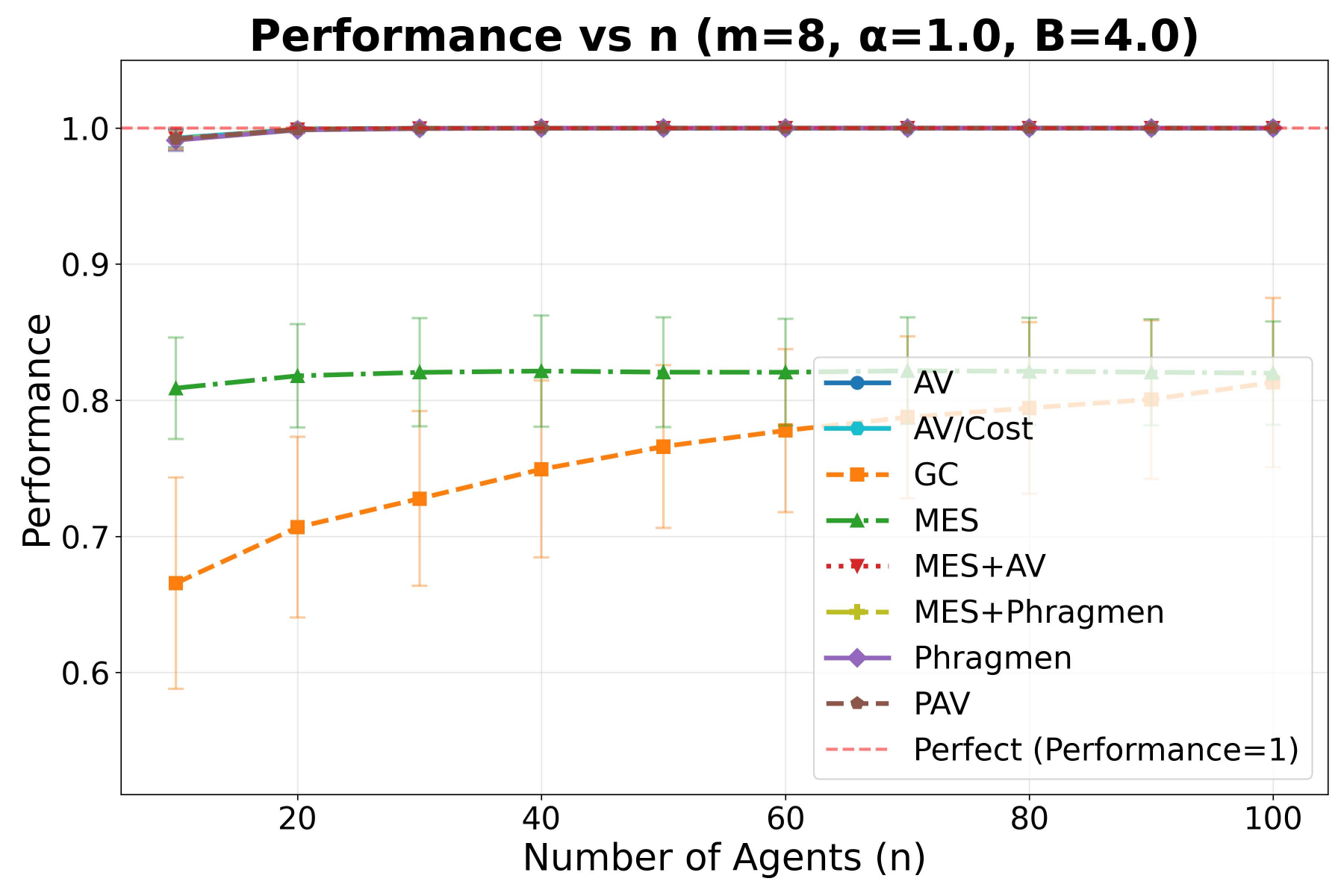}
\caption{$\alpha=1$}
\end{subfigure}
\begin{subfigure}{0.49\textwidth}
\centering
\includegraphics[width=\textwidth]{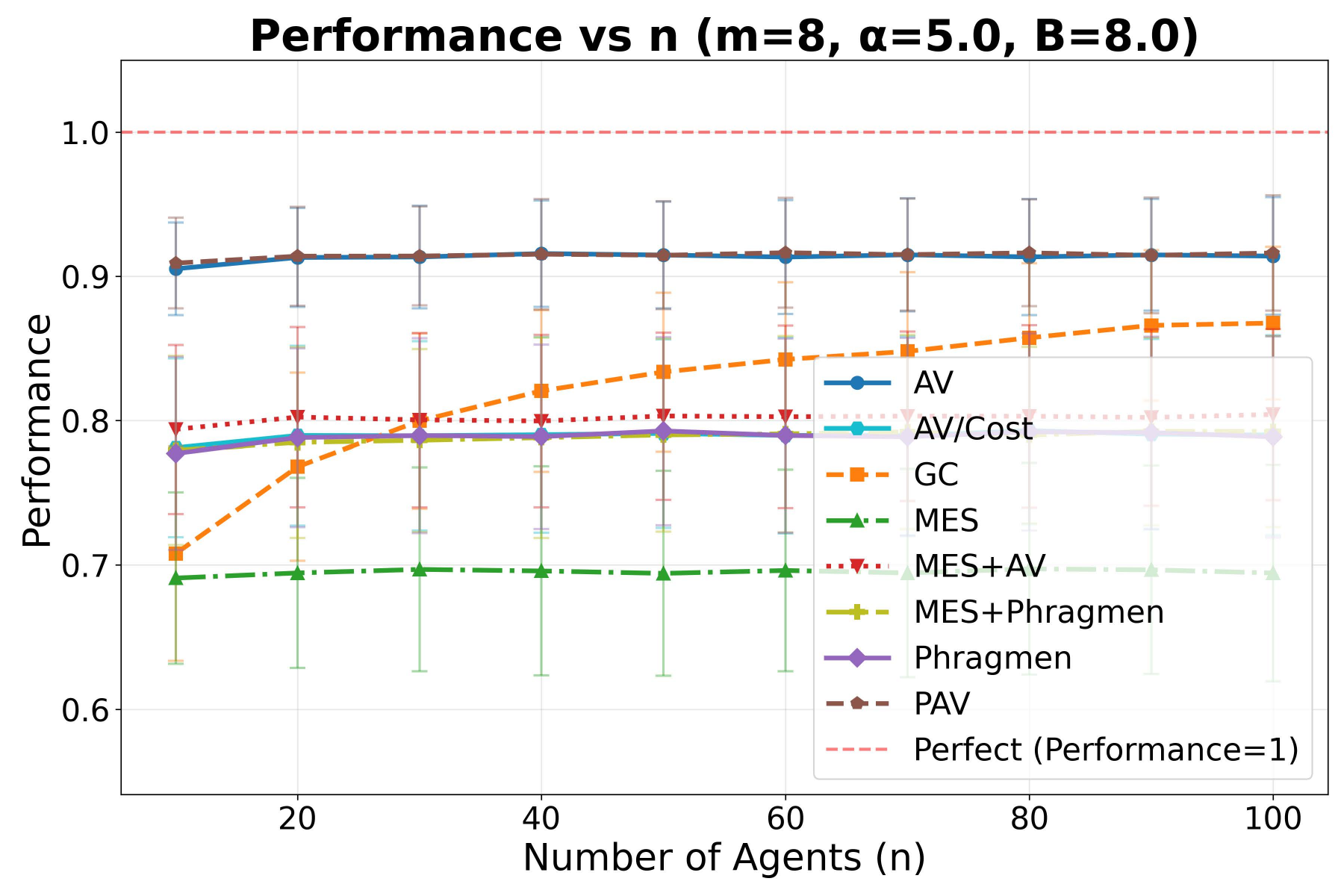}
\caption{$\alpha = 5$}
\end{subfigure}
\caption{Performance on the number of agents, cost-proportional utility.\label{fig:apx_performance_c}}
\end{figure}

\begin{figure}[b]
    \centering
    \includegraphics[width=0.6\linewidth]{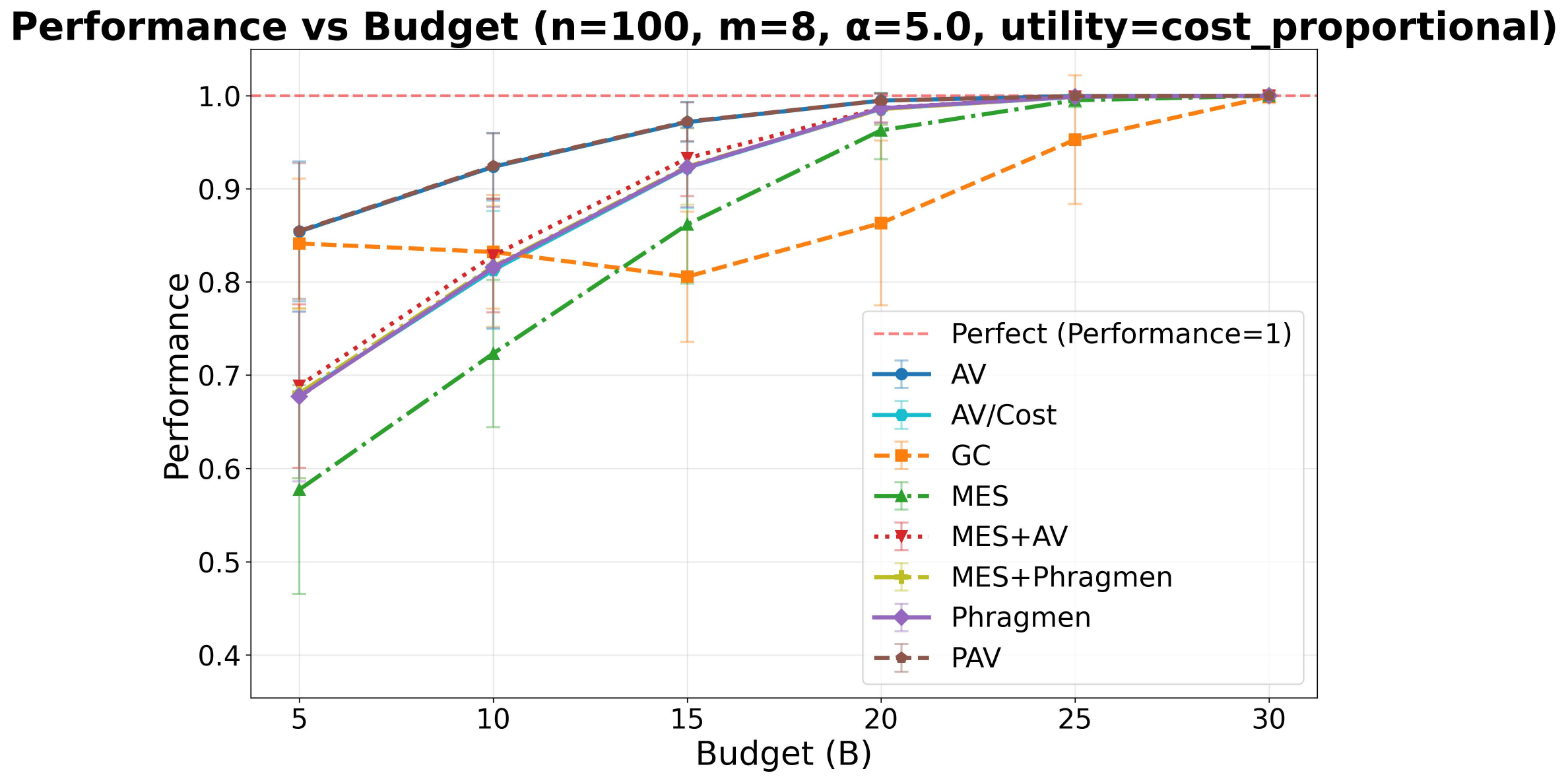}
    \caption{Performance under different budgets, cost-proportional utility \label{fig:budget_c}} 
   
\end{figure}

\begin{figure}[b]
    \centering
    \includegraphics[width=0.6\linewidth]{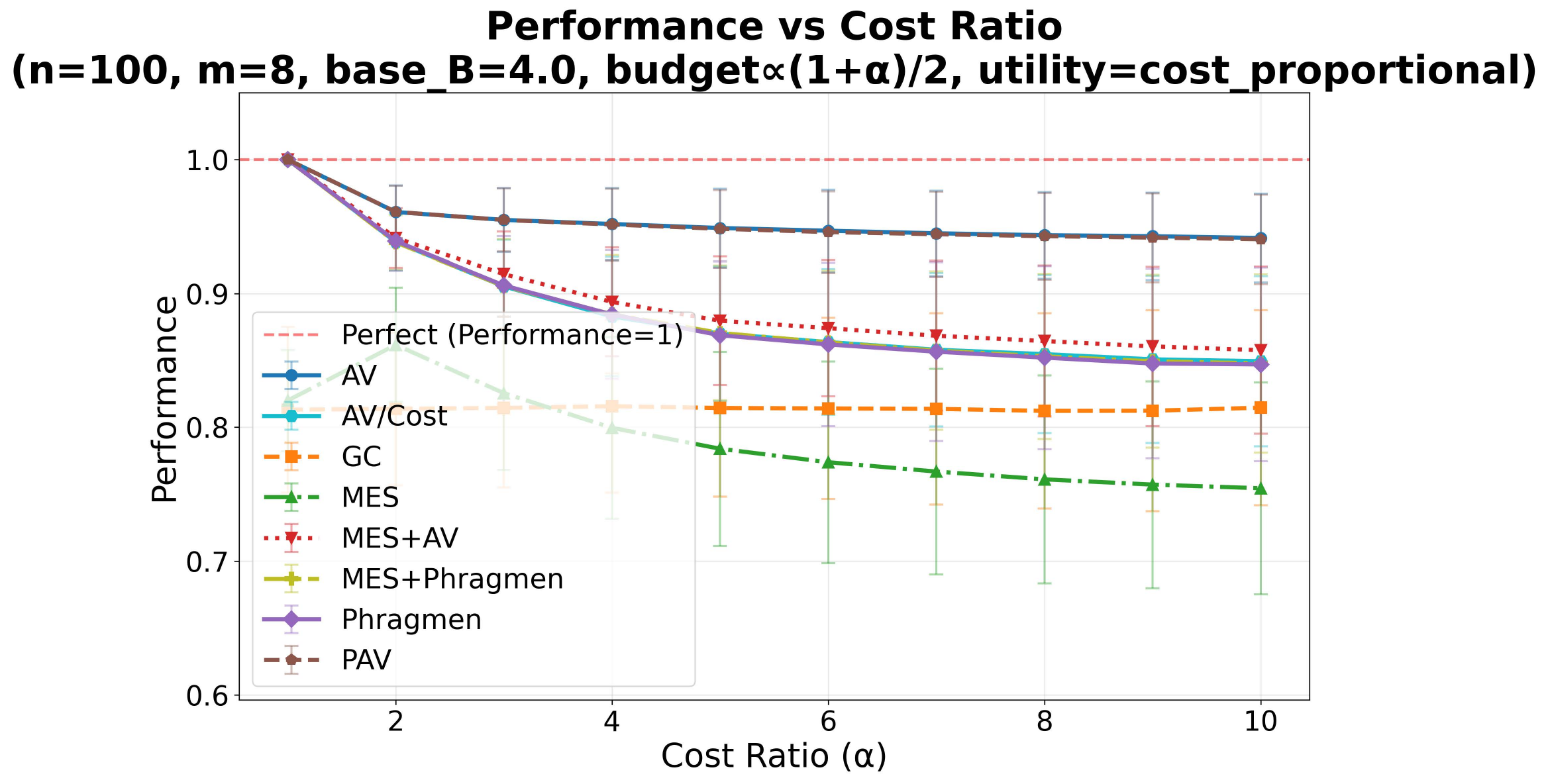}
    \caption{performance when $B$ and total cost are correlated, cost-proportional utility \label{fig:alpha_ratio_c}}
    
\end{figure}

\begin{figure}[b]
    \centering
    \includegraphics[width=0.8\linewidth]{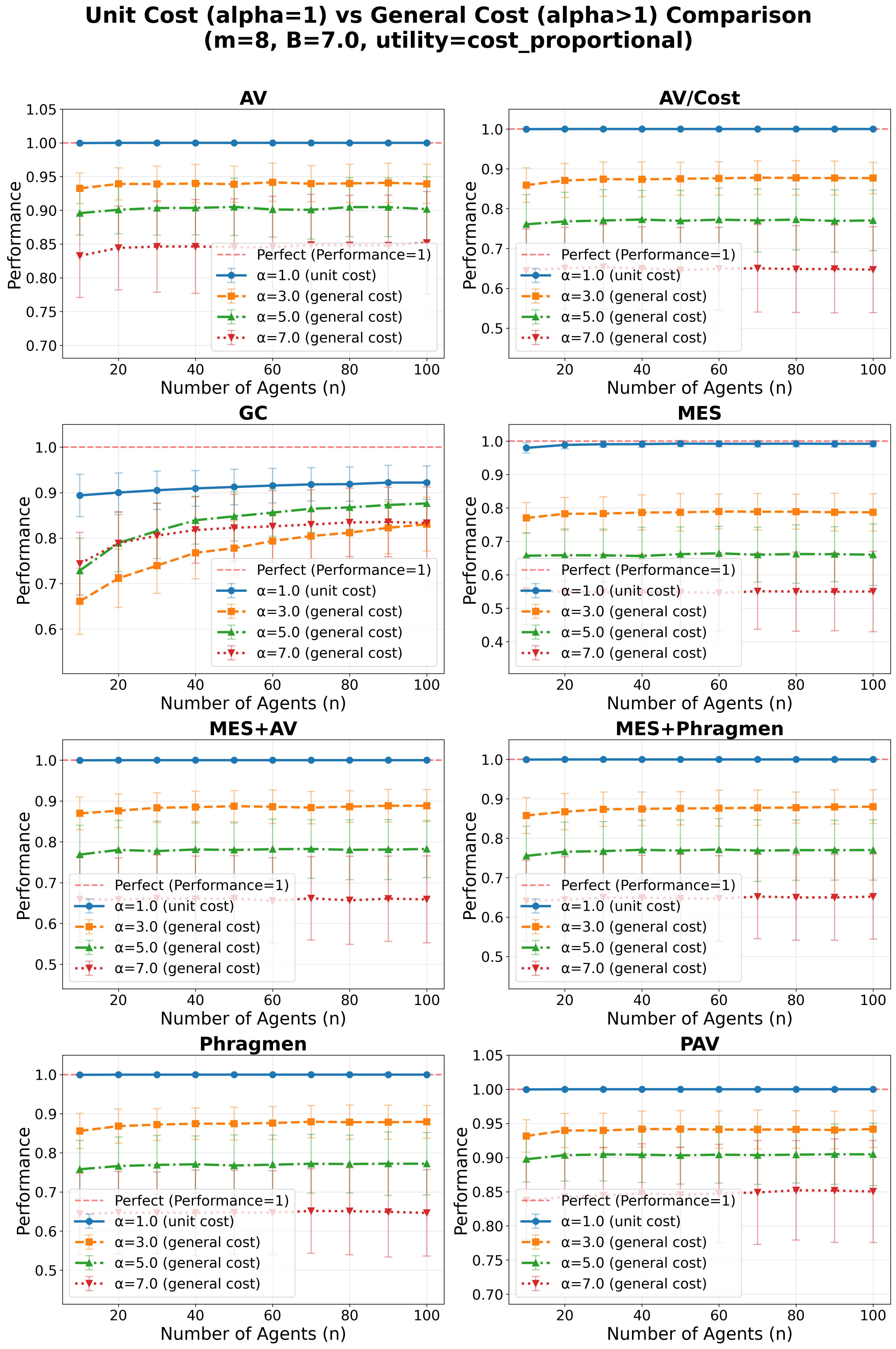}
    \caption{Performance on different cost ratios, cost-proportional utility\label{fig:apx_alpha_c}}
    
\end{figure}

\begin{figure}
    \centering
    \includegraphics[width=0.8\linewidth]{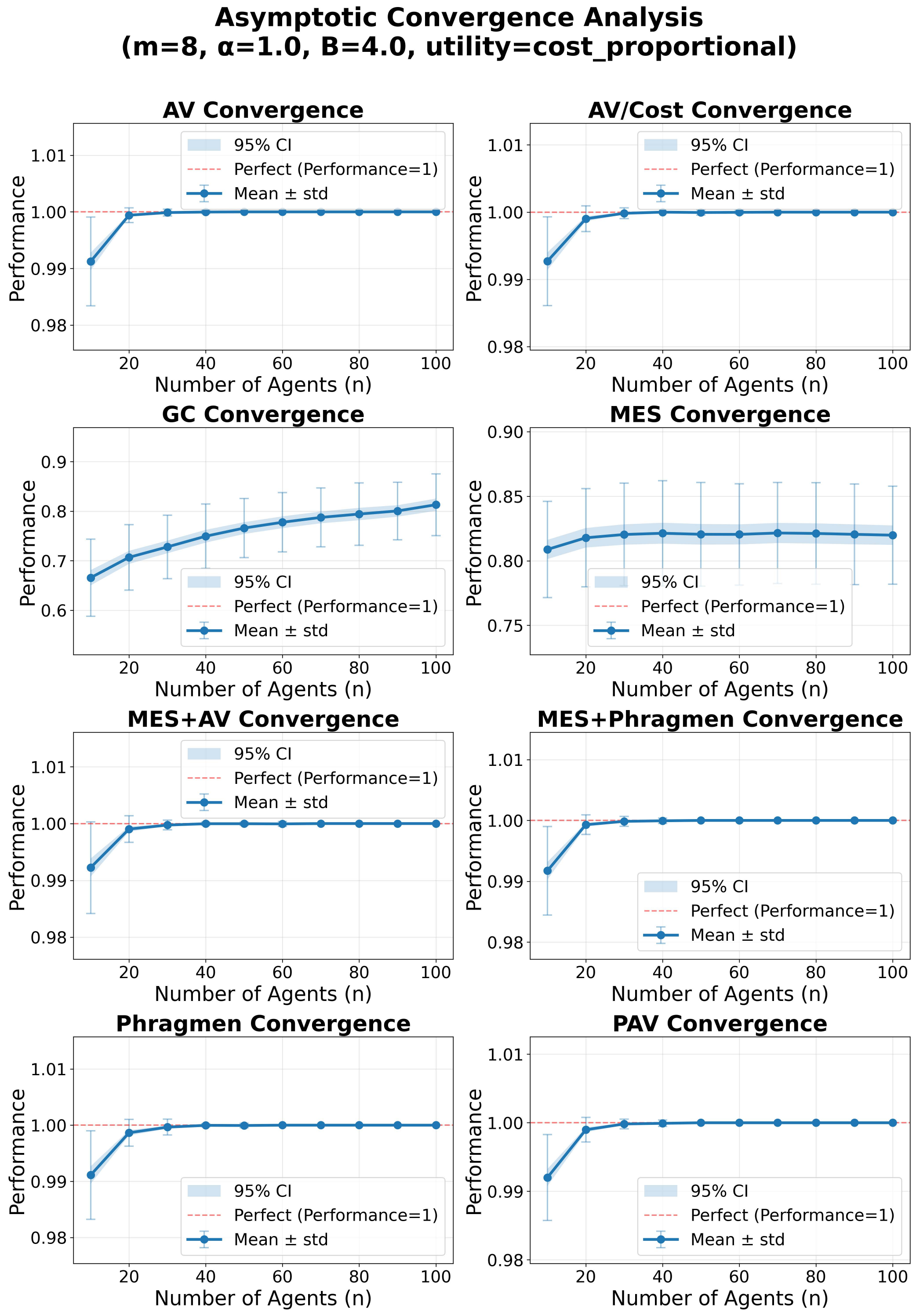}
    \caption{Convergence of rules, cost-proportional utility}
    \label{fig:convergence_c}
\end{figure}

\begin{figure}
    \centering
    \includegraphics[width=0.99\linewidth]{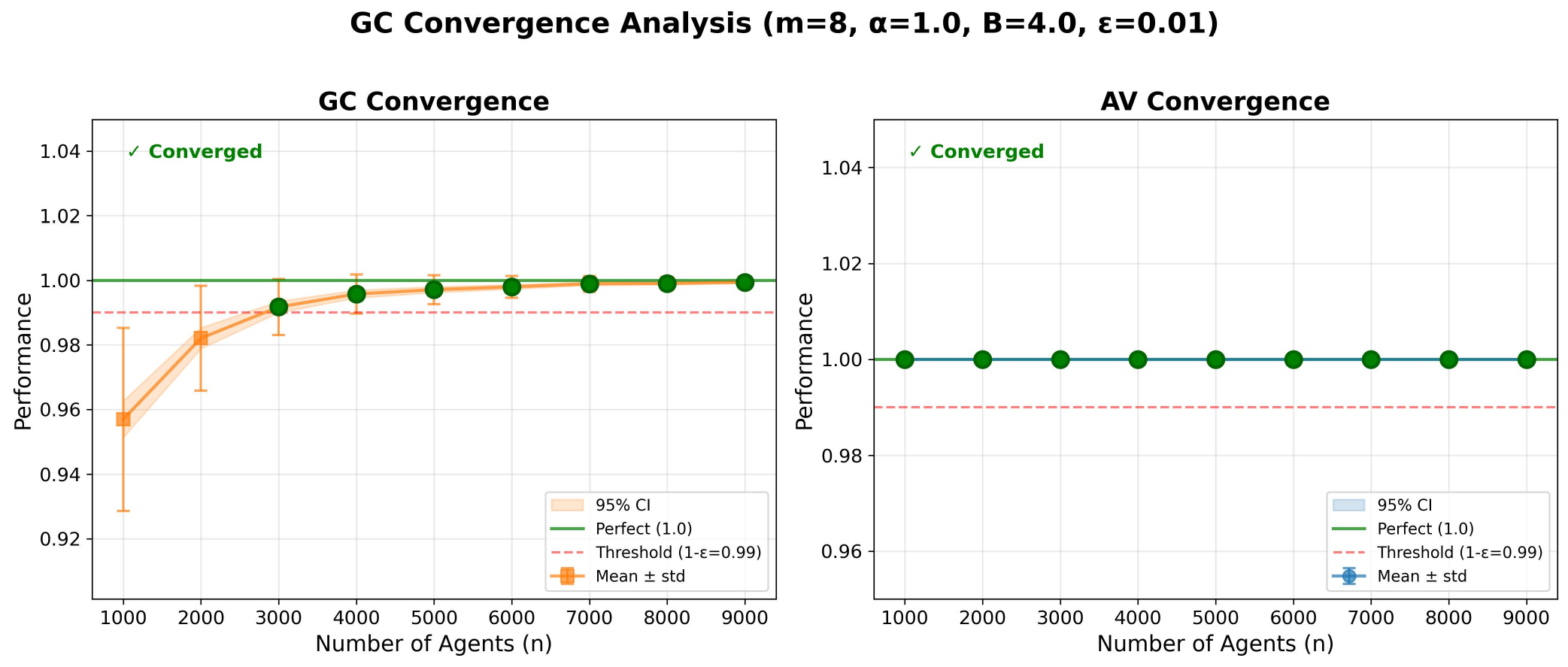}
    \caption{Convergence on $\gc$ in large $n$ (in comparison with $\av$), cost-proportional utility\label{fig:gc_c}}. 
    
\end{figure}

\end{document}